\newcommand{\prob}{\mathbb{P}}
\newcommand{\Prob}[1]{\prob\left(#1\right)}
\newcommand{\expec}{\mathbb{E}}
\newcommand{\Exp}[1]{\expec\left[#1\right]}
\newcommand{\Var}[1]{\textup{Var}\left(#1\right)}
\newcommand{\ind}[1]{\mathbbm{1}_{\left\{#1\right\}}}
\newcommand\abs[1]{\left|#1\right|}
\newcommand{\dd}{{\rm d}}
\newcommand{\thetao}[1]{\Theta\left(#1\right)}
\definecolor{color1vertex}{HTML}{fff0f5}%
\definecolor{color2vertex}{rgb}{0.6314,0.8549,0.7059}%
\definecolor{colorsqrt}{HTML}{ff3799}%
\definecolor{colornm}{HTML}{ffa3c8}%
\definecolor{colorn}{HTML}{4b0082}
\definecolor{colorn1tau}{HTML}{8b0088}
\tikzstyle{every node}=[circle,fill=black!25,minimum size=8pt,inner sep=0pt,draw=black!80]
\tikzstyle{S1}=[fill=colornm]
\tikzstyle{S2m}=[fill=colorn1tau]
\tikzstyle{S2}=[fill=colorn]
\tikzstyle{S3}=[fill=colorsqrt]
\tikzstyle{n1}=[fill=color1vertex]
\newcommand{\beginsupplementnote}{%
	\setcounter{table}{0}
	\renewcommand{\tablename}{Supplementary Table}%
	\setcounter{figure}{0}
	\renewcommand{\figurename}{Supplementary Figure}
	\setcounter{section}{0}
	\renewcommand{\thesection}{Supplementary Note \arabic{section}}
	\renewcommand{\thesubsection}{\arabic{section}.\arabic{subsection}}
}
\newtheorem{theorem}{Theorem}
\newtheorem{lemma}[theorem]{Lemma}
\begin{document}
\title{Variational principle for scale-free network motifs}
\author[1,*]{Clara Stegehuis}
\author[1]{Remco van der Hofstad}
\author[1]{Johan S.H. van Leeuwaarden}
\affil[1]{Department of Mathematics and Computer Science \\Eindhoven University of Technology}

\maketitle

\begin{abstract}
	For scale-free networks with degrees following a power law with an exponent $\tau\in(2,3)$, the structures of motifs (small subgraphs) are not yet well understood. We introduce a method designed to identify the dominant structure of any given motif as the solution of an optimization problem. The unique optimizer describes the degrees of the vertices that together span the most likely motif, resulting in explicit asymptotic formulas for the motif count and its fluctuations. We then classify all motifs into two categories: motifs with small and large fluctuations. 
\end{abstract}
\section*{Introduction}\label{sec:intr}
Many real-world networks, like communication networks, social networks and biological networks, were found to be {\it scale free} with power-law degree distributions and infinite variance in the large-network limit~\cite{albert1999,faloutsos1999,jeong2000,vazquez2002,clauset2009}. The heavy tail of the power law comes with \emph{hubs}, vertices of extremely high degree. Hubs create ultra-small distances, ultra-fast information spreading and resilience against random attacks, while the average node degree is small. Attested by real-world data and explained by mathematical models, the consequences of power-law connectivity patterns now belong to the foundations of network science. 

Scale-free networks can be studied using random network models that connect vertices through edges forming power-law degree distributions. Connectivity patterns beyond edges are commonly described in terms of small subgraphs called motifs (or graphlets). There is increasing interest in algorithmic methods to count motifs~\cite{kashtan2004,grochow2007,omidi2009,schreiber2005,wernicke2006}, 
or the relation between motifs and network functions, like the spread of epidemics~\cite{house2009,noel2013,zhang2014,ritchie2015,ritchie2016,stegehuis2016}. Motifs can describe the tendency for clustering and other forms of network organization~\cite{ravasz2003,benson2016,tsourakakis2017}. 

Much existing work focuses on the classification of complex networks in terms of motif counts or frequencies. The occurrence of specific motifs such as wedges, triangles and cliques have been proven important for understanding real-world networks. 
Indeed, motif counts might vary considerably across different networks~\cite{milo2002,wuchty2003,milo2004} and any given network has a set of statistically significant motifs. Statistical relevance can be expressed by comparing real-world networks to mathematically tractable models. A popular statistic takes the motif count, subtracts the expected number of motifs in the mathematical model, and divides by the standard deviation in the mathematical model~\cite{milo2004,onnela2005,gao2017}. Such a standardized test statistic predicts whether a motif is overrepresented in comparison to some mathematical model. This comparison, however, filters out the effect of the degree sequence, the network size and possibly other features that are needed to understand the structure and frequency of motifs.

With the goal to explain the occurrence of motifs beyond counting, we develop a method to identify, for any given motif, the composition that dominates the motif count as the solution of an optimization problem. The unique optimizer describes the degrees of the vertices that together span the most likely motif, as well as predicts the leading asymptotic order for the motif count in the large-network limit. Our method can potentially be applied to various random network models, but is developed first for the hidden-variable model~\cite{goh2001,boguna2003,park2004,norros2006,chung2002,soederberg2002,caldarelli2002}, a random network model that generates graphs with power-law degrees.
Given $n$ vertices, the hidden-variable model associates to each node a hidden variable $h$ drawn from the probability density
\begin{equation}\label{eq:rhoh}
	\rho(h)=Ch^{-\tau}
\end{equation}
for some constant $C$ and $h\geq h_{\min}$. Next, conditionally on all the hidden variables, each pair of vertices is joined independently with probability
\begin{equation}\label{eq:conprob}
p(h,h')=\min(hh'/(\mu n),1).
\end{equation}
with $h$ and $h'$ the hidden variables associated with the two vertices, and $\mu$ the mean of the hidden variables. For any given motif,  we now seek for its most likely structure. The probability $P(H)$ of creating motif $H$ on $k$ uniformly chosen vertices can be written as
\begin{equation}\label{eq:pHpresent}
P(H)=\int_{\boldsymbol{h}}\Prob{H\text{ on }h_1,\ldots,h_k}\Prob{h_1,\ldots,h_k}\dd \boldsymbol{h},
\end{equation}
where the integral is over all possible hidden-variable sequences on $k$ vertices, with 
$\boldsymbol{h}=(h_1,\ldots,h_k)$ and $\Prob{h_1,\ldots,h_k}$ the density that a randomly chosen set of $k$ hidden variables is proportional to $h_1,\ldots,h_k$. The degree of a node is asymptotically Poisson distributed with its hidden variable as mean~\cite{stegehuis2017}, so~\eqref{eq:pHpresent} can be interpreted as a sum over all possible degree sequences. Therefore, our optimization method then needs to settle the following trade-off, inherently present in power-law networks: On the one hand, large-degree vertices contribute substantially to the number of motifs, because they are highly connected, and therefore participate in many motifs. On the other hand, large-degree vertices are by definition rare. This should be contrasted with lower-degree vertices that occur more frequently, but take part in fewer connections and hence fewer motifs. Therefore, our method give rise to a certain `variational principle', because it finds the selection of vertices with specific degrees that together `optimize' this trade-off and hence maximize the expected number of such motifs. 

We leverage the optimization method in two ways. First, we derive sharp expressions for the motif counts in the large-network limit in terms of the network size and the power-law exponent. Second, we use the method to identify the fluctuations of motif counts. 

We present two versions of the method that we call free and typical variation. Free variation corresponds to computing the average number of motifs over many samples of the random network model. Typical variation corresponds to the number of motifs in one single instance of the random graph model. Remarkably, for $\tau\in(2,3)$ these can be rather different. After that, we apply the method to study motif count fluctuations. Finally, we provide a case study where we investigate the presence of motifs in some real-world network data.

\section*{Results}
\subsection*{Free variation}\label{sec:freevar}
We first show that only hidden-variable sequences $\boldsymbol{h}$ with hidden variables of specific orders give the largest contribution to~\eqref{eq:pHpresent}.
Write the hidden variables as $h_i\propto{n^{\alpha_i}}$ for some $\alpha_i\geq 0$ for all $i$.
Then, using~\eqref{eq:conprob}, the probability that motif $H$ exists on vertices with hidden variables $\boldsymbol{h} = (n^{\alpha_1},\ldots,n^{\alpha_k})$ satisfies
\begin{equation}\label{eq:phsubx}
\Prob{H\text{ on }{\boldsymbol{h}}}\propto\prod_{(v_i,v_j)\in E_H:\alpha_{i}+\alpha_{j}<1}n^{\alpha_{i}+\alpha_{j}-1}.
\end{equation}

The hidden variables are an i.i.d.\ sample from a power-law distribution, so that the probability that $k$ uniformly chosen hidden variables satisfy $(h_1,\ldots,h_k)\propto (n^{\alpha_1},\ldots,n^{\alpha_k})$ is of the order $n^{(1-\tau)\sum_i\alpha_i}$ (see Supplementary Material 2). Taking the product of this with~\eqref{eq:phsubx} shows that the maximum contribution to the summand in~\eqref{eq:pHpresent} is obtained for those $\alpha_i\geq 0$ that maximize the exponent 
\begin{equation}\label{eq:maxalphx}
(1-\tau)\sum_{i}\alpha_i +\sum_{(i,j)\in E_H: \alpha_i+\alpha_j<1}\left(\alpha_i+\alpha_j-1\right),
\end{equation} 
which is a piecewise-linear function in $\alpha$. In Supplementary Material 2, we show that the maximizer of this optimization problem satisfies $\alpha_i\in\{0,\tfrac{1}{2},1\}$ for all $i$. 
Thus, the maximal value of~\eqref{eq:maxalphx} is attained by partitioning the vertices of $H$ into the sets $S_1,S_2,S_3$ such that vertices in $S_1$ have $\alpha_i=0$, vertices in $S_2$ have $\alpha_i=1$ and vertices in $S_3$ have $\alpha_i=\tfrac{1}{2}$. Then, the edges with $\alpha_i+\alpha_j<1$ are edges inside $S_1$ and edges between $S_1$ and $S_3$. If we denote the number of edges inside $S_1$ by $E_{S_1}$ and the number of edges between $S_1$ and $S_3$ by $E_{S_1,S_3}$, then maximizing~\eqref{eq:maxalphx} is equivalent to maximizing
	\begin{equation}\label{eq:maxeqx}
	B_f(H)=\max_{\mathcal{P}}\left[\abs{S_1}-\abs{S_2}-\frac{2E_{S_1}+E_{S_1,S_3}}{\tau-1}\right]
	\end{equation}
	over all partitions $\mathcal{P}$ of the vertices of $H$ into $S_1,S_2,S_3$. This gives the following theorem (a more elaborate version is proven in Supplementary Material 2):

\begin{theorem}[Expected motif count]\label{thm:expmotifsx}
	Let $H$ be a motif on $k$ vertices such that the solution to~\eqref{eq:maxeqx} is unique. 
	As $n\to\infty$, the expected number of motifs $H$ grows as
	\begin{equation}
\Exp{N(H)} = n^kP(H) \propto n^{\frac{3-\tau}{2}k+\frac{\tau-1}{2}B_f(H)},
	\end{equation}
	and is thus fully determined by the partition $\mathcal{P}^*$ that optimizes \eqref{eq:maxeqx}. 
\end{theorem}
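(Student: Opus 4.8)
The plan is to evaluate the integral representation~\eqref{eq:pHpresent} for $P(H)$ by a Laplace-type argument in the large parameter $\ln n$, and then to multiply by $n^k$. First I would substitute $h_i=n^{\alpha_i}$ with $\alpha_i\ge 0$ (the constraint $h_i\ge h_{\min}$ becoming $\alpha_i\ge 0$). Under this substitution the $n^{\alpha_i}$ Jacobian combines with the power-law density $\rho$ to produce exactly the factor $n^{(1-\tau)\sum_i\alpha_i}$ quoted above for the density of the orders, while the existence probability contributes the factor~\eqref{eq:phsubx}. Multiplying the two, the integrand is $n^{f(\boldsymbol\alpha)}$ up to slowly-varying factors, where $f(\boldsymbol\alpha)$ is precisely the piecewise-linear exponent in~\eqref{eq:maxalphx}, so that
\begin{equation*}
P(H)\;\propto\;\int_{[0,\infty)^k} n^{f(\boldsymbol\alpha)}\,\dd\boldsymbol\alpha .
\end{equation*}
Writing $n^{f(\boldsymbol\alpha)}=\me^{(\ln n)\,f(\boldsymbol\alpha)}$ and sending $\ln n\to\infty$, the Laplace principle gives $P(H)\propto n^{\max_{\boldsymbol\alpha}f(\boldsymbol\alpha)}$, the integral concentrating near the maximizer $\boldsymbol\alpha^\ast$.

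Next I would convert the continuous maximization of $f$ into the partition problem~\eqref{eq:maxeqx}. Since $f$ is piecewise linear, its maximum sits at a vertex of the feasible region, and by the result of Supplementary Material~2 this vertex has all coordinates in $\{0,\tfrac12,1\}$. Labelling the corresponding vertex sets $S_1,S_2,S_3$ (for $\alpha_i=0,1,\tfrac12$), the only edges with $\alpha_i+\alpha_j<1$ are those inside $S_1$ (sum $0$) and those between $S_1$ and $S_3$ (sum $\tfrac12$); edges inside $S_3$, between $S_1$ and $S_2$, and all others have $\alpha_i+\alpha_j\ge 1$ and drop out. Evaluating $f(\boldsymbol\alpha^\ast)$ with $\sum_i\alpha_i^\ast=\abs{S_2}+\tfrac12\abs{S_3}$ and edge contributions $-E_{S_1}-\tfrac12 E_{S_1,S_3}$ then identifies $\max_{\boldsymbol\alpha}f$ with the partition functional of~\eqref{eq:maxeqx}.

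It remains to verify the algebraic identity $k+\max_{\boldsymbol\alpha}f(\boldsymbol\alpha)=\tfrac{3-\tau}{2}k+\tfrac{\tau-1}{2}B_f(H)$. Using $k=\abs{S_1}+\abs{S_2}+\abs{S_3}$, both sides expand to $\abs{S_1}+(2-\tau)\abs{S_2}+\tfrac{3-\tau}{2}\abs{S_3}-E_{S_1}-\tfrac12 E_{S_1,S_3}$, where the $\abs{S_1}$ coefficient is $\tfrac{3-\tau}{2}+\tfrac{\tau-1}{2}=1$ and the $\abs{S_2}$ coefficient is $\tfrac{3-\tau}{2}-\tfrac{\tau-1}{2}=2-\tau$. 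Multiplying $P(H)$ by $n^k$ and taking the maximum over partitions $\mathcal P$ then yields $\Exp{N(H)}\propto n^{\frac{3-\tau}{2}k+\frac{\tau-1}{2}B_f(H)}$, with the optimal partition $\mathcal P^\ast$ fixing the exponent.

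I expect the main difficulty to lie in making the Laplace step rigorous rather than heuristic. One must control the $\min(\cdot,1)$ truncation in~\eqref{eq:conprob} (which affects only edges with $\alpha_i+\alpha_j\ge 1$ and so contributes constants), the boundary effects at $\alpha_i=0$ and the cutoff $h_{\min}$, and the poly-logarithmic prefactors coming from the Jacobian and from the width of the integral near $\boldsymbol\alpha^\ast$, all of which are slowly varying and absorbed into the $\propto$. The uniqueness hypothesis is exactly what guarantees a single dominant vertex, so that no competing partition contributes at the same order; this is where care is needed, and where the quantitative form of Supplementary Material~2 does the real work.
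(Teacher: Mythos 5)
Your route is, at its core, the same as the paper's: substitute $h_i\propto n^{\alpha_i}$, combine the density factor $n^{(1-\tau)\sum_i\alpha_i}$ with the existence probability \eqref{eq:phsubx}, maximize the piecewise-linear exponent \eqref{eq:maxalphx}, and translate the half-integral optimizer into the partition functional \eqref{eq:maxeqx}; your closing algebra (both sides equal $\abs{S_1}+(2-\tau)\abs{S_2}+\tfrac{3-\tau}{2}\abs{S_3}-E_{S_1}-\tfrac12 E_{S_1,S_3}$) is exactly the paper's computation, and your Laplace integral over $\boldsymbol{\alpha}$ is a continuous rephrasing of the paper's decomposition of $\Exp{N(H)}$ into shell counts $N(H,\boldsymbol{\alpha},\varepsilon)$. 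The genuine gap is the half-integrality statement itself. You invoke ``the result of Supplementary Material 2'' that the maximizer has all coordinates in $\{0,\tfrac12,1\}$ --- but that result (Lemma~\ref{lem:maxmotif}) \emph{is} the technical core of the proof you were asked to supply; the paper proves it by substituting $\beta_i=\alpha_i-\tfrac12$ and running a perturbation argument on the coordinates of minimal nonzero $\abs{\beta_i}$. Your fallback justification --- ``since $f$ is piecewise linear, its maximum sits at a vertex of the feasible region'' --- fails as stated: the feasible region $[0,\infty)^k$ has a single vertex, the origin. What you actually need is that the unique maximizer of the concave function $f$ is a vertex of the polyhedral subdivision cut out by the hyperplanes $\alpha_i=0$ and $\alpha_i+\alpha_j=1$, plus a proof that every such vertex is half-integral, plus the equivalence (also part of Lemma~\ref{lem:maxmotif}) between uniqueness for \eqref{eq:maxeqx} and uniqueness for the continuous problem --- your Laplace step silently uses the latter form of uniqueness, while the theorem only hypothesizes the former.

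A second, smaller flaw: polylogarithmic prefactors cannot be ``absorbed into the $\propto$''. The theorem's conclusion is a statement up to constants, and genuine $\log n$ corrections do occur when the optimizer is non-unique (e.g.\ the diamond, Fig.~\ref{fig:graphlet4exp}d, scales as $n^{6-2\tau}\log n$). Your argument survives only because of an exact cancellation: the change of variables contributes $(\log n)^k$ through the Jacobian $\dd h_i = n^{\alpha_i}\log n\,\dd\alpha_i$, while the Laplace peak around a unique maximizer (where $f$ decays at least linearly in every direction) has width $\Theta(1/\log n)$ per coordinate, contributing $(\log n)^{-k}$. Making this cancellation explicit is not optional bookkeeping --- it is precisely where the uniqueness hypothesis does its quantitative work, and it is what breaks down (producing the logarithm) when the maximizer is a segment rather than a point.
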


Theorem~\ref{thm:expmotifsx} implies that the expected number of motifs is dominated by motifs on vertices with hidden variables (and thus degrees) of specific orders of magnitude: constant degrees, degrees proportional to $\sqrt{n}$ or degrees proportional to $n$. 
Figure~\ref{fig:graphlet4exp} and Figure~\ref{fig:motif5exp} show the partitions $\mathcal{P}^*$ that dominate the expected number of motifs on three, four and five vertices. 

\begin{figure}[tb]
	\begin{subfigure}{0.17\linewidth}
		\centering
		\begin{tikzpicture}
		\tikzstyle{edge} = [draw,thick,-]
		\node[S3] (a) at (0,0) {};
		\node[S3] (b) at (1,0) {};
		\node[S3] (c) at (1,1) {};
		\draw[edge] (a)--(b);
		\draw[edge] (c)--(b);
		\draw[edge] (a)--(c);
		\end{tikzpicture}	
		\caption{$n^{\frac{3}{2}(3-2\tau)}$}
		\label{var:fig:K3exp}
	\end{subfigure}
	\begin{subfigure}{0.17\linewidth}
		\centering
		\begin{tikzpicture}
		\tikzstyle{edge} = [draw,thick,-]
		\node[n1] (a) at (0,0) {};
		\node[n1] (b) at (1,1) {};
		\node[S2] (c) at (0,1) {};
		\draw[edge] (a)--(c);
		\draw[edge] (c)--(b);
		\end{tikzpicture}	
		\caption{$n^{4-\tau}$}
		\label{var:fig:3wedge}
	\end{subfigure}
	\begin{subfigure}{0.17\linewidth}
		\centering
		\begin{tikzpicture}
		\tikzstyle{edge} = [draw,thick,-]
		\node[S3] (a) at (0,0) {};
		\node[S3] (b) at (1,0) {};
		\node[S3] (c) at (0,1) {};
		\node[S3] (d) at (1,1) {};
		\draw[edge] (a)--(b);
		\draw[edge] (c)--(b);
		\draw[edge] (d)--(b);
		\draw[edge] (a)--(c);
		\draw[edge] (a)--(d);
		\draw[edge] (c)--(d);
		\end{tikzpicture}	
		\caption{$n^{6-2\tau}$}
		\label{var:fig:K4exp}
	\end{subfigure}
	\begin{subfigure}{0.17\linewidth}
		\centering
		\begin{tikzpicture}
		\tikzstyle{edge} = [draw,thick,-]
		\node (a) at (0,0) {};
		\node (b) at (1,0) {};
		\node (c) at (0,1) {};
		\node (d) at (1,1) {};
		\draw[edge] (a)--(b);
		\draw[edge] (c)--(b);
		\draw[edge] (d)--(b);
		\draw[edge] (a)--(c);
		\draw[edge] (c)--(d);
		\end{tikzpicture}	
		\caption{$n^{6-2\tau}\log(n)$}
		\label{var:fig:squareextraexp}
	\end{subfigure}
	
	\begin{subfigure}{0.17\linewidth}
		\centering
		\begin{tikzpicture}
		\tikzstyle{edge} = [draw,thick,-]
		\node[] (a) at (0,0) {};
		\node[] (b) at (1,0) {};
		\node[] (c) at (0,1) {};
		\node[] (d) at (1,1) {};
		\draw[edge] (a)--(b);
		\draw[edge] (d)--(b);
		\draw[edge] (a)--(c);
		\draw[edge] (c)--(d);
		\end{tikzpicture}	
		\caption{$n^{6-2\tau}\log(n)$}
		\label{var:fig:squareexp}
	\end{subfigure}
	\begin{subfigure}{0.17\linewidth}
		\centering
		\begin{tikzpicture}
		\tikzstyle{edge} = [draw,thick,-]
		\node[n1] (a) at (0,0) {};
		\node[n1] (b) at (1,0) {};
		\node[S2] (c) at (0,1) {};
		\node[n1] (d) at (1,1) {};
		\draw[edge] (c)--(b);
		\draw[edge] (d)--(b);
		\draw[edge] (a)--(c);
		\draw[edge] (c)--(d);
		\end{tikzpicture}	
		\caption{$n^{4-\tau}$}
		\label{var:fig:pawexp}
	\end{subfigure}
	\begin{subfigure}{0.17\linewidth}
		\centering
		\begin{tikzpicture}
		\tikzstyle{edge} = [draw,thick,-]
		\node[n1] (a) at (0,0) {};
		\node[n1] (b) at (1,0) {};
		\node[S2] (c) at (0,1) {};
		\node[n1] (d) at (1,1) {};
		\draw[edge] (c)--(b);
		\draw[edge] (a)--(c);
		\draw[edge] (c)--(d);
		\end{tikzpicture}	
		\caption{$n^{4-\tau}$}
		\label{var:fig:wedge4exp}
	\end{subfigure}
	\begin{subfigure}{0.17\linewidth}
		\centering
		\begin{tikzpicture}
		\tikzstyle{edge} = [draw,thick,-]
		\node[n1] (a) at (0,0) {};
		\node[n1] (b) at (1,0) {};
		\node (c) at (0,1) {};
		\node (d) at (1,1) {};
		\draw[edge] (d)--(b);
		\draw[edge] (a)--(c);
		\draw[edge] (c)--(d);
		\end{tikzpicture}	
		\caption{$n^{4-\tau}\log(n)$}
		\label{var:fig:4pathexp}
	\end{subfigure}
	\hspace{0.5cm}
	\begin{subfigure}{0.2\linewidth}
		\begin{tikzpicture}
		\node[S2,label={[label distance=0.05cm]0:$n$}] (a) at (0,1.5) {};
		\node[S3,label={[label distance=0cm]0:$\sqrt{n}$}] (b) at (0,1) {};
		\node[n1,label={[label distance=0.05cm]0:$1$}] (c) at (0,0.5) {};
		\node[label={[label distance=0.05cm]0:non-unique}] (d) at (0,0) {};
		\end{tikzpicture}
	\end{subfigure}
	\caption{Scaling of the expected number of motifs on 3 and 4 vertices in $n$, where the vertex color indicates the dominant vertex degree. Vertices where the optimizer is not unique are gray.}
	\label{fig:graphlet4exp}
\end{figure}
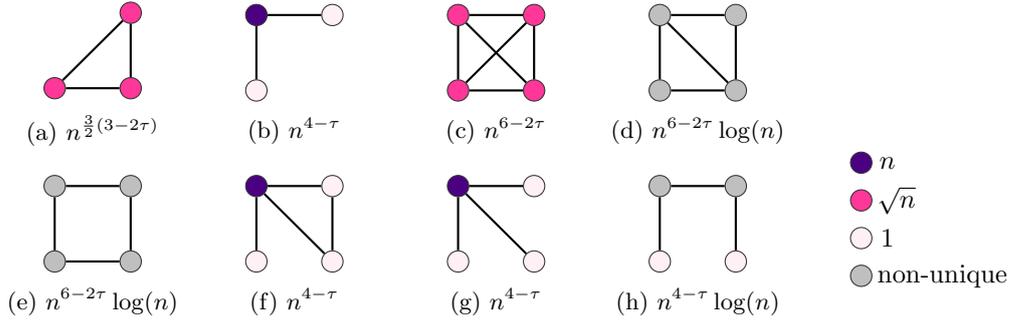
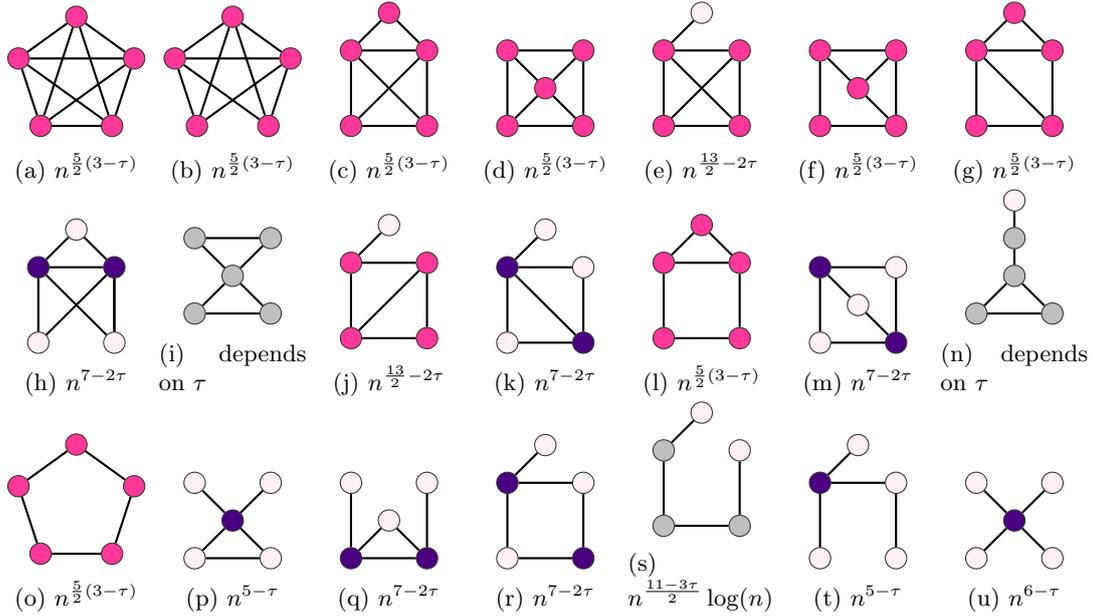
\begin{figure}[tb]
	\centering
	\begin{subfigure}[b]{0.13\linewidth}
		\centering
		\begin{tikzpicture}
		\tikzstyle{edge} = [draw,thick,-]
		\node[S3] (a) at (90:0.8) {};
		\node[S3] (b) at (162:0.8) {};
		\node[S3] (c) at (234:0.8) {};
		\node[S3] (d) at (306:0.8) {};
		\node[S3] (e) at (378:0.8) {};
		\draw[edge] (a)--(b);
		\draw[edge] (c)--(b);
		\draw[edge] (d)--(b);
		\draw[edge] (a)--(c);
		\draw[edge] (a)--(d);
		\draw[edge] (c)--(e);
		\draw[edge] (e)--(d);
		\draw[edge] (b)--(e);
		\draw[edge] (a)--(e);
		\draw[edge] (c)--(d);
		\end{tikzpicture}	
		\caption{$n^{\frac{5}{2}(3-\tau)}$}
	\end{subfigure}
	\begin{subfigure}[b]{0.13\linewidth}
		\centering
		\begin{tikzpicture}
		\tikzstyle{edge} = [draw,thick,-]
		\node[S3] (a) at (90:0.8) {};
		\node[S3] (b) at (162:0.8) {};
		\node[S3] (c) at (234:0.8) {};
		\node[S3] (d) at (306:0.8) {};
		\node[S3] (e) at (378:0.8) {};
		\draw[edge] (a)--(b);
		\draw[edge] (c)--(b);
		\draw[edge] (d)--(b);
		\draw[edge] (a)--(c);
		\draw[edge] (a)--(d);
		\draw[edge] (c)--(e);
		\draw[edge] (e)--(d);
		\draw[edge] (b)--(e);
		\draw[edge] (a)--(e);
		\end{tikzpicture}	
		\caption{$n^{\frac{5}{2}(3-\tau)}$}
	\end{subfigure}
	\begin{subfigure}[b]{0.13\linewidth}
		\centering
		\begin{tikzpicture}
		\tikzstyle{edge} = [draw,thick,-]
		\node[S3] (a) at (0,0) {};
		\node[S3] (b) at (1,0) {};
		\node[S3] (c) at (0,1) {};
		\node[S3] (d) at (1,1) {};
		\node[S3] (e) at (0.5,1.5) {};
		\draw[edge] (a)--(b);
		\draw[edge] (c)--(b);
		\draw[edge] (d)--(b);
		\draw[edge] (a)--(c);
		\draw[edge] (a)--(d);
		\draw[edge] (c)--(e);
		\draw[edge] (d)--(e);
		\draw[edge] (c)--(d);
		\end{tikzpicture}	
		\caption{$n^{\frac{5}{2}(3-\tau)}$}
	\end{subfigure}
	\begin{subfigure}[b]{0.13\linewidth}
		\centering
		\begin{tikzpicture}
		\tikzstyle{edge} = [draw,thick,-]
		\node[S3] (a) at (0,0) {};
		\node[S3] (b) at (1,0) {};
		\node[S3] (c) at (0,1) {};
		\node[S3] (d) at (1,1) {};
		\node[S3] (e) at (0.5,0.5) {};
		\draw[edge] (a)--(b);
		\draw[edge] (c)--(a);
		\draw[edge] (b)--(e);
		\draw[edge] (a)--(e);
		\draw[edge] (b)--(d);
		\draw[edge] (c)--(e);
		\draw[edge] (d)--(e);
		\draw[edge] (c)--(d);
		\end{tikzpicture}	
		\caption{$n^{\frac{5}{2}(3-\tau)}$}
	\end{subfigure}
	\begin{subfigure}[b]{0.13\linewidth}
		\centering
		\begin{tikzpicture}
		\tikzstyle{edge} = [draw,thick,-]
		\node[S3] (a) at (0,0) {};
		\node[S3] (b) at (1,0) {};
		\node[S3] (c) at (0,1) {};
		\node[S3] (d) at (1,1) {};
		\node[n1] (e) at (0.5,1.5) {};
		\draw[edge] (a)--(b);
		\draw[edge] (c)--(b);
		\draw[edge] (d)--(b);
		\draw[edge] (a)--(c);
		\draw[edge] (a)--(d);
		\draw[edge] (c)--(e);
		\draw[edge] (c)--(d);
		\end{tikzpicture}	
		\caption{$n^{\frac{13}{2}-2\tau}$}
	\end{subfigure}
	\begin{subfigure}[b]{0.13\linewidth}
		\centering
		\begin{tikzpicture}
		\tikzstyle{edge} = [draw,thick,-]
		\node[S3] (a) at (0,0) {};
		\node[S3] (b) at (1,0) {};
		\node[S3] (c) at (0,1) {};
		\node[S3] (d) at (1,1) {};
		\node[S3] (e) at (0.5,0.5) {};
		\draw[edge] (a)--(b);
		\draw[edge] (c)--(a);
		\draw[edge] (b)--(e);
		\draw[edge] (b)--(d);
		\draw[edge] (c)--(e);
		\draw[edge] (c)--(d);
		\draw[edge] (e)--(d);
		\end{tikzpicture}	
		\caption{$n^{\frac{5}{2}(3-\tau)}$}
	\end{subfigure}		
	\begin{subfigure}[b]{0.13\linewidth}
		\centering
		\begin{tikzpicture}
		\tikzstyle{edge} = [draw,thick,-]
		\node[S3] (a) at (0,0) {};
		\node[S3] (b) at (1,0) {};
		\node[S3] (c) at (0,1) {};
		\node[S3] (d) at (1,1) {};
		\node[S3] (e) at (0.5,1.5) {};
		\draw[edge] (a)--(b);
		\draw[edge] (c)--(b);
		\draw[edge] (d)--(b);
		\draw[edge] (a)--(c);
		\draw[edge] (c)--(e);
		\draw[edge] (d)--(e);
		\draw[edge] (c)--(d);
		\end{tikzpicture}	
		\caption{$n^{\frac{5}{2}(3-\tau)}$}
	\end{subfigure}

	\begin{subfigure}[b]{0.13\linewidth}
		\centering
		\begin{tikzpicture}
		\tikzstyle{edge} = [draw,thick,-]
		\node[n1] (a) at (0,0) {};
		\node[n1] (b) at (1,0) {};
		\node[S2] (c) at (0,1) {};
		\node[S2] (d) at (1,1) {};
		\node[n1] (e) at (0.5,1.5) {};
		\draw[edge] (d)--(b);
		\draw[edge] (c)--(b);
		\draw[edge] (d)--(b);
		\draw[edge] (a)--(c);
		\draw[edge] (a)--(d);
		\draw[edge] (c)--(e);
		\draw[edge] (d)--(e);
		\draw[edge] (c)--(d);
		\end{tikzpicture}	
		\caption{$n^{7-2\tau}$}
		\label{var:fig:K23}
	\end{subfigure}
	\begin{subfigure}[b]{0.13\linewidth}
		\centering
		\begin{tikzpicture}
		\tikzstyle{edge} = [draw,thick,-]
		\node (a) at (0,0) {};
		\node (b) at (1,0) {};
		\node (c) at (0,1) {};
		\node (d) at (1,1) {};
		\node (e) at (0.5,0.5) {};
		\draw[edge] (a)--(b);
		\draw[edge] (e)--(a);
		\draw[edge] (b)--(e);
		\draw[edge] (e)--(d);
		\draw[edge] (c)--(e);
		\draw[edge] (c)--(d);
		\end{tikzpicture}	
		\caption{depends on $\tau$}
		\label{var:fig:bowtie}
	\end{subfigure}
	\begin{subfigure}[b]{0.13\linewidth}
		\centering
		\begin{tikzpicture}
		\tikzstyle{edge} = [draw,thick,-]
		\node[S3] (a) at (0,0) {};
		\node[S3] (b) at (1,0) {};
		\node[S3] (c) at (0,1) {};
		\node[S3] (d) at (1,1) {};
		\node[n1] (e) at (0.5,1.5) {};
		\draw[edge] (a)--(b);
		\draw[edge] (d)--(b);
		\draw[edge] (a)--(c);
		\draw[edge] (a)--(d);
		\draw[edge] (c)--(e);
		\draw[edge] (c)--(d);
		\end{tikzpicture}	
		\caption{$n^{\frac{13}{2}-2\tau}$}
	\end{subfigure}
	\begin{subfigure}[b]{0.13\linewidth}
		\centering
		\begin{tikzpicture}
		\tikzstyle{edge} = [draw,thick,-]
		\node[n1] (a) at (0,0) {};
		\node[S2] (b) at (1,0) {};
		\node[S2] (c) at (0,1) {};
		\node[n1] (d) at (1,1) {};
		\node[n1] (e) at (0.5,1.5) {};
		\draw[edge] (a)--(b);
		\draw[edge] (d)--(b);
		\draw[edge] (a)--(c);
		\draw[edge] (b)--(c);
		\draw[edge] (c)--(e);
		\draw[edge] (c)--(d);
		\end{tikzpicture}	
		\caption{$n^{7-2\tau}$}
	\end{subfigure}
	\begin{subfigure}[b]{0.13\linewidth}
		\centering
		\begin{tikzpicture}
		\tikzstyle{edge} = [draw,thick,-]
		\node[S3] (a) at (0,0) {};
		\node[S3] (b) at (1,0) {};
		\node[S3] (c) at (0,1) {};
		\node[S3] (d) at (1,1) {};
		\node[S3] (e) at (0.5,1.5) {};
		\draw[edge] (a)--(b);
		\draw[edge] (d)--(b);
		\draw[edge] (a)--(c);
		\draw[edge] (e)--(d);
		\draw[edge] (c)--(e);
		\draw[edge] (c)--(d);
		\end{tikzpicture}	
		\caption{$n^{\frac{5}{2}(3-\tau)}$}
	\end{subfigure}	
	\begin{subfigure}[b]{0.13\linewidth}
		\centering
		\begin{tikzpicture}
		\tikzstyle{edge} = [draw,thick,-]
		\node[n1] (a) at (0,0) {};
		\node[S2] (b) at (1,0) {};
		\node[S2] (c) at (0,1) {};
		\node[n1] (d) at (1,1) {};
		\node[n1] (e) at (0.5,0.5) {};
		\draw[edge] (a)--(b);
		\draw[edge] (c)--(a);
		\draw[edge] (b)--(e);
		\draw[edge] (b)--(d);
		\draw[edge] (c)--(e);
		\draw[edge] (c)--(d);
		\end{tikzpicture}	
		\caption{$n^{7-2\tau}$}
		\label{var:fig:m5dom}
	\end{subfigure}
	\begin{subfigure}[b]{0.13\linewidth}
		\centering
		\begin{tikzpicture}
		\tikzstyle{edge} = [draw,thick,-]
		\node (a) at (0,0) {};
		\node (b) at (1,0) {};
		\node (c) at (0.5,1) {};
		\node[n1] (d) at (0.5,1.5) {};
		\node (e) at (0.5,0.5) {};
		\draw[edge] (a)--(b);
		\draw[edge] (e)--(a);
		\draw[edge] (b)--(e);
		\draw[edge] (e)--(c);
		\draw[edge] (c)--(d);
		\end{tikzpicture}	
		\caption{depends on $\tau$}
	\end{subfigure}	

	\begin{subfigure}[b]{0.13\linewidth}
		\centering
		\begin{tikzpicture}
		\tikzstyle{edge} = [draw,thick,-]
		\node[S3] (a) at (90:0.8) {};
		\node[S3] (b) at (162:0.8) {};
		\node[S3] (c) at (234:0.8) {};
		\node[S3] (d) at (306:0.8) {};
		\node[S3] (e) at (378:0.8) {};
		\draw[edge] (a)--(b);
		\draw[edge] (c)--(b);
		\draw[edge] (d)--(c);
		\draw[edge] (a)--(e);
		\draw[edge] (d)--(e);
		\end{tikzpicture}	
		\caption{$n^{\frac{5}{2}(3-\tau)}$}
	\end{subfigure}
	\begin{subfigure}[b]{0.13\linewidth}
		\centering
		\begin{tikzpicture}
		\tikzstyle{edge} = [draw,thick,-]
		\node[n1] (a) at (0,0) {};
		\node[n1] (b) at (1,0) {};
		\node[n1] (c) at (0,1) {};
		\node[n1] (d) at (1,1) {};
		\node[S2] (e) at (0.5,0.5) {};
		\draw[edge] (a)--(b);
		\draw[edge] (e)--(a);
		\draw[edge] (b)--(e);
		\draw[edge] (e)--(d);
		\draw[edge] (c)--(e);
		\end{tikzpicture}	
		\caption{$n^{5-\tau}$}
	\end{subfigure}
	\begin{subfigure}[b]{0.13\linewidth}
		\centering
		\begin{tikzpicture}
		\tikzstyle{edge} = [draw,thick,-]
		\node[S2] (a) at (0,0) {};
		\node[S2] (b) at (1,0) {};
		\node[n1] (c) at (0,1) {};
		\node[n1] (d) at (1,1) {};
		\node[n1] (e) at (0.5,0.5) {};
		\draw[edge] (a)--(b);
		\draw[edge] (e)--(a);
		\draw[edge] (b)--(e);
		\draw[edge] (b)--(d);
		\draw[edge] (c)--(a);
		\end{tikzpicture}	
		\caption{$n^{7-2\tau}$}
	\end{subfigure}
	\begin{subfigure}[b]{0.13\linewidth}
		\centering
		\begin{tikzpicture}
		\tikzstyle{edge} = [draw,thick,-]
		\node[n1] (a) at (0,0) {};
		\node[S2] (b) at (1,0) {};
		\node[S2] (c) at (0,1) {};
		\node[n1] (d) at (1,1) {};
		\node[n1] (e) at (0.5,1.5) {};
		\draw[edge] (a)--(b);
		\draw[edge] (d)--(b);
		\draw[edge] (a)--(c);
		\draw[edge] (c)--(e);
		\draw[edge] (c)--(d);
		\end{tikzpicture}	
		\caption{$n^{7-2\tau}$}
	\end{subfigure}	
	\begin{subfigure}[b]{0.13\linewidth}
		\centering
		\begin{tikzpicture}
		\tikzstyle{edge} = [draw,thick,-]
		\node (a) at (0,0) {};
		\node (b) at (1,0) {};
		\node (c) at (0,1) {};
		\node[n1] (d) at (1,1) {};
		\node[n1] (e) at (0.5,1.5) {};
		\draw[edge] (a)--(b);
		\draw[edge] (d)--(b);
		\draw[edge] (a)--(c);
		\draw[edge] (c)--(e);
		\end{tikzpicture}	
		\caption{$n^{\frac{11-3\tau}{2}}\log(n)$}
		\label{var:fig:5path}
	\end{subfigure}
	\begin{subfigure}[b]{0.13\linewidth}
		\centering
		\begin{tikzpicture}
		\tikzstyle{edge} = [draw,thick,-]
		\node[n1] (a) at (0,0) {};
		\node[n1] (b) at (1,0) {};
		\node[S2] (c) at (0,1) {};
		\node[n1] (d) at (1,1) {};
		\node[n1] (e) at (0.5,1.5) {};
		\draw[edge] (d)--(b);
		\draw[edge] (a)--(c);
		\draw[edge] (c)--(e);
		\draw[edge] (c)--(d);
		\end{tikzpicture}	
		\caption{$n^{5-\tau}$}
		\label{var:fig:clawdif}
	\end{subfigure}
	\begin{subfigure}[b]{0.13\linewidth}
		\centering
		\begin{tikzpicture}
		\tikzstyle{edge} = [draw,thick,-]
		\node[n1] (a) at (0,0) {};
		\node[n1] (b) at (1,0) {};
		\node[n1] (c) at (0,1) {};
		\node[n1] (d) at (1,1) {};
		\node[S2] (e) at (0.5,0.5) {};
		\draw[edge] (e)--(a);
		\draw[edge] (b)--(e);
		\draw[edge] (e)--(d);
		\draw[edge] (c)--(e);
		\end{tikzpicture}	
		\caption{$n^{6-\tau}$}
		\label{var:fig:4claw}
	\end{subfigure}
	\caption{Scaling of the expected number of motifs on 5 vertices in $n$, where the vertex color indicates the dominant vertex degree, as in Fig.~\ref{fig:graphlet4exp}.}
	\label{fig:motif5exp}
\end{figure}

\subsection*{Typical variation}\label{sec:typ}
The largest degrees (hubs) in typical samples of the hidden-variable model scale as $n^{1/(\tau-1)}$ with high probability. The expected number of motifs, however, may be dominated by network samples where the largest degree is proportional to $n$ (see Theorem~\ref{thm:expmotifsx}). These samples contain many motifs because of the high degrees, and therefore contribute significantly to the expectation. Nevertheless, the probability of observing such a network tends to zero as $n$ grows large. We therefore now adapt the variational principle with the goal to characterize the typical motif structure and hence the typical number of motifs.

We again assume degrees to be proportional to $n^{\alpha_i}$, but now limit to degree sequences where the maximal degree is of order $n^{1/(\tau-1)}$, the natural cutoff in view of the typical hub degrees. 
The dominant typical motif structure is then obtained by maximizing~\eqref{eq:maxalphx}, with the additional constraint that $\alpha_{i}\leq\tfrac{1}{\tau-1}$. In Supplementary Material 3 we show that the possible optimizers are one of four values $\alpha_i\in\{0,\tfrac{\tau-2}{\tau-1},\tfrac{1}{2},\tfrac{1}{\tau-1}\}$, and obtain an optimization problem similar to~\eqref{eq:maxeqx}.

This shows that the typical degree of a motif is of constant order or proportional to $n^{1/(\tau-1)}$, $\sqrt{n}$ or $n^{(\tau-2)/(\tau-1)}$. Figure~\ref{fig:graphlet4} and Supplementary Figure~1 show the most likely motifs on three, four and five vertices.
Observe that the dominant structures and the number of motifs of Fig.~\ref{fig:graphlet4exp} and~\ref{fig:graphlet4} may differ. 
For example, the scaling of the expected number of claws (Fig.~\ref{fig:graphlet4exp}b) and the typical number of claws (Fig.~\ref{fig:graphlet4}b) is different. This is caused by the left upper vertex that has degree proportional to $n$ in the free dominant structure, whereas its typical degree is proportional to $n^{1/(\tau-1)}$.  
Only when the solution to~\eqref{eq:maxeqx} does not involve hub vertices, the two scalings coincide. Hub vertices in the dominant structure give a major contribution to the motif count. While typical hub degrees scale as $n^{1/(\tau-1)}$, expected hub degrees may be much larger, causing the number of such motifs with hubs to scale faster in the free variation setting than in the typical variation setting. This indicates that the average and median motif count can differ dramatically. 

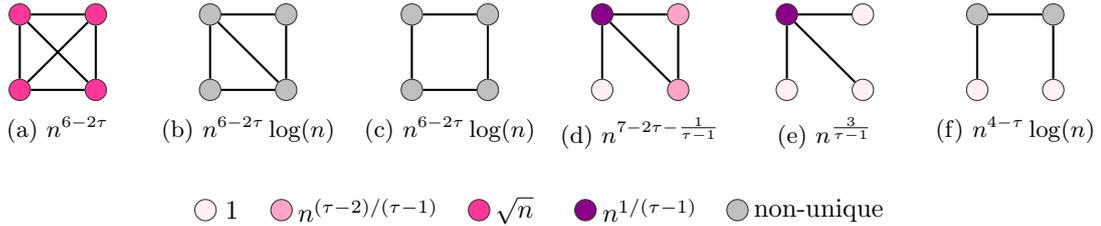
\begin{figure}[tb]
\centering
	\begin{subfigure}[t]{0.15\linewidth}
	\centering
	\begin{tikzpicture}
	\tikzstyle{edge} = [draw,thick,-]
	\node[S3] (a) at (0,0) {};
	\node[S3] (b) at (1,0) {};
	\node[S3] (c) at (0,1) {};
	\node[S3] (d) at (1,1) {};
	\draw[edge] (a)--(b);
	\draw[edge] (c)--(b);
	\draw[edge] (d)--(b);
	\draw[edge] (a)--(c);
	\draw[edge] (a)--(d);
	\draw[edge] (c)--(d);
	\end{tikzpicture}	
	\caption{$n^{6-2\tau}$}
	\label{mot:fig:K4sub}
\end{subfigure}
\begin{subfigure}[t]{0.17\linewidth}
	\centering
	\begin{tikzpicture}
	\tikzstyle{edge} = [draw,thick,-]
	\node (a) at (0,0) {};
	\node (b) at (1,0) {};
	\node (c) at (0,1) {};
	\node (d) at (1,1) {};
	\draw[edge] (a)--(b);
	\draw[edge] (c)--(b);
	\draw[edge] (d)--(b);
	\draw[edge] (a)--(c);
	\draw[edge] (c)--(d);
	\end{tikzpicture}	
	\caption{$n^{6-2\tau}\log(n)$}
	\label{mot:fig:squareextrasub}
\end{subfigure}
\begin{subfigure}[t]{0.17\linewidth}
	\centering
	\begin{tikzpicture}
	\tikzstyle{edge} = [draw,thick,-]
	\node (a) at (0,0) {};
	\node (b) at (1,0) {};
	\node (c) at (0,1) {};
	\node (d) at (1,1) {};
	\draw[edge] (a)--(b);
	\draw[edge] (d)--(b);
	\draw[edge] (a)--(c);
	\draw[edge] (c)--(d);
	\end{tikzpicture}	
	\caption{$n^{6-2\tau}\log(n)$}
	\label{mot:fig:squaresub}
\end{subfigure}
\begin{subfigure}[t]{0.15\linewidth}
	\centering
	\begin{tikzpicture}
	\tikzstyle{edge} = [draw,thick,-]
	\node[n1] (a) at (0,0) {};
	\node[S1] (b) at (1,0) {};
	\node[S2m] (c) at (0,1) {};
	\node[S1] (d) at (1,1) {};
	\draw[edge] (c)--(b);
	\draw[edge] (d)--(b);
	\draw[edge] (a)--(c);
	\draw[edge] (c)--(d);
	\end{tikzpicture}	
	\caption{$n^{7-2\tau-\frac{1}{\tau-1}}$}
\end{subfigure}
\begin{subfigure}[t]{0.16\linewidth}
	\centering
	\begin{tikzpicture}
	\tikzstyle{edge} = [draw,thick,-]
	\node[n1] (a) at (0,0) {};
	\node[n1] (b) at (1,0) {};
	\node[S2m] (c) at (0,1) {};
	\node[n1] (d) at (1,1) {};
	\draw[edge] (c)--(b);
	\draw[edge] (a)--(c);
	\draw[edge] (c)--(d);
	\end{tikzpicture}	
	\caption{$n^{\frac{3}{\tau-1}}$}
\end{subfigure}
\begin{subfigure}[t]{0.16\linewidth}
	\centering
	\begin{tikzpicture}
	\tikzstyle{edge} = [draw,thick,-]
	\node[n1] (a) at (0,0) {};
	\node[n1] (b) at (1,0) {};
	\node (c) at (0,1) {};
	\node (d) at (1,1) {};
	\draw[edge] (d)--(b);
	\draw[edge] (a)--(c);
	\draw[edge] (c)--(d);
	\end{tikzpicture}	
	\caption{$n^{4-\tau}\log(n)$}
	\label{mot:fig:wedge4mot}
\end{subfigure}

\vspace{-0.2cm}
\begin{subfigure}{\linewidth}
	\centering
	\begin{tikzpicture}
	\node[S2m,label={[label distance=0.05cm]0:$n^{1/(\tau-1)}$}] (a) at (5,0) {};
	\node[S3,label={[label distance=0cm]0:$\sqrt{n}$}] (b) at (3.6,0) {};
	\node[S1,label={[label distance=0.05cm]0:$n^{(\tau-2)/(\tau-1)}$}] (c) at (1,0) {};
	\node[n1,label={[label distance=0.05cm]0:$1$}] (c) at (0,0) {};
	\node[label={[label distance=0.05cm]0:non-unique}] (d) at (7,0) {};
	\end{tikzpicture}
\end{subfigure}
\vspace{-0.8cm}
	\caption{Typical scaling of the number of motifs on three or four vertices in $n$. The vertex color indicates the dominant vertex degree.}
	\label{fig:graphlet4}
\end{figure}

\paragraph*{Graphlets}
It is also possible to only count the number of times $H$ appears as an induced subgraph, also called graphlet counting. This means that an edge that is not present in graphlet $H$, should also be absent in the network motif. 
In Supplementary Material 4 we classify the expected and typical number of graphlets with a similar variational principle as for motifs. Supplementary Figure 2 shows the typical behavior of graphlets on 4 vertices. This figure also shows that graphlet counting is more detailed than motif counting. For example, counting all square motifs is equivalent to counting all graphlets that contain the square as an induced subgraph: the square, the diamond and $K_4$. Indeed, we obtain that the number of square motifs scales as $n^{6-2\tau}\log(n)$ by adding the number of square, diamond and $K_4$ graphlets from Supplementary Figure 2. This shows that most square motifs are actually the diamond graphlets of Supplementary Figure 2b. Thus, graphlet counting gives more detailed information than motif counting.

\subsection*{Fluctuations}\label{sec:selfavg}
Self-averaging network properties have relative fluctuations that tend to zero as the network size $n$ tends to infinity. Several physical quantities in for example Ising models, fluid models and properties of the galaxy display non-self-averaging behavior~\cite{wiseman1995,wiseman1998,aharony1996,das2017,ostilli2014,pastur1991,labini2009}. We consider motif counts $N(H)$ and call $N(H)$ self-averaging when $\Var{N(H)}/\Exp{N(H)}^2\to{0}$ as $n\to\infty$. Essential understanding of $N(H)$ can then be obtained by taking a large network sample, since the sample-to-sample fluctuations vanish in the large-network limit. In contrast, if $\Var{N(H)}/\Exp{N(H)}^2$ approaches a constant or tends to infinity as $n\to\infty$, the motif count is called non-self-averaging, in which case $N(H)$ shows (too) strong sample-to-sample fluctuations that cannot be mitigated by taking more network samples. 

Our variational principle facilitates a systematic study of such fluctuations, and leads to a classification into self-averaging and non-self-averaging for all motifs $H$. It turns out that whether $N(H)$ is self-averaging or not depends on the power-law exponent $\tau$ and the dominant structure of $H$. 
We also show that non-self-averaging behavior of motif counts may not have the intuitive explanation described above. In some cases, motif counts in two instances are similar with high probability, but rare network samples behave differently, causing the motif count to be non-self-averaging. Thus, the classification of network motifs into self-averaging and non-self-averaging motifs does not give a complete picture of the motif count fluctuations. We therefore further divide the non-self-averaging motifs into two classes based on the type of fluctuations in the motif counts.

For a given motif $H$, let $H_1,\dots,H_m$ denote all possible motifs that can be constructed by merging two copies of $H$ at one or more vertices. 
We can then write the variance of the motif count as (see \cite{frank1979,picard2008,matias2006,ostilli2014}  and the Methods section)
\begin{equation}\label{eq:varnhmain}
\begin{aligned}[b]
\Var{N(H)}&=C_1\Exp{N(H_1)}+\dots+C_m\Exp{N(H_m)}+\Exp{N(H)}^2O(n^{-1}).
\end{aligned}
\end{equation}
for constants $C_1,\dots,C_m$. 
Using~\eqref{eq:varnhmain}, we can determine for any motif $H$ whether it is self-averaging or not. First, we find all motifs that are created by merging two copies of $H$. For the triangle motif for example, these motifs are the bow-tie, where two triangles are merged at one single vertex, the diamond of Fig.~\ref{fig:graphlet4}b, and the triangle itself.
We find the order of magnitude of the expected number of these motifs using Theorem~\ref{thm:expmotifsx} to obtain the variance of $N(H)$. We divide by $\Exp{N(H)}^2$, also obtained by Theorem~\ref{thm:expmotifsx}, and check whether this fraction is diverging or not. Table~\ref{tab:selfav4} shows for which values of $\tau\in(2,3)$ the  motifs on 3, 4 and 5 vertices are self-averaging. 
For example, the triangle turns out to be self-averaging only for $\tau\in(2,5/2)$. 

Here is a general observation that underlines the importance of the dominant motif structure:

\begin{theorem}\label{thm:selfav}
All self-averaging motifs for any $\tau\in(2,3)$ have dominant free variation structures that consist of vertices with hidden variables $\Theta(\sqrt{n})$ only.
\end{theorem}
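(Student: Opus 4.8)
The plan is to prove the contrapositive: if the unique optimizer $\mathcal{P}^*$ of~\eqref{eq:maxeqx} assigns some vertex of $H$ to $S_1$ or $S_2$, then $N(H)$ fails to be self-averaging. I work throughout from the variance decomposition~\eqref{eq:varnhmain}. Since each $C_i\ge 0$ is a combinatorial multiplicity and each $\Exp{N(H_i)}>0$, there is no cancellation, so it suffices to produce a \emph{single} merge motif $H_i$ --- two copies of $H$ glued along $\ell_i$ vertices --- with $\Exp{N(H_i)}/\Exp{N(H)}^2\not\to 0$. Writing $\varepsilon(H)=\tfrac{3-\tau}{2}k+\tfrac{\tau-1}{2}B_f(H)$, Theorem~\ref{thm:expmotifsx} gives $\Exp{N(H_i)}/\Exp{N(H)}^2\propto n^{\frac{\tau-1}{2}(B_f(H_i)-2B_f(H))-\frac{3-\tau}{2}\ell_i}$, so the task reduces to finding a merge with $B_f(H_i)-2B_f(H)\ge\frac{3-\tau}{\tau-1}\ell_i$.

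The workhorse is an \emph{inherited} lower bound: scoring $H_i$ with the partition that $\mathcal{P}^*$ induces on each of the two copies is admissible, hence bounds $B_f(H_i)$ from below. Gluing at a single vertex $v$ (so $\ell_i=1$) identifies no $S_1$--$S_1$ or $S_1$--$S_3$ edge and only alters the multiplicity of $v$, giving $B_f(H_i)\ge 2B_f(H)-\ind{v\in S_1}+\ind{v\in S_2}$. If $S_2\neq\emptyset$, glue at a hub $v\in S_2$: then $B_f(H_i)-2B_f(H)\ge 1>\frac{3-\tau}{\tau-1}$ for all $\tau\in(2,3)$, the ratio exponent is at least $\tau-2>0$, and $N(H)$ is non-self-averaging.

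It remains to treat $S_2=\emptyset$ with $S_1\neq\emptyset$. Here I first prove a structural lemma: in an optimal partition with $S_2=\emptyset$, every $S_1$ vertex is a pendant attached to an $S_3$ vertex. Moving an $S_1$ vertex with $a$ neighbors in $S_1$ and $b$ in $S_3$ into $S_3$ changes the objective by $-1+(a+b)/(\tau-1)$, so optimality forces $a+b\le\tau-1<2$, i.e.\ at most one neighbor in $S_1\cup S_3$; an $S_1$--$S_1$ edge would be an isolated component, impossible for a connected motif on at least three vertices. Now take such a pendant $v\in S_1$ with host $u\in S_3$ and glue two copies at $u$ ($\ell_i=1$), so that $u$ acquires $2p\ge 2$ pendants. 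Starting from the inherited partition (value $\ge 2B_f(H)$ since $u\in S_3$) and then promoting $u$ to $S_2$ frees all $2p$ pendant edges, changing the value by $-1+2p/(\tau-1)\ge -1+2/(\tau-1)=\frac{3-\tau}{\tau-1}$. Thus $B_f(H_i)-2B_f(H)\ge\frac{3-\tau}{\tau-1}=\frac{3-\tau}{\tau-1}\ell_i$, the ratio tends to a positive constant, and $N(H)$ is again non-self-averaging.

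The delicate point is precisely this last case. The obvious single-vertex merge \emph{at the pendant} is too weak (it only yields $B_f(H_i)\ge 2B_f(H)-1$); the gain comes from gluing at the \emph{host} and exploiting that a doubled-up $S_3$ vertex becomes worth promoting to a hub. This promotion lands exactly on the threshold $\frac{3-\tau}{\tau-1}$ --- so motifs like $K_4$ with a pendant sit precisely on the boundary --- which is why the borderline regime (variance ratio tending to a constant) must be classified as non-self-averaging. The only routine residue is to confirm that the true $B_f(H_i)$ never falls below the inherited/promoted value and that $\Exp{N(H_i)}$ dominates the contribution of the chosen admissible partition; both are immediate from the definition of $B_f$ and the nonnegativity of the $C_i$ in~\eqref{eq:varnhmain}.
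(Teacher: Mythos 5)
Your proposal is correct and takes essentially the same route as the paper's proof: the same contrapositive case split ($S_2\neq\emptyset$ versus $S_2=\emptyset$, $S_1\neq\emptyset$), the same structural lemma that every $S_1$ vertex is a pendant attached to an $S_3$ vertex, and the same two merge constructions (gluing two copies of $H$ at a hub, respectively gluing at the $S_3$ host of a pendant and then promoting that merged vertex to a hub, which lands exactly on the borderline constant-order ratio). The only difference is presentational: you express the gains in the $B_f$-optimization language via the threshold $\frac{3-\tau}{\tau-1}\ell_i$, whereas the paper computes the same lower bounds directly through connection probabilities and vertex counts.
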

We prove this theorem in Supplementary Material 5.
Note that the condition on the dominant motif structure is a necessary condition for being self-averaging, but it is not a sufficient one, as the triangle example shows. Table~\ref{tab:selfav4} shows the values of $\tau$ for which all connected motifs on 3, 4 and 5 vertices are self-averaging.
\begin{table}[tbp]
	\centering
			\begin{tabular}{lll}
				\toprule
				self-averaging for &subfigs of Fig.~\ref{fig:graphlet4exp} &subfigs of Fig.~\ref{fig:motif5exp} \\
				\midrule
				(2,3) & c & a,b,c,d  \\
				(2,5/2) & a & f,g,l,o\\
				(2,7/3) &  & i \\
				-     & b,d,e,f,g,h  &  e,h,j,k,m,n,p,q,r,s,t,u \\
				\bottomrule
			\end{tabular}%
		\caption{The values of $\tau\in(2,3)$ where the motifs of Figs.~\ref{fig:graphlet4exp} and~\ref{fig:motif5exp} are self-averaging.}
		\label{tab:selfav4}%
\end{table}
Combining the classification of the motifs into self-averaging and non-self-averaging with the classification based on the value of $B_f(H)$ from~\eqref{eq:maxeqx} as well as the difference between the expected and typical number of motifs yields a classification into the following three types of motifs:

\textit{Type I: Motifs with small variance.} $B_f(H)=0$ and $\Var{N(H)}/\Exp{N(H)}^2\to{0}$. These motifs only contain vertices of degrees $\Theta(\sqrt{n})$. The number of such rescaled motifs converges to a constant~\cite{hofstad2017d}. Furthermore, the variance of the number of motifs is small compared to the second moment, so that the fluctuations of these types of motifs are quite small and vanish in the large network limit. The triangle for $\tau<5/2$ is an example of such a motif, shown in Figs~\ref{fig:triangleselfavg}b and~\ref{fig:triangleselfavg}e.

 \textit{Type II: Concentrated, non-self-averaging motifs.} $B_f(H)=0$ and $\Var{N(H)}/\Exp{N(H)}^2\nrightarrow{0}$. These motifs also only contain vertices of degrees $\sqrt{n}$. Again, the rescaled number of such motifs converges to a constant in probability~\cite{hofstad2017d}. Thus, most network samples contain a similar amount of motifs as $n$ grows large, even though these motifs are non-self-averaging. Still, in rare network samples the number of motifs significantly deviates from its typical number, causing the variance of the number of motifs to be large. Figs~\ref{fig:triangleselfavg}a and~\ref{fig:triangleselfavg}d illustrate this for triangle counts for $\tau\geq 5/2$. The fluctuations are larger than for the concentrated motifs, but most of the samples have motif counts close to the expected value.
 
\textit{Type III: Non-concentrated motifs.} $B_f(H)>0$. These motifs contain hub vertices. The expected and typical number of such motifs therefore scale differently in $n$. By Theorem~\ref{thm:selfav}, these motifs are non-self-averaging. The rescaled number of such motifs may not converge to a constant, so that two network samples contain significantly different motif counts. Figs~\ref{fig:triangleselfavg}c and~\ref{fig:triangleselfavg}f show that the fluctuations of these motifs are indeed of a different nature, since most network samples have motif counts that are far from the expected value.

\begin{figure}[tb]
	\centering
	\captionsetup[subfigure]{justification=centering}
	\begin{subfigure}[t]{0.3\linewidth}
		\centering
		{ small variance}
		\includegraphics[width=0.8\textwidth]{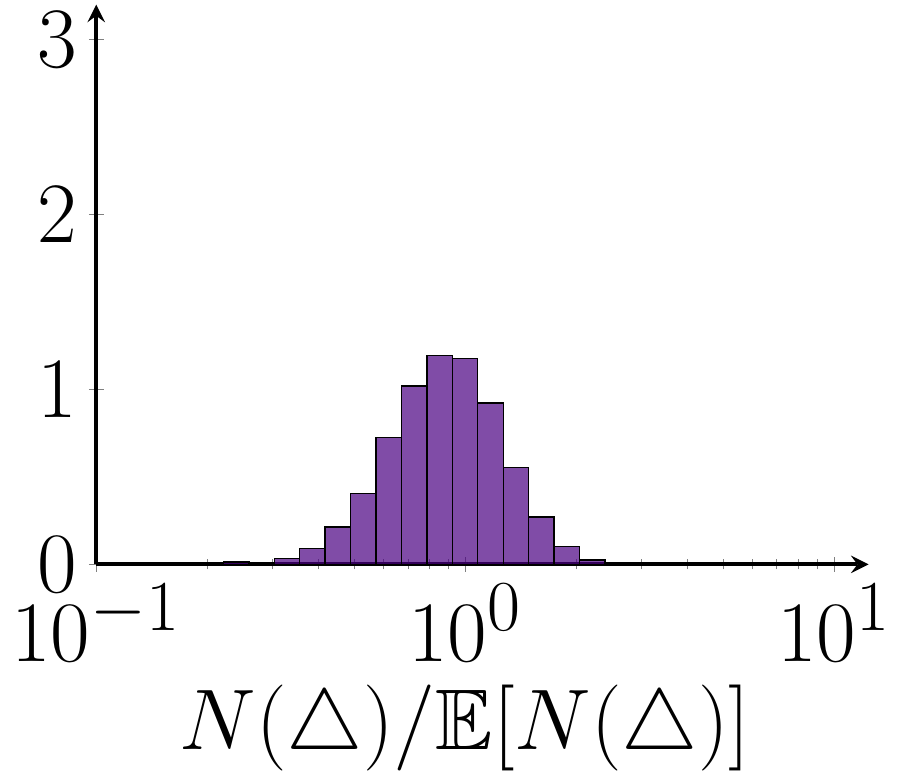}
		\vspace{-0.5\baselineskip}
		\caption{$\tau=2.2,n=10^5$, triangles}
		\label{var:fig:tr22100000}
	\end{subfigure}
	\hspace{0.2cm}
	\begin{subfigure}[t]{0.3\linewidth}
		\centering
		{concentrated, non-self-averaging}
		\includegraphics[width=0.8\textwidth]{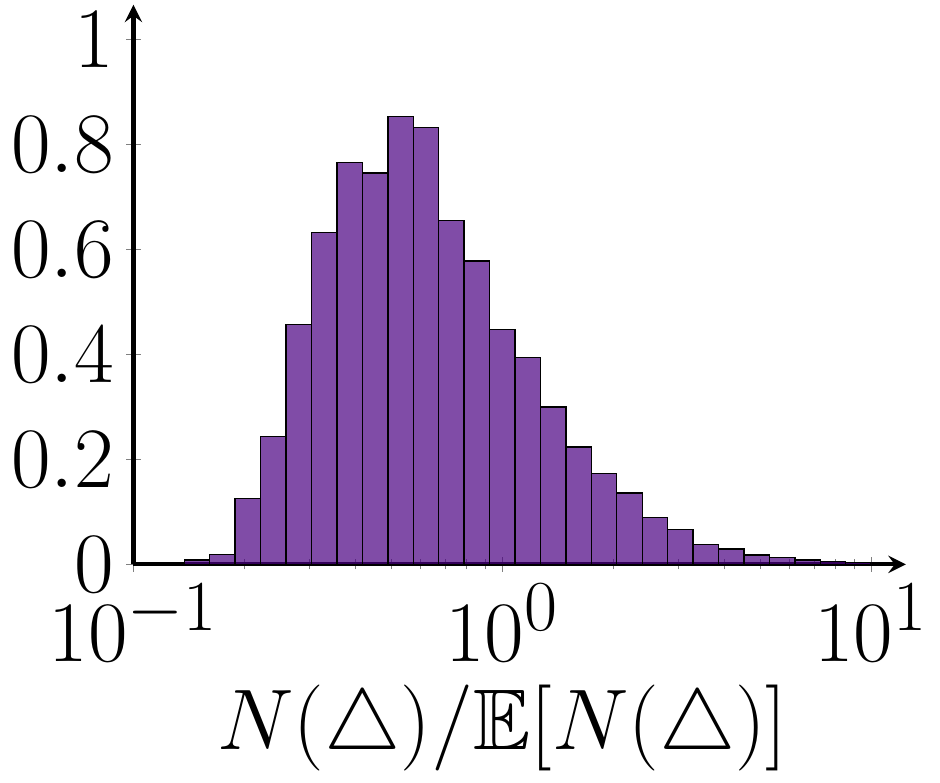}
		\vspace{-0.5\baselineskip}
		\caption{$\tau=2.8,n=10^5$, triangles}
		\label{var:fig:tr28100000}
	\end{subfigure}
	\hspace{0.2cm}
	\begin{subfigure}[t]{0.3\linewidth}
		\centering
		{non-concentrated}
		\includegraphics[width=0.8\textwidth]{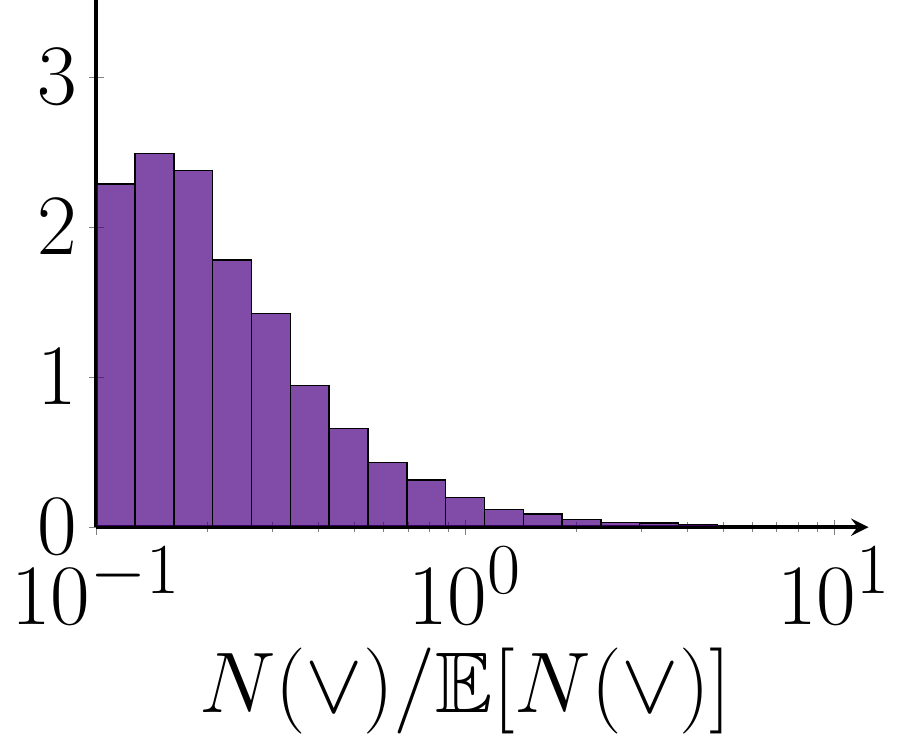}
		\vspace{-0.5\baselineskip}
		\caption{$\tau=2.2,n=10^5$, \ \  wedges}
		\label{var:fig:w22100000}
	\end{subfigure}
	
	\begin{subfigure}[t]{0.3\linewidth}
		\centering
		\includegraphics[width=0.8\textwidth]{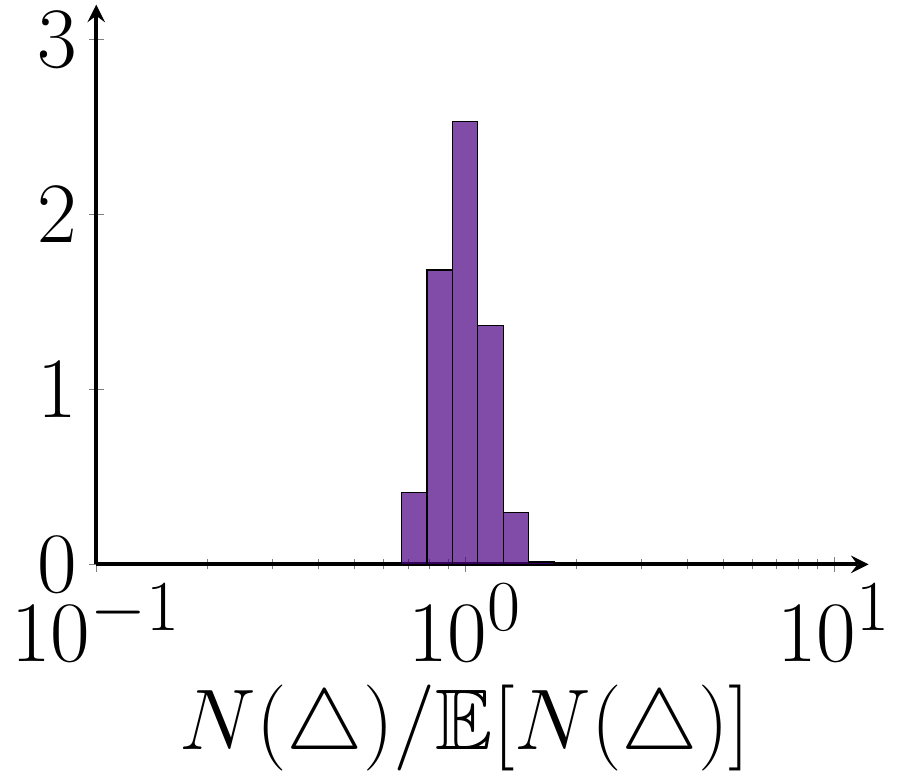}
		\vspace{-0.5\baselineskip}
		\caption{$\tau=2.2$, $ n=5\cdot 10^6$, triangles}
		\label{var:fig:tr225000000}
	\end{subfigure}
	\hspace{0.2cm}
	\begin{subfigure}[t]{0.3\linewidth}
		\centering
		\includegraphics[width=0.8\textwidth]{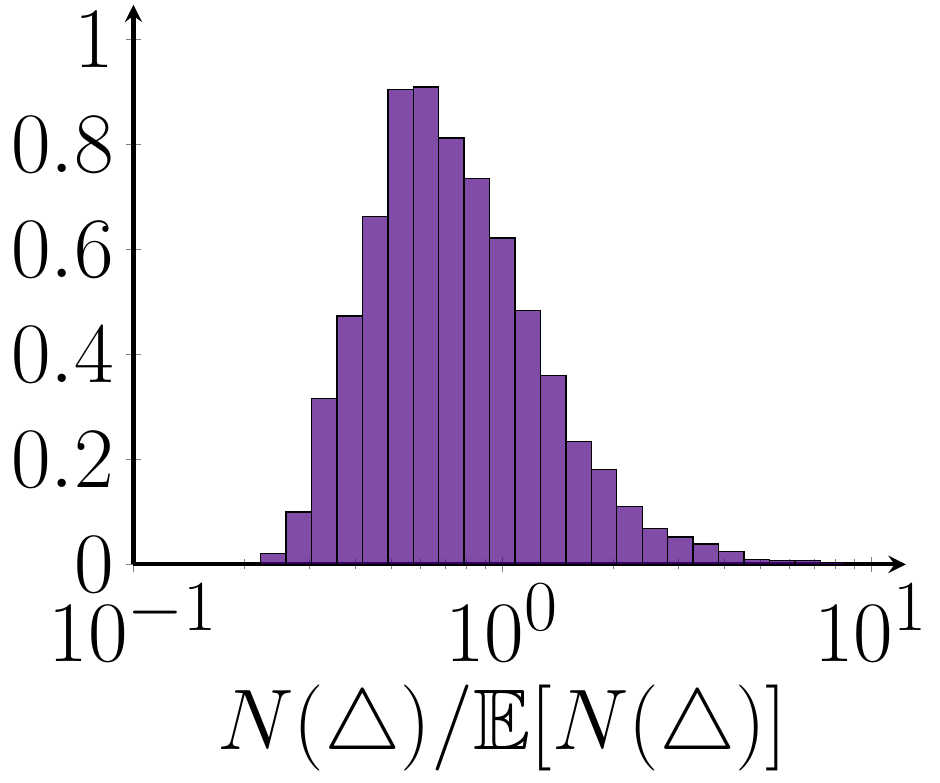}
		\vspace{-0.5\baselineskip}
		\caption{$\tau=2.8$, $n=5 \cdot 10^6$, triangles}
		\label{var:fig:tr285000000}
	\end{subfigure}
	\hspace{0.2cm}
	\begin{subfigure}[t]{0.3\linewidth}
		\centering
		\includegraphics[width=0.8\textwidth]{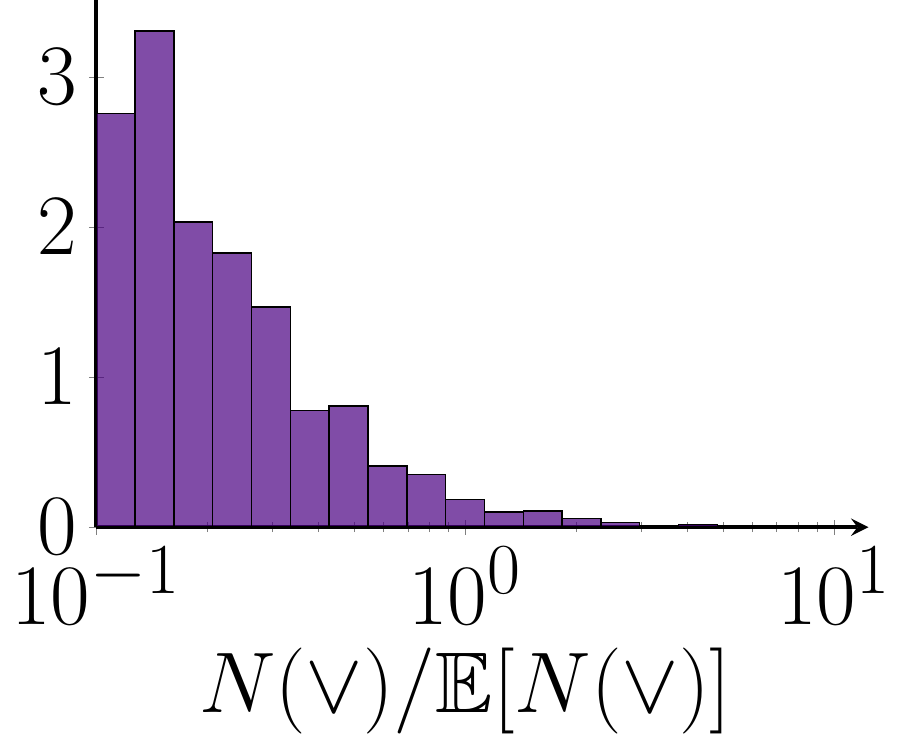}
		\vspace{-0.5\baselineskip}
		\caption{$\tau=2.2$, $n=5\cdot 10^6$, wedges}
		\label{var:fig:w225000000}
	\end{subfigure}
	\caption{Density approximation of the normalized triangle and wedge counts for various values of $\tau$ and $n$, obtained over $10^4$ network samples.}
	\label{fig:triangleselfavg}
\end{figure}

\subsection*{Data}\label{sec:data}

\begin{table}[htbp]
	\centering
		\begin{tabular}{lrrr}
			\toprule
			\textbf{} & $n$     & $m $    & $\tau$ \\
			\midrule
			\textbf{Gowalla} & 196591 & 950327 & 2.65 \\
			\textbf{Oregon} & 11174 & 23409 &  2.08\\
			\textbf{Enron} & 36692 & 183831 & 1.97 \\
			\textbf{PGP} & 10680 & 24316 &  2.24 \\
			\textbf{Hep} & 9877  & 25998 &  3.50\\
			\bottomrule
		\end{tabular}%
	\caption{Statistics of the five data sets, where $n$ is the number of vertices, $m$ the number of edges, and $\tau$ the power-law exponent fitted by the procedure of~\cite{clauset2009}.}
	\label{tab:data}%
\end{table}%

We now investigate motifs in five real-world networks with heavy-tailed degree distributions: the Gowalla social network~\cite{snap}, the Oregon autonomous systems network~\cite{snap}, the Enron email network~\cite{klimt2004,snap}, the PGP web of trust~\cite{boguna2004a} and the High Energy Physics collaboration network (HEP)~\cite{snap}.
Table~\ref{tab:data} provides detailed statistics of these data sets. 
Because the number of motifs can be obtained from the number of graphlets, we focus on graphlet counts.
Figure~\ref{fig:motifsorder} shows the graphlet counts on a logarithmic scale. The order of the graphlets is from the most occurring graphlet (the claw), to the least occurring graphlet (the square and $K_4$) in the hidden-variable model, see Supplementary Figure 2. 
In three networks the motif ordering follows that of the hidden-variable model, while in two networks the ordering is different.
In the HEP collaboration network, for example, $K_4$ occurs more frequently than the square. While this is not predicted by the hidden-variable model, it naturally arises due to the frequently occurring collaboration between four authors, which creates $K_4$ instead of the square. It would be interesting to see if this deviation from the ordering of the hidden-variable model can be linked to the specific nature of the data set in other examples.

\begin{figure}[tb]
	\centering
	\includegraphics[width=0.6\linewidth]{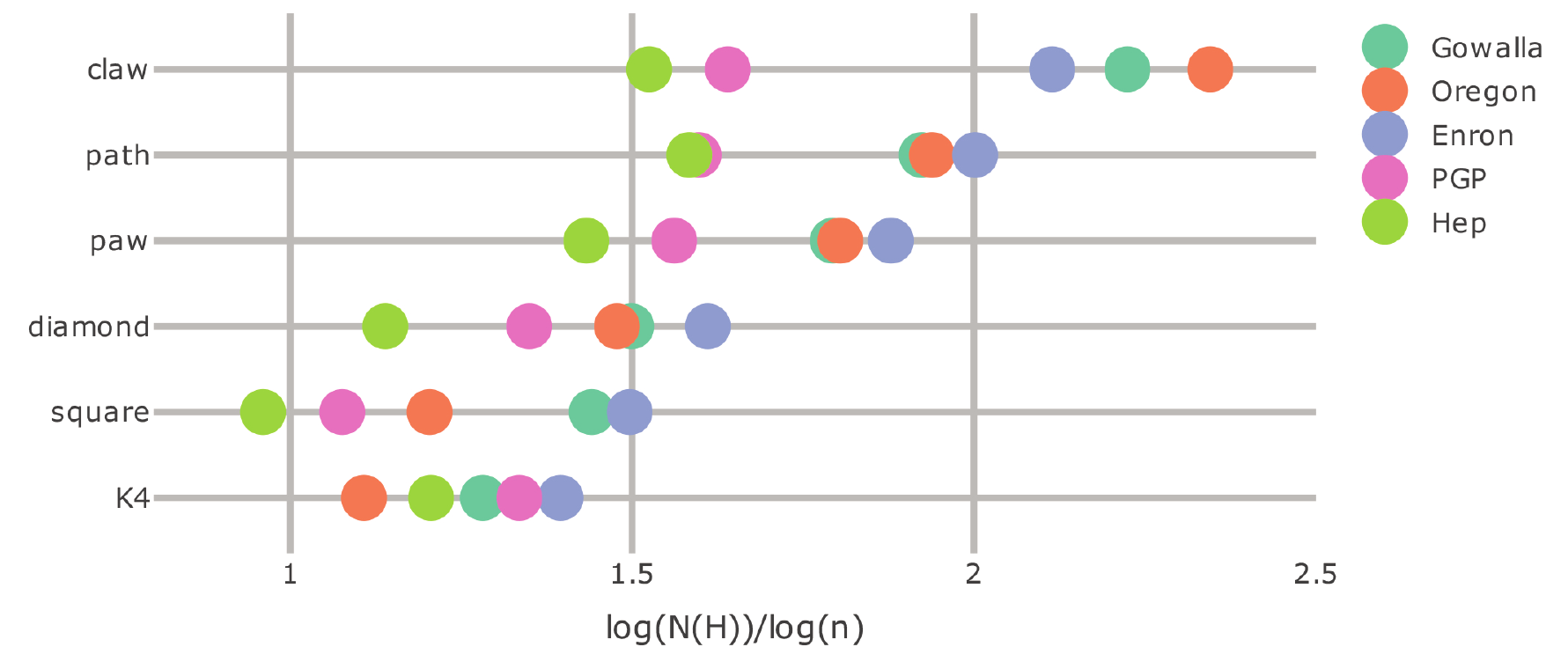}
	\caption{Number of graphlets on four vertices in five data sets on logarithmic scale: $\log(N(H))/\log(n)$. The ordering of the six different graphlets is from the most occurring in the hidden-variable model to the least. The values of the graphlet counts are presented in Supplementary Table~1.}
	\label{fig:motifsorder}
\end{figure}

\begin{figure}[tb]
	\centering
	\begin{minipage}[b]{\linewidth}
		\centering
		\includegraphics[width=0.7\linewidth]{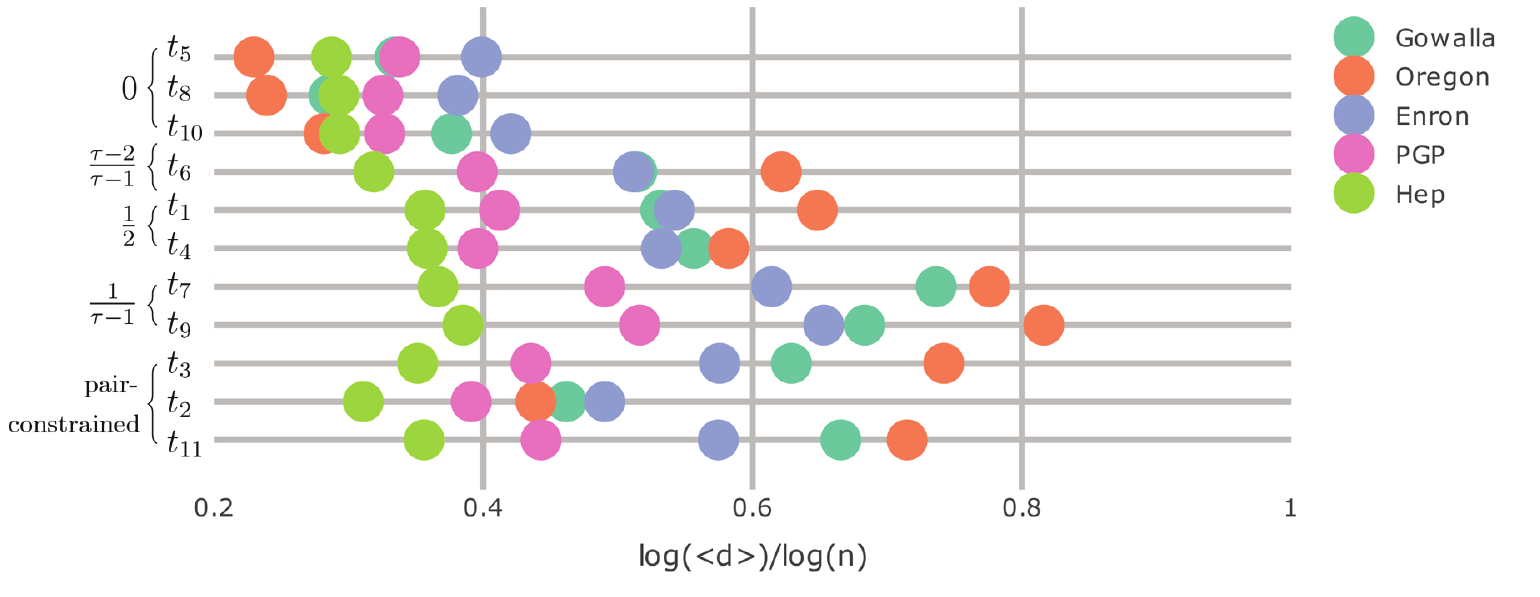}
		\caption{Average degree of the vertex types displayed in Supplementary Figure 2 in 5 data sets on logarithmic scale. The curly brackets indicate the typical degree exponent of the vertex type in the hidden-variable model. }
		\label{fig:motifsdegree}
	\end{minipage}
\end{figure}

Supplementary Figure~2 enumerates all possible vertex types in graphlets on 4 vertices. In the hidden-variable model, vertex types $t_7$ and $t_9$ have typical degrees proportional to $n^{1/(\tau-1)}$ in the hidden-variable model, vertex types $t_1$ and $t_4$ typically have degrees proportional to $\sqrt{n}$, vertex type $t_6$ typically has degree proportional to $n^{(\tau-2)/(\tau-1)}$ and vertex types $t_5,t_8,t_{10},t_{11}$ typically have constant degree. vertex types $t_2, t_3$ and $t_{11}$ do not have a unique optimizer. The degrees of these vertex types are pair-constrained (see the proof of Lemma~\ref{lem:diamond}). 
Figure~\ref{fig:motifsdegree} shows the typical degree of all 11 vertex types in the five real-world data sets. Vertices with typical degree 1 in the hidden-variable model have the lowest degree in the five data sets. Vertices that have typical degree $n^{1/(\tau-1)}$ in the hidden-variable model also have the highest degree among all vertex types in these five real-world data sets. Thus, typical degrees of vertices in a graphlet roughly follow the same ordering as in the hidden-variable model in these data sets. 

The High Energy Physics collaboration network does not have a large distinction between the degrees of the different vertex types. This may be related to the fact that this network has less heavy-tailed degrees than the other networks (see Table~\ref{tab:data}).

\section*{Discussion}\label{sec:conclusion}
By developing a variational principle for the dominant degree composition of motifs in the hidden-variable model, we have identified the asymptotic growth of motif counts and their fluctuations for all motifs. This allowed us to determine for which values of the degree exponent $\tau\in(2,3)$ the number of motifs is self-averaging. We further divide the non-self-averaging motifs into two classes with substantially different concentration properties.

Hub vertices in dominant motif structures cause wild degree fluctuations and non-self-averaging behavior, so that large differences between the average motif count and the motif count in one sample of the random network model arise.
Non-self-averaging motifs without a hub vertex show milder fluctuations. 

We expect that the variational principle can be extended to different random graph models, such as the hyperbolic random graph, the preferential attachment model and random intersection graphs. For example, for the hyperbolic random graph, the dominant structure of complete graphs is known to be $\sqrt{n}$ degrees~\cite{friedrich2015} like in the hidden-variable model, but the dominant structures of other motifs are yet unknown.

In this paper, we presented a case study for  motifs on 4 vertices in five scale-free network data sets. It would be interesting to perform larger network data experiments to investigate whether motifs in real-world network data also have typical vertex degrees, and to what extent these vertex degrees are similar to the ones of the hidden-variable model. Similarly, investigating the typical behavior of motifs on more than 4 vertices in real-world data compared to the hidden-variable model is another topic for future research.

It would also be interesting to develop statistical tests for the presence of motifs in real-world data using the results from this paper. For example, one could compare the ordering of all motifs for size $k$ from the most frequent occurring to the least frequent occurring motif, and compare this to the ordering in a hidden-variable with the same degree-exponent. This could shed some light on which motifs in a given data set appear more often than expected.

\paragraph{Acknowledgements.}
This work is supported by NWO TOP grant 613.001.451 and by the NWO Gravitation Networks grant 024.002.003.
The work of RvdH is further supported by the NWO VICI grant 639.033.806.  The work of JvL is further supported by an NWO TOP-GO grant and by an ERC Starting Grant.

\section*{Methods}
\subsection*{Fluctuations}
 \paragraph{Triangle fluctuations.}
We first illustrate how we can apply the variational principle to obtain the variance of the number of subgraphs by computing the variance of the number of triangles in the hidden-variable model.
Let $\triangle$ denote the number of triangles, and let $\triangle_{i,j,k}$ denote the event that vertices $i,j$ and $k$ form a triangle. Then, we can write the number of triangles as
\begin{equation}
\triangle=\tfrac{1}{6}\sideset{}{'}\sum_{i,j,k\in[n]}\mathbbm{1}_{\triangle_{i,j,k}},
\end{equation}
where $\sum^{'}$ denotes the sum over distinct indices. 
Thus, the variance of the number of triangles can be written as
\begin{align}\label{eq:var}
\Var{\triangle}& =\sideset{}{'}\sum_{i,j,k\in[n]}\sideset{}{'}\sum_{s,t,u\in[n]}\Prob{\triangle_{i,j,k},\triangle_{s,t,u}}-\Prob{\triangle_{i,j,k}}\Prob{\triangle_{s,t,u}}.
\end{align}
When $i,j,k$ and $s,t,u$ do not overlap, the hidden variables of $i,j,k$ and $s,t,u$ are independent, so that the event that $i,j$ and $k$ form a triangle and the event that $s,t$ and $u$ form a triangle are independent. Thus, when $i,j,k,s,t,u$ are all distinct, $\Prob{\triangle_{i,j,k},\triangle_{s,t,u}}=\Prob{\triangle_{i,j,k}}\Prob{\triangle_{s,t,u}}$, so that the contribution from 6 distinct indices to~\eqref{eq:var} is zero. On the other hand, when $i=u$ for example, the first term in~\eqref{eq:var} denotes the probability that a bow tie (see Fig.~\ref{fig:motif5exp}i) is present with $i$ as middle vertex.
Furthermore, since the degrees are i.i.d.\ and the edge statuses are independent as well, $\Prob{\triangle_{i,j,k}}$ is the same for any $i\neq j\neq k$, so that
\begin{equation}
\Prob{\triangle_{i,j,k}}=\frac{\Exp{\triangle}}{6\binom{n}{3}} = \frac{\Exp{\triangle}}{6 n^3}(1+o(1)).
\end{equation}
This results in
\begin{equation}
\begin{aligned}[b]\label{eq:vartriang}
\Var{\triangle} 
& =9\Exp{\#\text{ bow-ties}}- 9n^{-1}\Exp{\triangle}^2+18\Exp{\#\text{ diamonds}}-18n^{-2}\Exp{\triangle}^2+6\Exp{\triangle}-6n^{-3}\Exp{\triangle}^2\\
& =  9\Exp{\text{\# bow-ties}}+18\Exp{\text{\# diamonds}} +6\Exp{\triangle}+\Exp{\triangle}^2O(n^{-1}),
\end{aligned}
\end{equation}
where the diamond motif is as in Fig.~\ref{fig:graphlet4}b. 
The combinatorial factors 9,18 and 6 arise because there are 9 ways to construct a bow tie (18 for a diamond, and 6 for a triangle) by letting two triangles overlap. The diamond motif does not satisfy the assumption in Theorem~\ref{thm:expmotifsx} that the optimal solution to~\eqref{eq:maxeqx} is unique. However, we can show the following result:

\begin{lemma}\label{lem:diamond}
	$\Exp{{\rm number \  of \  diamonds}}=\thetao{n^{6-2\tau}}\log(n)$.
\end{lemma}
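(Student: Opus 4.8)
The plan is to compute $\Exp{N(\text{diamond})}$ directly from the integral representation \eqref{eq:pHpresent}, since the hypothesis of a unique optimizer in Theorem~\ref{thm:expmotifsx} fails here. I would exploit the fact that the diamond is two triangles glued along an edge, i.e.\ two internally disjoint paths of length two joining a pair of vertices $\{1,2\}$, together with the edge $(1,2)$. Because the four hidden variables are i.i.d., integrating out the two degree-two vertices (each of which is a common neighbour of $1$ and $2$) gives
\[
\Exp{N(\text{diamond})}=\frac{(n)_4}{|\mathrm{Aut}|}\int p(h_1,h_2)\,f(h_1,h_2)^2\,\rho(h_1)\rho(h_2)\,\dd h_1\,\dd h_2\,(1+o(1)),
\]
where $|\mathrm{Aut}|=4$, $(n)_4\sim n^4$, and $f(h_1,h_2)=\int p(h_1,h)p(h_2,h)\rho(h)\,\dd h$ is the expected co-degree of two vertices with hidden variables $h_1,h_2$. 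The constant prefactor is irrelevant for the claimed $\Theta$.

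The first technical step is the co-degree asymptotics. Splitting the $h$-integral at the thresholds $\mu n/\max(h_1,h_2)$ and $\mu n/\min(h_1,h_2)$, where each factor $p(h_i,h)$ saturates at $1$, I would obtain, for $h_{\min}\le h_1,h_2\lesssim n$,
\[
f(h_1,h_2)=\Theta\!\big((\mu n)^{1-\tau}\,\min(h_1,h_2)\,\max(h_1,h_2)^{\tau-2}\big),
\]
with comparable contributions from the small-$h$ and large-$h$ regimes. Substituting this and using symmetry to restrict to $h_1\le h_2$, writing $u=h_1$, $v=h_2$ (up to a factor $2$), the integrand collapses to the pure power $(\mu n)^{1-2\tau}u^{3-\tau}v^{\tau-3}$ on the region $uv<\mu n$ where $p(u,v)=uv/(\mu n)$, and to $(\mu n)^{2-2\tau}u^{2-\tau}v^{\tau-4}$ on the region $uv\ge\mu n$ where $p(u,v)=1$.

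The heart of the argument is that the three natural regions contribute differently. I would show that the two regions with $p(u,v)<1$, namely $v\le\sqrt{\mu n}$ and $v>\sqrt{\mu n}$ with $u<\mu n/v$, each integrate to a clean power $\Theta((\mu n)^{2-2\tau})$. In the remaining region $v>\sqrt{\mu n}$, $\mu n/v\le u\le v$ — where vertices $1,2$ are directly connected with probability $\approx1$ — the inner $u$-integral yields $\Theta(v^{3-\tau})$, leaving an integrand $\Theta(v^{-1})$. Integrating $v^{-1}$ over $v\in[\sqrt{\mu n},\,\Theta(\mu n/h_{\min})]$ produces a factor $\Theta(\log n)$, so this region contributes $\Theta((\mu n)^{2-2\tau}\log n)$ and dominates the other two by the logarithm. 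Collecting the regions gives $\Theta((\mu n)^{2-2\tau}\log n)$ for the double integral, whence $\Exp{N(\text{diamond})}=\Theta\!\big(n^4(\mu n)^{2-2\tau}\log n\big)=\thetao{n^{6-2\tau}}\log n$.

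The main obstacle, and the conceptual point, is locating the logarithm: it does not come from any single dominant degree scale. The optimizer of \eqref{eq:maxeqx} is non-unique precisely because the two degree-three vertices are only \emph{pair-constrained}, $h_1h_2=\Theta(\mu n)$, while their individual magnitudes range freely over a full multiplicative scale from $\sqrt{n}$ up to $\Theta(n)$; the ``length'' of this ridge of equally good balanced configurations is what yields the $\log n$. The remaining care is to verify that all estimates are genuinely two-sided (so that $\Theta$, not merely $O$, holds), that the co-degree asymptotics are uniform over the relevant range of $(h_1,h_2)$, and that configurations with $v\gtrsim n$ — where $f(h_1,h_2)$ degenerates into the expected degree of vertex $1$ — are suppressed by $\rho$ and contribute no further logarithm.
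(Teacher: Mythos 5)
Your proposal is correct, and it takes a genuinely different route from the paper's. The paper stays on the discrete side of the variational principle: it first identifies (in Supplementary Note 6) the ridge of non-unique optimizers of \eqref{eq:maxeqx}, parameterized by $\beta\in[1/2,1]$, then counts $4$-tuples of vertices whose hidden variables obey the resulting pair constraints --- invoking a cited result that the product of two i.i.d.\ power laws is again a power law with an extra logarithmic factor --- and multiplies by the $\Theta(1)$ presence probability; that the ridge configurations dominate (the upper bound) is asserted rather than computed. You instead write the expectation exactly as $\tfrac{(n)_4}{4}\Exp{p(h_1,h_2)f(h_1,h_2)^2}$ by integrating out the two degree-two corners through the co-degree function $f$, and split the remaining two-dimensional integral at the thresholds $v=\sqrt{\mu n}$ and $u=\mu n/v$. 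This buys you two things the paper leaves implicit: matching upper and lower bounds fall out of one computation, since every configuration is integrated over and no dominance claim is needed; and the argument is self-contained, with the logarithm produced directly by $\int v^{-1}\,\dd v$ rather than quoted from the product-of-power-laws lemma (the same calculation in disguise). The paper's route is shorter and reuses its variational machinery; yours is longer but closes the gap the paper glosses over with ``all these optimizers together give the major contribution.''

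One caveat concerns your closing interpretation, not your computation. You attribute the logarithm to the two degree-three vertices being pair-constrained by $h_1h_2=\Theta(\mu n)$ with individually free scales. Your own Region-III analysis says otherwise: since $2-\tau>-1$, the inner integral $\int_{\mu n/v}^{v}u^{2-\tau}\,\dd u$ is dominated by its \emph{upper} endpoint $u\asymp v$, so on the dominant ridge the two degree-three vertices are comparable, $h_1\asymp h_2\asymp n^{\beta}$, and the pair constraints bind each of them against the two degree-two corners at $\asymp n^{1-\beta}$; the sub-region $uv=\Theta(\mu n)$ contributes only $\Theta((\mu n)^{2-2\tau})$, with no log. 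This matches the optimizers actually derived in Supplementary Note 6 ($\alpha_i=\alpha_j=\beta$, $\alpha_k=\alpha_s=1-\beta$). You are in good company: the paper's Methods-section display \eqref{eq:diamonddeg} carries the same labeling slip relative to its own Supplement, and in both cases the arithmetic is unaffected.
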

\begin{proof}
	Let $i$ and $j$ be the vertices at the diagonal of the diamond, and $k$ and $s$ the corner vertices. Then~\eqref{eq:maxalphx} is optimized for $\alpha_i=\beta$, $\alpha_j=1-\beta$, $\alpha_k=\beta$ and $\alpha_s=1-\beta$ for all values of $\beta\in[1/2,1]$ (see Supplementary Information~6). All these optimizers together give the major contribution to the number of diamonds.  Thus, we need to find the number of sets of four vertices, satisfying
	\begin{equation}\label{eq:diamonddeg}
	h_ih_j=\Theta(n), \quad h_i>h_j,\quad \quad h_k=\Theta(h_i),\quad h_s=\Theta(h_j).
	\end{equation}
	Given $h_i$ and $h_j$, the number of sets of two vertices $k,s$ with $h_k=\Theta(h_i)$ and $h_s=\Theta(h_j)$ is given by $n^{2}h_i^{1-\tau}h_j^{1-\tau}=\Theta(n^{3-\tau})$, where we used that $h_ih_j=\Theta(n)$. The number of sets of vertices $i,j$ such that $h_ih_j=\Theta(n)$ can be found using that the product of two independent power-law random variables is again distributed as a power law, with an additional logarithmic term~\cite[Eq. (2.16)]{hofstad2014} (where in our setting equality holds in~\cite[Eq. (2.16)]{hofstad2014}, since we assume a pure power-law distribution). Thus, the number of sets of vertices with $h_ih_j=\Theta(n)$ scales as $n^{2}n^{1-\tau}\log(n)$. Then, the expected number of sets of four vertices satisfying all constraints on the degrees scales as $n^{6-2\tau}\log(n)$. By~\eqref{eq:phsubx}, the probability that a diamond exists on degrees satisfying~\eqref{eq:diamonddeg} is asymptotically constant, so that the expected number of diamonds also scales as $n^{6-2\tau}\log(n)$. 
\end{proof}
Theorem~\ref{thm:expmotifsx} gives for the number of bow ties that (using~\cite{stegehuiscode2019} to find the optimal partition)
\begin{equation}
\Exp{\#\text{ bow ties}}= \begin{cases}
\Theta(n^{\frac{5}{2}(3-\tau)}) & \tau<\tfrac{7}{3},\\
\Theta(n^{4-\tau}) & \tau\geq \tfrac{7}{3},
\end{cases}
\end{equation}
and for the number of triangles (again using~\cite{stegehuiscode2019}) that $\Exp{\triangle}=\Theta(n^{3(3-\tau)/2})$.
Combining this with~\eqref{eq:vartriang} results in
\begin{equation}
\Var{\triangle}= \begin{cases}
\Theta(n^{\frac{5}{2}(3-\tau)}) & \tau<\tfrac{7}{3},\\
\Theta(n^{4-\tau}) & \tau\geq \tfrac{7}{3}.
\end{cases}
\end{equation}
To investigate whether the triangle motif is self-averaging, we need to compare the variance to the second moment of the number of triangles, which results in
\begin{equation}
\frac{\Var{\triangle}}{\Exp{\triangle}^2}=\begin{cases}
\Theta(n^{\frac{1}{2}(\tau-3)}), & \tau<\tfrac{7}{3},\\
\Theta(n^{2\tau-5}), & \tau\geq \tfrac{7}{3}.
\end{cases}
\end{equation}
Therefore,
\begin{equation}\label{eq:triangselfavglimit}
\lim_{n\to\infty}\frac{\Var{\triangle}}{\Exp{\triangle}^2}=\begin{cases}
0 & \tau<\tfrac{5}{2},\\
\infty & \tau >\tfrac{5}{2}.
\end{cases}
\end{equation}
For $\tau=5/2$, the limit in~\eqref{eq:triangselfavglimit} is of constant order of magnitude. 
Thus, the number of triangles is self-averaging as long as $\tau<\tfrac{5}{2}$. When $\tau\geq\tfrac{5}{2}$ the number of triangles is not self-averaging.

\paragraph{General motif fluctuations.}\label{sec:vareq}
We now compute the variance of general motifs, similar to the triangle example. 
Let $\boldsymbol{i}=(i_1,\dots,i_k)$ be such that $i_p\neq i_q$ when $p\neq q$. We can then write the variance as
\begin{equation}\label{eq:varproof}
\begin{aligned}[b]
\Var{N(H)}&=\sum_{\boldsymbol{i}\in [n]^k}\sum_{\boldsymbol{j}\in [n]^k}\Big(\Prob{H_{\boldsymbol{i}},H_{\boldsymbol{j}} \text{ present}}-\Prob{H_{\boldsymbol{i}} \text{ present}}\Prob{H_{\boldsymbol{j}} \text{ present}}\Big).
\end{aligned}
\end{equation}
The sum splits into several cases, depending on the overlap of $\boldsymbol{i}$ and $\boldsymbol{j}$. 
The term where $\boldsymbol{i}$ and $\boldsymbol{j}$ do not overlap 
equals zero, since edges between vertices that do not overlap are independent. 

Now suppose $\boldsymbol{i}$ and $\boldsymbol{j}$ overlap at $i_{t_1},\dots i_{t_r}$ and $j_{s_1},\dots ,j_{s_r}$ for some $r>0$. Then $\Prob{H_{\boldsymbol{i}},H_{\boldsymbol{j}} \text{ present}}$ is equal to the probability that motif $\tilde{H}$ is present on vertices $i_1,\dots,i_k,j_1,\dots,j_k\setminus j_{s_1},\dots ,j_{s_r}$, where $\tilde{H}$ denotes the motif that is constructed by merging two copies of $H$ at $i_{t_1}$ with $j_{s_1}$, at $i_{t_2}$ with $j_{s_2}$ and so on. Thus, this term can be written as
\begin{equation}
\sideset{}{'}{\sum}_{t_1,\dots,t_{2k-r}}\Prob{\tilde{H}_{t_1,\dots,t_{2k-r}}\text{ present}}=\Exp{N(\tilde{H})},
\end{equation}
where $\sideset{}{'}{\sum}$ denotes a sum over distinct indices. 
Furthermore, since the degrees are i.i.d.\ as well as the connection probabilities, $\Prob{H_{\boldsymbol{i}}\text{ present}}=\Exp{N(H)}/{n\choose k}$. Thus, 
\begin{equation}
\begin{aligned}[b]
& \sideset{}{'}{\sum}_{\mathclap{t_1,\dots,t_{2k-r}}}\Prob{H_{t_1,\dots,t_{k}}\text{ present}}\Prob{H_{t_{k-r},\dots,t_{2k-r}}\text{ present}}=n^{-r}\Exp{N(H)}^2O(1).
\end{aligned}
\end{equation}
Let $H_1,\dots,H_l$ denote all motifs that can be constructed by merging two copies of $H$ at at least one vertex. We can then write the variance of the motif count as (see \cite{frank1979,picard2008,matias2006,ostilli2014})
\begin{equation}\label{eq:varnh}
\begin{aligned}[b]
\Var{N(H)}&=C_1\Exp{N(H_1)}+\dots+C_l\Exp{N(H_l)}+\Exp{N(H)}^2O(n^{-1}).
\end{aligned}
\end{equation}
where $C_i$ is a combinatorial constant that denotes the number of distinct ways to merge two copies of $H$ into $H_i$. These constants satisfy~\cite{frank1979}
\begin{equation}
\sum_{i=1}^{l}C_i=\sum_{s=0}^{k-1}{ \binom{k}{s}}^2(k-s)!.
\end{equation}

\bibliographystyle{unsrt}
\bibliography{../references}

\begin{thebibliography}{10}

\bibitem{albert1999}
R{\'e}ka Albert, Hawoong Jeong, and Albert-L{\'a}szl{\'o} Barab{\'a}si.
\newblock Internet: {D}iameter of the world-wide web.
\newblock {\em Nature}, 401(6749):130--131, 1999.

\bibitem{faloutsos1999}
Michalis Faloutsos, Petros Faloutsos, and Christos Faloutsos.
\newblock On power-law relationships of the internet topology.
\newblock In {\em ACM SIGCOMM Computer Communication Review}, volume~29, pages
  251--262. ACM, 1999.

\bibitem{jeong2000}
Hawoong Jeong, B{\'a}lint Tombor, R{\'e}ka Albert, Zoltan~N Oltvai, and A-L
  Barab{\'a}si.
\newblock The large-scale organization of metabolic networks.
\newblock {\em Nature}, 407(6804):651--654, 2000.

\bibitem{vazquez2002}
Alexei V{\'a}zquez, Romualdo Pastor-Satorras, and Alessandro Vespignani.
\newblock Large-scale topological and dynamical properties of the internet.
\newblock {\em Phys. Rev. E}, 65:066130, 2002.

\bibitem{clauset2009}
Aaron Clauset, Cosma~Rohilla Shalizi, and M.~E.~J. Newman.
\newblock Power-law distributions in empirical data.
\newblock {\em SIAM Rev.}, 51(4):661--703, 2009.

\bibitem{kashtan2004}
N.~Kashtan, S.~Itzkovitz, R.~Milo, and U.~Alon.
\newblock Efficient sampling algorithm for estimating subgraph concentrations
  and detecting network motifs.
\newblock {\em Bioinformatics}, 20(11):1746--1758, 2004.

\bibitem{grochow2007}
Joshua~A. Grochow and Manolis Kellis.
\newblock Network motif discovery using subgraph enumeration and
  symmetry-breaking.
\newblock In {\em In RECOMB}, pages 92--106, 2007.

\bibitem{omidi2009}
Saeed Omidi, Falk Schreiber, and Ali Masoudi-Nejad.
\newblock {MODA}: An efficient algorithm for network motif discovery in
  biological networks.
\newblock {\em Genes {\&} Genetic Systems}, 84(5):385--395, 2009.

\bibitem{schreiber2005}
F.~Schreiber and H.~Schwobbermeyer.
\newblock {MAVisto}: a tool for the exploration of network motifs.
\newblock {\em Bioinformatics}, 21(17):3572--3574, 2005.

\bibitem{wernicke2006}
Sebastian Wernicke and Florian Rasche.
\newblock Fanmod: a tool for fast network motif detection.
\newblock {\em Bioinformatics}, 22(9):1152--1153, 2006.

\bibitem{house2009}
Thomas House, Geoffrey Davies, Leon Danon, and Matt~J. Keeling.
\newblock A motif-based approach to network epidemics.
\newblock {\em Bull. Math. Biol.}, 71(7):1693--1706, 2009.

\bibitem{noel2013}
Pierre-Andr{\'{e}} Noël, Antoine Allard, Laurent H{\'{e}}bert-Dufresne,
  Vincent Marceau, and Louis~J. Dub{\'{e}}.
\newblock Spreading dynamics on complex networks: a general stochastic
  approach.
\newblock {\em J. Math. Biol.}, 69(6-7):1627--1660, 2013.

\bibitem{zhang2014}
June Zhang and Jose~M.F. Moura.
\newblock Subgraph density and epidemics over networks.
\newblock In {\em 2014 {IEEE} International Conference on Acoustics, Speech and
  Signal Processing ({ICASSP})}. {IEEE}, 2014.

\bibitem{ritchie2015}
Martin Ritchie, Luc Berthouze, and Istvan~Z. Kiss.
\newblock Beyond clustering: mean-field dynamics on networks with arbitrary
  subgraph composition.
\newblock {\em J. Math. Biol.}, 72(1-2):255--281, 2015.

\bibitem{ritchie2016}
Martin Ritchie, Luc Berthouze, and Istvan~Z. Kiss.
\newblock Generation and analysis of networks with a prescribed degree sequence
  and subgraph family: higher-order structure matters.
\newblock {\em J. Complex Networks}, page cnw011, 2016.

\bibitem{stegehuis2016}
Clara Stegehuis, Remco {\swap{Hofstad}{van der }}, and Johan S.~H.
  {\swap{Leeuwaarden}{van }}.
\newblock Epidemic spreading on complex networks with community structures.
\newblock {\em Sci. Rep.}, 6:29748, 2016.

\bibitem{ravasz2003}
Erzs\'ebet Ravasz and Albert-L\'aszl\'o Barab\'asi.
\newblock Hierarchical organization in complex networks.
\newblock {\em Phys. Rev. E}, 67:026112, 2003.

\bibitem{benson2016}
Austin~R Benson, David~F Gleich, and Jure Leskovec.
\newblock Higher-order organization of complex networks.
\newblock {\em Science}, 353(6295):163--166, 2016.

\bibitem{tsourakakis2017}
Charalampos~E. Tsourakakis, Jakub Pachocki, and Michael Mitzenmacher.
\newblock Scalable motif-aware graph clustering.
\newblock In {\em Proceedings of the 26th International Conference on World
  Wide Web}, WWW '17, pages 1451--1460, 2017.

\bibitem{milo2002}
Ron Milo, Shai Shen-Orr, Shalev Itzkovitz, Nadav Kashtan, Dmitri Chklovskii,
  and Uri Alon.
\newblock Network motifs: Simple building blocks of complex networks.
\newblock {\em Science}, 298(5594):824--827, 2002.

\bibitem{wuchty2003}
Stephan Wuchty, Zolt{\'a}n~N Oltvai, and Albert-L{\'a}szl{\'o} Barab{\'a}si.
\newblock Evolutionary conservation of motif constituents in the yeast protein
  interaction network.
\newblock {\em Nat. Genet.}, 35(2):176--179, 2003.

\bibitem{milo2004}
Ron Milo, Shalev Itzkovitz, Nadav Kashtan, Reuven Levitt, Shai Shen-Orr, Inbal
  Ayzenshtat, Michal Sheffer, and Uri Alon.
\newblock Superfamilies of evolved and designed networks.
\newblock {\em Science}, 303(5663):1538--1542, 2004.

\bibitem{onnela2005}
Jukka-Pekka Onnela, Jari Saram\"aki, J\'anos Kert\'esz, and Kimmo Kaski.
\newblock Intensity and coherence of motifs in weighted complex networks.
\newblock {\em Phys. Rev. E}, 71:065103, 2005.

\bibitem{gao2017}
C.~{Gao} and J.~{Lafferty}.
\newblock Testing network structure using relations between small subgraph
  probabilities.
\newblock {\em arXiv:1704.06742}, 2017.

\bibitem{goh2001}
K.-I. Goh, B.~Kahng, and D.~Kim.
\newblock Universal behavior of load distribution in scale-free networks.
\newblock {\em Physical Review Letters}, 87(27), dec 2001.

\bibitem{boguna2003}
Mari\'an Bogu\~n\'a and Romualdo Pastor-Satorras.
\newblock Class of correlated random networks with hidden variables.
\newblock {\em Phys. Rev. E}, 68:036112, 2003.

\bibitem{park2004}
Juyong Park and M.~E.~J. Newman.
\newblock Statistical mechanics of networks.
\newblock {\em Phys. Rev. E}, 70:066117, 2004.

\bibitem{norros2006}
Ilkka Norros and Hannu Reittu.
\newblock On a conditionally poissonian graph process.
\newblock {\em Adv. Appl. Probab.}, 38(01):59--75, 2006.

\bibitem{chung2002}
F.~Chung and L.~Lu.
\newblock The average distances in random graphs with given expected degrees.
\newblock {\em Proc. Natl. Acad. Sci. USA}, 99(25):15879--15882, 2002.

\bibitem{soederberg2002}
Bo~Söderberg.
\newblock General formalism for inhomogeneous random graphs.
\newblock {\em Physical Review E}, 66(6), dec 2002.

\bibitem{caldarelli2002}
G.~Caldarelli, A.~Capocci, P.~De~Los Rios, and M.~A. Mu{\~{n}}oz.
\newblock Scale-free networks from varying vertex intrinsic fitness.
\newblock {\em Physical Review Letters}, 89(25), dec 2002.

\bibitem{stegehuis2017}
Clara Stegehuis, Remco van~der Hofstad, Johan S.~H. van Leeuwaarden, and A.~J.
  E.~M Janssen.
\newblock Clustering spectrum of scale-free networks.
\newblock {\em Phys. Rev. E}, 96(4):042309, 2017.

\bibitem{wiseman1995}
Shai Wiseman and Eytan Domany.
\newblock Lack of self-averaging in critical disordered systems.
\newblock {\em Phys. Rev. E}, 52(4):3469--3484, 1995.

\bibitem{wiseman1998}
Shai Wiseman and Eytan Domany.
\newblock Finite-size scaling and lack of self-averaging in critical disordered
  systems.
\newblock {\em Phys. Rev. Lett.}, 81(1):22--25, 1998.

\bibitem{aharony1996}
Amnon Aharony and A.~Brooks Harris.
\newblock Absence of self-averaging and universal fluctuations in random
  systems near critical points.
\newblock {\em Phys. Rev. Lett.}, 77(18):3700--3703, 1996.

\bibitem{das2017}
Moupriya Das and Jason~R. Green.
\newblock Self-averaging fluctuations in the chaoticity of simple fluids.
\newblock {\em Phys. Rev. Lett.}, 119(11), 2017.

\bibitem{ostilli2014}
Massimo Ostilli.
\newblock Fluctuation analysis in complex networks modeled by hidden-variable
  models: Necessity of a large cutoff in hidden-variable models.
\newblock {\em Phys. Rev. E}, 89:022807, 2014.

\bibitem{pastur1991}
L.~A. Pastur and M.~V. Shcherbina.
\newblock Absence of self-averaging of the order parameter in the
  sherrington-kirkpatrick model.
\newblock {\em J. Statist. Phys.}, 62(1-2):1--19, 1991.

\bibitem{labini2009}
F.~Sylos Labini, N.~L. Vasilyev, L.~Pietronero, and Y.~V. Baryshev.
\newblock Absence of self-averaging and of homogeneity in the large-scale
  galaxy distribution.
\newblock {\em {EPL} (Europhysics Letters)}, 86(4):49001, 2009.

\bibitem{frank1979}
Ove Frank.
\newblock Moment properties of subgraph counts in stochastic graphs.
\newblock {\em Ann. New York Acad. Sci.}, 319(1):207--218, 1979.

\bibitem{picard2008}
Franck Picard, J-J Daudin, Michel Koskas, Sophie Schbath, and Stephane Robin.
\newblock Assessing the exceptionality of network motifs.
\newblock {\em J. Comput. Biol.}, 15(1):1--20, 2008.

\bibitem{matias2006}
Catherine Matias, Sophie Schbath, Etienne Birmel{\'e}, Jean-Jacques Daudin, and
  St{\'e}phane Robin.
\newblock Network motifs: mean and variance for the count.
\newblock {\em Revstat}, 4(1):31--51, 2006.

\bibitem{hofstad2017d}
Remco {\swap{Hofstad}{van der }}, Johan S.~H. van Leeuwaarden, and Clara
  Stegehuis.
\newblock Optimal subgraph structures in scale-free configuration models.
\newblock {\em arXiv:1709.03466}, 2017.

\bibitem{snap}
Jure Leskovec and Andrej Krevl.
\newblock {SNAP Datasets}: {Stanford} large network dataset collection.
\newblock \url{http://snap.stanford.edu/data}, 2014.
\newblock Date of access: 14/03/2017.

\bibitem{klimt2004}
Bryan Klimt and Yiming Yang.
\newblock Introducing the {E}nron {C}orpus.
\newblock In {\em CEAS}, 2004.

\bibitem{boguna2004a}
Mari{\'{a}}n Bogu{\~{n}}{\'{a}}, Romualdo Pastor-Satorras, Albert
  D{\'{\i}}az-Guilera, and Alex Arenas.
\newblock Models of social networks based on social distance attachment.
\newblock {\em Phys. Rev. E}, 70(5), 2004.

\bibitem{friedrich2015}
Tobias Friedrich and Anton Krohmer.
\newblock Cliques in hyperbolic random graphs.
\newblock In {\em INFOCOM proceedings 2015}, pages 1544--1552. IEEE, 2015.

\bibitem{hofstad2014}
Remco {\swap{Hofstad}{van der }} and Nelly Litvak.
\newblock Degree-degree dependencies in random graphs with heavy-tailed
  degrees.
\newblock {\em Internet Mathematics}, 10(3-4):287--334, 2014.

\bibitem{stegehuiscode2019}
Clara Stegehuis.
\newblock Dominant motif structures, code for solving optimization problem (6),
  January 2019.

\end{thebibliography}

\newpage

\beginsupplementnote

\section{Additional motif structures}\label{sec:motiftyp5}
Figure~\ref{fig:motif5} presents the typical dominant motif structures for all connected motifs on 5 vertices.

\begin{figure*}[htb]
	\centering
	\begin{subfigure}[t]{0.13\linewidth}
		\centering
		\begin{tikzpicture}
		\tikzstyle{edge} = [draw,thick,-]
		\node[S3] (a) at (90:0.8) {};
		\node[S3] (b) at (162:0.8) {};
		\node[S3] (c) at (234:0.8) {};
		\node[S3] (d) at (306:0.8) {};
		\node[S3] (e) at (378:0.8) {};
		\draw[edge] (a)--(b);
		\draw[edge] (c)--(b);
		\draw[edge] (d)--(b);
		\draw[edge] (a)--(c);
		\draw[edge] (a)--(d);
		\draw[edge] (c)--(e);
		\draw[edge] (e)--(d);
		\draw[edge] (b)--(e);
		\draw[edge] (a)--(e);
		\draw[edge] (c)--(d);
		\end{tikzpicture}	
		\caption{$n^{\frac{5}{2}(3-\tau)}$}
	\end{subfigure}
	\begin{subfigure}[t]{0.13\linewidth}
		\centering
		\begin{tikzpicture}
		\tikzstyle{edge} = [draw,thick,-]
		\node[S3] (a) at (90:0.8) {};
		\node[S3] (b) at (162:0.8) {};
		\node[S3] (c) at (234:0.8) {};
		\node[S3] (d) at (306:0.8) {};
		\node[S3] (e) at (378:0.8) {};
		\draw[edge] (a)--(b);
		\draw[edge] (c)--(b);
		\draw[edge] (d)--(b);
		\draw[edge] (a)--(c);
		\draw[edge] (a)--(d);
		\draw[edge] (c)--(e);
		\draw[edge] (e)--(d);
		\draw[edge] (b)--(e);
		\draw[edge] (a)--(e);
		\end{tikzpicture}	
		\caption{$n^{\frac{5}{2}(3-\tau)}$}
	\end{subfigure}
	\begin{subfigure}[t]{0.13\linewidth}
		\centering
		\begin{tikzpicture}
		\tikzstyle{edge} = [draw,thick,-]
		\node[S3] (a) at (0,0) {};
		\node[S3] (b) at (1,0) {};
		\node[S3] (c) at (0,1) {};
		\node[S3] (d) at (1,1) {};
		\node[S3] (e) at (0.5,1.5) {};
		\draw[edge] (a)--(b);
		\draw[edge] (c)--(b);
		\draw[edge] (d)--(b);
		\draw[edge] (a)--(c);
		\draw[edge] (a)--(d);
		\draw[edge] (c)--(e);
		\draw[edge] (d)--(e);
		\draw[edge] (c)--(d);
		\end{tikzpicture}	
		\caption{$n^{\frac{5}{2}(3-\tau)}$}
	\end{subfigure}
	\begin{subfigure}[t]{0.13\linewidth}
		\centering
		\begin{tikzpicture}
		\tikzstyle{edge} = [draw,thick,-]
		\node[S3] (a) at (0,0) {};
		\node[S3] (b) at (1,0) {};
		\node[S3] (c) at (0,1) {};
		\node[S3] (d) at (1,1) {};
		\node[S3] (e) at (0.5,0.5) {};
		\draw[edge] (a)--(b);
		\draw[edge] (c)--(a);
		\draw[edge] (b)--(e);
		\draw[edge] (a)--(e);
		\draw[edge] (b)--(d);
		\draw[edge] (c)--(e);
		\draw[edge] (d)--(e);
		\draw[edge] (c)--(d);
		\end{tikzpicture}	
		\caption{$n^{\frac{5}{2}(3-\tau)}$}
	\end{subfigure}
	\begin{subfigure}[t]{0.13\linewidth}
		\centering
		\begin{tikzpicture}
		\tikzstyle{edge} = [draw,thick,-]
		\node[S3] (a) at (0,0) {};
		\node[S3] (b) at (1,0) {};
		\node[S3] (c) at (0,1) {};
		\node[S3] (d) at (1,1) {};
		\node[n1] (e) at (0.5,1.5) {};
		\draw[edge] (a)--(b);
		\draw[edge] (c)--(b);
		\draw[edge] (d)--(b);
		\draw[edge] (a)--(c);
		\draw[edge] (a)--(d);
		\draw[edge] (c)--(e);
		\draw[edge] (c)--(d);
		\end{tikzpicture}	
		\caption{$n^{\frac{13}{2}-2\tau}$}
	\end{subfigure}
	\begin{subfigure}[t]{0.13\linewidth}
		\centering
		\begin{tikzpicture}
		\tikzstyle{edge} = [draw,thick,-]
		\node[S3] (a) at (0,0) {};
		\node[S3] (b) at (1,0) {};
		\node[S3] (c) at (0,1) {};
		\node[S3] (d) at (1,1) {};
		\node[S3] (e) at (0.5,0.5) {};
		\draw[edge] (a)--(b);
		\draw[edge] (c)--(a);
		\draw[edge] (b)--(e);
		\draw[edge] (b)--(d);
		\draw[edge] (c)--(e);
		\draw[edge] (c)--(d);
		\draw[edge] (e)--(d);
		\end{tikzpicture}	
		\caption{$n^{\frac{5}{2}(3-\tau)}$}
	\end{subfigure}	
	\begin{subfigure}[t]{0.13\linewidth}
		\centering
		\begin{tikzpicture}
		\tikzstyle{edge} = [draw,thick,-]
		\node[S3] (a) at (0,0) {};
		\node[S3] (b) at (1,0) {};
		\node[S3] (c) at (0,1) {};
		\node[S3] (d) at (1,1) {};
		\node[S3] (e) at (0.5,1.5) {};
		\draw[edge] (a)--(b);
		\draw[edge] (c)--(b);
		\draw[edge] (d)--(b);
		\draw[edge] (a)--(c);
		\draw[edge] (c)--(e);
		\draw[edge] (d)--(e);
		\draw[edge] (c)--(d);
		\end{tikzpicture}	
		\caption{$n^{\frac{5}{2}(3-\tau)}$}
	\end{subfigure}
	
	\begin{subfigure}[t]{0.13\linewidth}
		\centering
		\begin{tikzpicture}
		\tikzstyle{edge} = [draw,thick,-]
		\node[S1] (a) at (0,0) {};
		\node[S1] (b) at (1,0) {};
		\node[S2m] (c) at (0,1) {};
		\node[S2m] (d) at (1,1) {};
		\node[S1] (e) at (0.5,1.5) {};
		\draw[edge] (d)--(b);
		\draw[edge] (c)--(b);
		\draw[edge] (d)--(b);
		\draw[edge] (a)--(c);
		\draw[edge] (a)--(d);
		\draw[edge] (c)--(e);
		\draw[edge] (d)--(e);
		\draw[edge] (c)--(d);
		\end{tikzpicture}	
		\caption{$n^{9-3\tau}$}
		\label{mot:fig:K23}
	\end{subfigure}
	\begin{subfigure}[t]{0.13\linewidth}
		\centering
		\begin{tikzpicture}
		\tikzstyle{edge} = [draw,thick,-]
		\node (a) at (0,0) {};
		\node (b) at (1,0) {};
		\node (c) at (0,1) {};
		\node (d) at (1,1) {};
		\node (e) at (0.5,0.5) {};
		\draw[edge] (a)--(b);
		\draw[edge] (e)--(a);
		\draw[edge] (b)--(e);
		\draw[edge] (e)--(d);
		\draw[edge] (c)--(e);
		\draw[edge] (c)--(d);
		\end{tikzpicture}	
		\caption{depends on $\tau$}
		\label{mot:fig:bowtie}
	\end{subfigure}
	\begin{subfigure}[t]{0.13\linewidth}
		\centering
		\begin{tikzpicture}
		\tikzstyle{edge} = [draw,thick,-]
		\node[S3] (a) at (0,0) {};
		\node[S3] (b) at (1,0) {};
		\node[S3] (c) at (0,1) {};
		\node[S3] (d) at (1,1) {};
		\node[n1] (e) at (0.5,1.5) {};
		\draw[edge] (a)--(b);
		\draw[edge] (d)--(b);
		\draw[edge] (a)--(c);
		\draw[edge] (a)--(d);
		\draw[edge] (c)--(e);
		\draw[edge] (c)--(d);
		\end{tikzpicture}	
		\caption{$n^{\frac{13}{2}-2\tau}$}
	\end{subfigure}
	\begin{subfigure}[t]{0.13\linewidth}
		\centering
		\begin{tikzpicture}
		\tikzstyle{edge} = [draw,thick,-]
		\node[S1] (a) at (0,0) {};
		\node[S2m] (b) at (1,0) {};
		\node[S2m] (c) at (0,1) {};
		\node[S1] (d) at (1,1) {};
		\node[n1] (e) at (0.5,1.5) {};
		\draw[edge] (a)--(b);
		\draw[edge] (d)--(b);
		\draw[edge] (a)--(c);
		\draw[edge] (b)--(c);
		\draw[edge] (c)--(e);
		\draw[edge] (c)--(d);
		\end{tikzpicture}	
		\caption{$n^{6-2\tau-\frac{1}{\tau-1}}$}
	\end{subfigure}
	\begin{subfigure}[t]{0.13\linewidth}
		\centering
		\begin{tikzpicture}
		\tikzstyle{edge} = [draw,thick,-]
		\node[S3] (a) at (0,0) {};
		\node[S3] (b) at (1,0) {};
		\node[S3] (c) at (0,1) {};
		\node[S3] (d) at (1,1) {};
		\node[S3] (e) at (0.5,1.5) {};
		\draw[edge] (a)--(b);
		\draw[edge] (d)--(b);
		\draw[edge] (a)--(c);
		\draw[edge] (e)--(d);
		\draw[edge] (c)--(e);
		\draw[edge] (c)--(d);
		\end{tikzpicture}	
		\caption{$n^{\frac{5}{2}(3-\tau)}$}
	\end{subfigure}
	\begin{subfigure}[t]{0.13\linewidth}
		\centering
		\begin{tikzpicture}
		\tikzstyle{edge} = [draw,thick,-]
		\node[S1] (a) at (0,0) {};
		\node[S2m] (b) at (1,0) {};
		\node[S2m] (c) at (0,1) {};
		\node[S1] (d) at (1,1) {};
		\node[S1] (e) at (0.5,0.5) {};
		\draw[edge] (a)--(b);
		\draw[edge] (c)--(a);
		\draw[edge] (b)--(e);
		\draw[edge] (b)--(d);
		\draw[edge] (c)--(e);
		\draw[edge] (c)--(d);
		\end{tikzpicture}	
		\caption{$n^{9-3\tau}$}
		\label{mot:fig:m5dom}
	\end{subfigure}
	\begin{subfigure}[t]{0.13\linewidth}
		\centering
		\begin{tikzpicture}
		\tikzstyle{edge} = [draw,thick,-]
		\node (a) at (0,0) {};
		\node (b) at (1,0) {};
		\node (c) at (0.5,1) {};
		\node[n1] (d) at (0.5,1.5) {};
		\node (e) at (0.5,0.5) {};
		\draw[edge] (a)--(b);
		\draw[edge] (e)--(a);
		\draw[edge] (b)--(e);
		\draw[edge] (e)--(c);
		\draw[edge] (c)--(d);
		\end{tikzpicture}	
		\caption{depends on $\tau$}
		\label{mot:fig:dependtau}
	\end{subfigure}
	
	\begin{subfigure}[t]{0.13\linewidth}
		\centering
		\begin{tikzpicture}
		\tikzstyle{edge} = [draw,thick,-]
		\node[S3] (a) at (90:0.8) {};
		\node[S3] (b) at (162:0.8) {};
		\node[S3] (c) at (234:0.8) {};
		\node[S3] (d) at (306:0.8) {};
		\node[S3] (e) at (378:0.8) {};
		\draw[edge] (a)--(b);
		\draw[edge] (c)--(b);
		\draw[edge] (d)--(c);
		\draw[edge] (a)--(e);
		\draw[edge] (d)--(e);
		\end{tikzpicture}	
		\caption{$n^{\frac{5}{2}(3-\tau)}$}
	\end{subfigure}
	\begin{subfigure}[t]{0.13\linewidth}
		\centering
		\begin{tikzpicture}
		\tikzstyle{edge} = [draw,thick,-]
		\node[S1] (a) at (0,0) {};
		\node[S1] (b) at (1,0) {};
		\node[n1] (c) at (0,1) {};
		\node[n1] (d) at (1,1) {};
		\node[S2m] (e) at (0.5,0.5) {};
		\draw[edge] (a)--(b);
		\draw[edge] (e)--(a);
		\draw[edge] (b)--(e);
		\draw[edge] (e)--(d);
		\draw[edge] (c)--(e);
		\end{tikzpicture}	
		\caption{$n^{7-2\tau}$}
	\end{subfigure}
	\begin{subfigure}[t]{0.13\linewidth}
		\centering
		\begin{tikzpicture}
		\tikzstyle{edge} = [draw,thick,-]
		\node[S2m] (a) at (0,0) {};
		\node[S2m] (b) at (1,0) {};
		\node[n1] (c) at (0,1) {};
		\node[n1] (d) at (1,1) {};
		\node[S1] (e) at (0.5,0.5) {};
		\draw[edge] (a)--(b);
		\draw[edge] (e)--(a);
		\draw[edge] (b)--(e);
		\draw[edge] (b)--(d);
		\draw[edge] (c)--(a);
		\end{tikzpicture}	
		\caption{$n^{3-\tau+\frac{2}{\tau-1}}$}
	\end{subfigure}
	\begin{subfigure}[t]{0.13\linewidth}
		\centering
		\begin{tikzpicture}
		\tikzstyle{edge} = [draw,thick,-]
		\node[S1] (a) at (0,0) {};
		\node[S2m] (b) at (1,0) {};
		\node[S2m] (c) at (0,1) {};
		\node[S1] (d) at (1,1) {};
		\node[n1] (e) at (0.5,1.5) {};
		\draw[edge] (a)--(b);
		\draw[edge] (d)--(b);
		\draw[edge] (a)--(c);
		\draw[edge] (c)--(e);
		\draw[edge] (c)--(d);
		\end{tikzpicture}	
		\caption{$n^{6-2\tau+\frac{1}{\tau-1}}$}
	\end{subfigure}
	\begin{subfigure}[t]{0.13\linewidth}
		\centering
		\begin{tikzpicture}
		\tikzstyle{edge} = [draw,thick,-]
		\node[S3] (a) at (0,0) {};
		\node[S3] (b) at (1,0) {};
		\node[S3] (c) at (0,1) {};
		\node[n1] (d) at (1,1) {};
		\node[n1] (e) at (0.5,1.5) {};
		\draw[edge] (a)--(b);
		\draw[edge] (d)--(b);
		\draw[edge] (a)--(c);
		\draw[edge] (c)--(e);
		\end{tikzpicture}	
		\caption{$n^{\frac{11-3\tau}{2}}$}
	\end{subfigure}
	\begin{subfigure}[t]{0.13\linewidth}
		\centering
		\begin{tikzpicture}
		\tikzstyle{edge} = [draw,thick,-]
		\node[n1] (a) at (0,0) {};
		\node[n1] (b) at (1,0) {};
		\node[S2m] (c) at (0,1) {};
		\node[S1] (d) at (1,1) {};
		\node[n1] (e) at (0.5,1.5) {};
		\draw[edge] (d)--(b);
		\draw[edge] (a)--(c);
		\draw[edge] (c)--(e);
		\draw[edge] (c)--(d);
		\end{tikzpicture}	
		\caption{$n^{4-\tau+\frac{1}{\tau-1}}$}
	\end{subfigure}
	\begin{subfigure}[t]{0.13\linewidth}
		\centering
		\begin{tikzpicture}
		\tikzstyle{edge} = [draw,thick,-]
		\node[n1] (a) at (0,0) {};
		\node[n1] (b) at (1,0) {};
		\node[n1] (c) at (0,1) {};
		\node[n1] (d) at (1,1) {};
		\node[S2m] (e) at (0.5,0.5) {};
		\draw[edge] (e)--(a);
		\draw[edge] (b)--(e);
		\draw[edge] (e)--(d);
		\draw[edge] (c)--(e);
		\end{tikzpicture}	
		\caption{$n^{\frac{4}{\tau-1}}$}
	\end{subfigure}
	
	\vspace{-0.2cm}
	\begin{subfigure}{\linewidth}
		\centering
		\begin{tikzpicture}
		\node[S2m,label={[label distance=0.05cm]0:$n^{1/(\tau-1)}$}] (a) at (5,0) {};
		\node[S3,label={[label distance=0cm]0:$\sqrt{n}$}] (b) at (3.6,0) {};
		\node[S1,label={[label distance=0.05cm]0:$n^{(\tau-2)/(\tau-1)}$}] (c) at (1,0) {};
		\node[n1,label={[label distance=0.05cm]0:$1$}] (c) at (0,0) {};
		\node[label={[label distance=0.05cm]0:non-unique}] (d) at (7,0) {};
		\end{tikzpicture}
	\end{subfigure}
	\vspace{-0.8cm}
	\caption{Typical magnitude of the number of motifs on 5 vertices. The vertex color indicates the dominant vertex degree.}
	\label{fig:motif5}
\end{figure*}
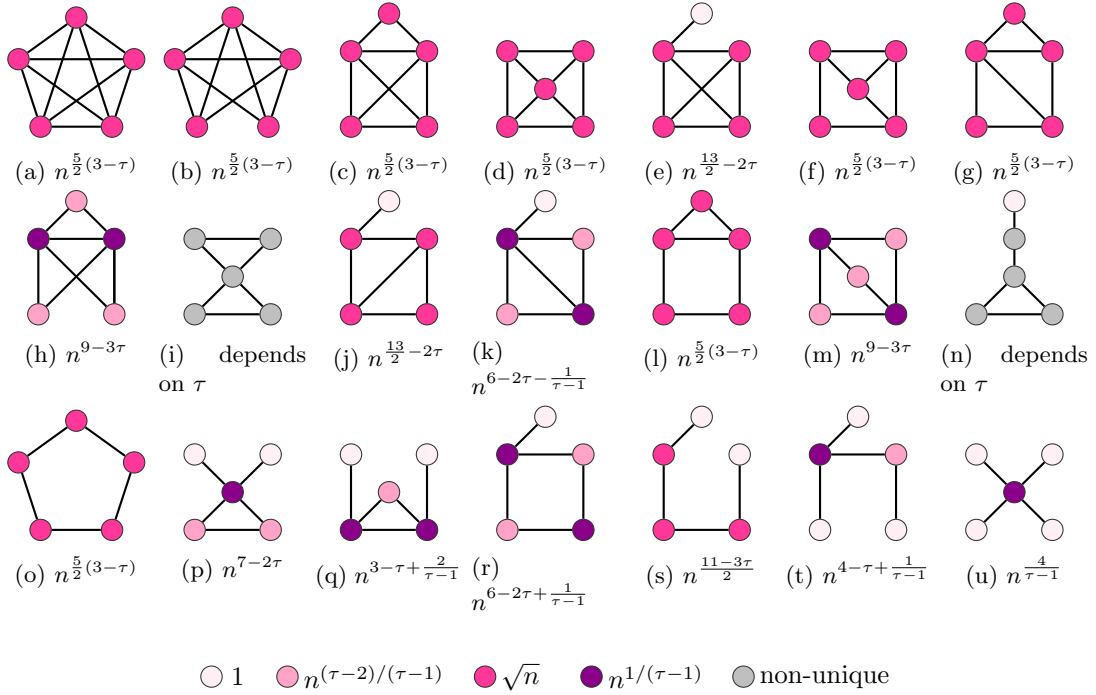

\section{Proof of Theorem \ref{thm:expmotifsx}}\label{proofthm:expmotifsx}
We now investigate the relation between the expected number of motifs and the optimization problem 
\begin{equation}\label{eq:maxeqsup}
B_f(H)=\max_{\mathcal{P}}\left[\abs{S_1}-\abs{S_2}-\frac{2E_{S_1}+E_{S_1,S_3}}{\tau-1}\right].
\end{equation}
Let $N(H,\boldsymbol{\alpha},\varepsilon)$ denote the number of times motif $H$ occurs on vertices with degrees $[\varepsilon,1/\varepsilon](n^{\alpha_i})_{i\in[k]}$ and let $\boldsymbol{\alpha}^*$ be defined as
\begin{equation}
\alpha^*_i=\begin{cases}
0 & \text {if }i\in S_1,\\
1 & \text{if }i\in S_2,\\
\tfrac{1}{2} & \text {if }i\in S_3.
\end{cases}
\end{equation}
Then, the following theorem provides a more detailed version of Theorem~\ref{thm:expmotifsx}:

\begin{theorem}[General motifs, expectation]\label{thm:expmotifs}
	Let $H$ be a motif on $k$ vertices such that the solution to~\eqref{eq:maxeqsup} is unique. Then, for any $\boldsymbol{\alpha}\neq \boldsymbol{\alpha}^*$ and $0<\varepsilon<1$,
	\begin{equation}
	\frac{\Exp{N(H,\boldsymbol{\alpha},\varepsilon)}}{\Exp{N(H,\boldsymbol{\alpha}^*,\varepsilon)}}\to 0.
	\end{equation}
	Furthermore, 
	\begin{equation}
	\frac{\Exp{N(H,\boldsymbol{\alpha}^*,\varepsilon)}}{n^{\frac{3-\tau}{2}k+\frac{\tau-2}{2}B(H)}} = f(\varepsilon)\Theta(1)
	\end{equation}
	for some function $f(\varepsilon)$ not depending on $n$. 
\end{theorem}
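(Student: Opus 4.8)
The plan is to reduce both displays to an explicit asymptotic evaluation of the integral defining $\Exp{N(H,\boldsymbol{\alpha},\varepsilon)}$ and then to invoke the classification of maximizers of~\eqref{eq:maxalphx} (Supplementary Material 2) to control the ratio. First I would write the restricted expected count, up to the combinatorial constant $\abs{\mathrm{Aut}(H)}$ that is fixed in $n$, as
\begin{equation}
\Exp{N(H,\boldsymbol{\alpha},\varepsilon)} = \frac{n!}{(n-k)!\,\abs{\mathrm{Aut}(H)}}\int \prod_{(v_i,v_j)\in E_H} p(h_i,h_j)\prod_{i=1}^{k}\rho(h_i)\,\ind{h_i\in[\varepsilon n^{\alpha_i},\,n^{\alpha_i}/\varepsilon]}\,\dd\boldsymbol{h},
\end{equation}
using that the $k$ hidden variables are i.i.d.\ and that, conditionally on them, edges appear independently with probability~\eqref{eq:conprob}. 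Since $k$ is fixed, $n!/(n-k)!=n^k(1+o(1))$. The key manipulation is the substitution $h_i=n^{\alpha_i}x_i$ with $x_i\in[\varepsilon,1/\varepsilon]$: by~\eqref{eq:rhoh} each density-and-Jacobian factor becomes $C\,n^{(1-\tau)\alpha_i}x_i^{-\tau}\,\dd x_i$, producing the global power $n^{(1-\tau)\sum_i\alpha_i}$, while~\eqref{eq:conprob} gives $p(h_i,h_j)=\min(n^{\alpha_i+\alpha_j-1}x_ix_j/\mu,\,1)$.

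Next I would split the edges by the sign of $\alpha_i+\alpha_j-1$. For $\alpha_i+\alpha_j>1$ the argument of the minimum diverges, so once $n^{\alpha_i+\alpha_j-1}\varepsilon^2/\mu\ge1$ the factor equals $1$ identically on the box; for $\alpha_i+\alpha_j<1$ it equals $n^{\alpha_i+\alpha_j-1}x_ix_j/\mu$, contributing $n^{\alpha_i+\alpha_j-1}$; and for $\alpha_i+\alpha_j=1$ it equals $\min(x_ix_j/\mu,1)=\Theta(1)$. Collecting the powers of $n$ yields leading exponent
\begin{equation}
E(\boldsymbol{\alpha}):=k+(1-\tau)\sum_i\alpha_i+\sum_{(i,j)\in E_H:\,\alpha_i+\alpha_j<1}(\alpha_i+\alpha_j-1),
\end{equation}
which is exactly $k$ plus~\eqref{eq:maxalphx}. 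The leftover integral over $x\in[\varepsilon,1/\varepsilon]^k$ has a continuous, strictly positive integrand (bounded below by $\min(\varepsilon^2/\mu,1)$ on each $\min$-factor) on a compact box, hence is a finite positive number that, for $n$ large, no longer depends on $n$; this is the asserted $f(\varepsilon)\Theta(1)$. Because the optimizer is assumed unique, there is no continuum of equally-dominant $\boldsymbol{\alpha}$ to integrate over, which is precisely what rules out the logarithmic corrections that arise in the non-unique case (cf.\ Lemma~\ref{lem:diamond}).

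Specialising to $\boldsymbol{\alpha}^*$ and substituting $\abs{S_3}=k-\abs{S_1}-\abs{S_2}$ and $\sum_i\alpha^*_i=\abs{S_2}+\tfrac12\abs{S_3}$, together with the observation that the only edges with $\alpha^*_i+\alpha^*_j<1$ are those inside $S_1$ (each of weight $-1$) and those between $S_1$ and $S_3$ (each of weight $-\tfrac12$), a short computation gives
\begin{equation}
E(\boldsymbol{\alpha}^*)=\tfrac{3-\tau}{2}k+\tfrac{\tau-1}{2}\Big[\abs{S_1}-\abs{S_2}-\tfrac{2E_{S_1}+E_{S_1,S_3}}{\tau-1}\Big]=\tfrac{3-\tau}{2}k+\tfrac{\tau-1}{2}B_f(H),
\end{equation}
the exponent of Theorem~\ref{thm:expmotifsx}; this establishes the second display.

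Finally, for the first display I would use the Supplementary Material result that every maximizer of $E$ over $\boldsymbol{\alpha}\ge0$ has all coordinates in $\{0,\tfrac12,1\}$, so that by the hypothesised uniqueness of the optimal partition in~\eqref{eq:maxeqsup}, $\boldsymbol{\alpha}^*$ is the \emph{unique} global maximizer. For a fixed $\boldsymbol{\alpha}\neq\boldsymbol{\alpha}^*$ this yields a gap $\delta:=E(\boldsymbol{\alpha}^*)-E(\boldsymbol{\alpha})>0$, and since numerator and denominator are each of the form (a constant in $n$) times $n^{E(\cdot)}$, the ratio is $\Theta(n^{-\delta})\to0$. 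I expect this last, optimization-theoretic step to be the main obstacle: one must guarantee that no $\boldsymbol{\alpha}$ with a non-dyadic coordinate can \emph{tie} $\boldsymbol{\alpha}^*$, which is exactly why the restriction of all maximizers to $\{0,\tfrac12,1\}$ — a structural statement about a piecewise-linear function on the polyhedron $\{\boldsymbol{\alpha}\ge0\}$ — is indispensable; by contrast the integral estimates above are routine once compactness of the box is in hand.
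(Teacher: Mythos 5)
Your route is the same as the paper's: reduce $\Exp{N(H,\boldsymbol{\alpha},\varepsilon)}$ to $\Theta\!\left(n^{E(\boldsymbol{\alpha})}\right)$ with $E(\boldsymbol{\alpha})=k+(1-\tau)\sum_i\alpha_i+\sum_{(i,j)\in E_H:\,\alpha_i+\alpha_j<1}(\alpha_i+\alpha_j-1)$, then maximize the exponent and identify the maximizer via the $\{0,\tfrac12,1\}$ classification. Your handling of the integral is in fact more explicit than the paper's (the substitution $h_i=n^{\alpha_i}x_i$, the three cases $\alpha_i+\alpha_j<1$, $=1$, $>1$, and the compactness argument producing the $n$-independent factor $f(\varepsilon)$ are all correct), and your final exponent $\tfrac{3-\tau}{2}k+\tfrac{\tau-1}{2}B_f(H)$ agrees with what the paper actually proves; the $\tfrac{\tau-2}{2}B(H)$ appearing in the theorem statement is a typo, as both the main-text Theorem~\ref{thm:expmotifsx} and the end of the supplementary proof carry $\tfrac{\tau-1}{2}B_f(H)$.

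The gap is in the step you yourself flag as the main obstacle. You invoke, as a citable black box, the claim that \emph{every} maximizer of $E$ over $\boldsymbol{\alpha}\geq 0$ has all coordinates in $\{0,\tfrac12,1\}$. That claim is false in general: for the diamond motif the maximizers form a one-parameter family indexed by $\beta\in[\tfrac12,1]$ (this is exactly the situation treated in the proof of Lemma~\ref{lem:diamond} and in Supplementary Note 6), so non-dyadic maximizers can and do occur. What the paper proves, and what your argument actually needs, is the conditional statement of Lemma~\ref{lem:maxmotif}: if the maximizer of $E$ is unique then it lies in $\{0,\tfrac12,1\}^k$, and if it is not unique then there exist at least two maximizers in $\{0,\tfrac12,1\}^k$ — hence at least two distinct optimal partitions, contradicting the uniqueness hypothesis on~\eqref{eq:maxeqsup}. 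Only through this two-sided statement does the hypothesis "the partition problem has a unique solution" transfer to "$\boldsymbol{\alpha}^*$ is the unique global maximizer of $E$", which is what gives your gap $\delta>0$. This lemma is the mathematical core of the theorem (the paper proves it by substituting $\beta_i=\alpha_i-\tfrac12$ and perturbing the minimal nonzero $\abs{\beta_i}$ in the piecewise-linear objective), and your proposal neither proves it nor states it in a correct form; as written, the assertion that no $\boldsymbol{\alpha}\neq\boldsymbol{\alpha}^*$ can tie $\boldsymbol{\alpha}^*$ is not established. The rest of your argument goes through once this lemma is supplied.
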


We now prove Theorem~\ref{thm:expmotifs}.
In the hidden-variable model, the connection probability of vertices with hidden variables $h_i$ and $h_j$ equals
\begin{equation}\label{eq:pij}
p(h_i,h_j)=\min(h_ih_j/(\mu n),1).
\end{equation}
Assume that $h_i=\Theta(n^{\alpha_i})$ for some $\alpha_i\geq 0$ for all $i$.
Then, the probability that motif $H$ exists on vertices $(v_i)_{i\in[k]}$ with hidden variables $\boldsymbol{h} = (n^{\alpha_i})_{i\in[k]}$ satisfies
\begin{equation}\label{eq:phsub}
\begin{aligned}[b]
&\Prob{H \text{ present on }(v_1,\cdots,v_k)\mid \boldsymbol{h}=(n^{\alpha_1})_{i\in[k]}} = \thetao{\prod_{(v_i,v_j)\in E_H:\alpha_{i}+\alpha_{j}<1}n^{\alpha_{i}+\alpha_{j}-1}}.
\end{aligned}
\end{equation}
The hidden variables are an i.i.d.\ sample from a power-law distribution. Therefore,
\begin{equation}\label{eq:pdegreerange}
\begin{aligned}[b]
\Prob{h_i\in[\varepsilon,1/\varepsilon] (\mu n)^{\alpha}}& =\int_{\varepsilon (\mu n)^\alpha}^{1/\varepsilon(\mu n)^\alpha}cx^{-\tau}\dd x= K(\varepsilon)(\mu n)^{\alpha(1-\tau)}
\end{aligned}
\end{equation}
for some constant $K(\varepsilon)$ not depending on $n$. 
The expected number of vertices with degrees in $[\varepsilon,1/\varepsilon](\mu n)^\alpha$ scales as $\Theta(n^{(1-\tau)\alpha+1})$. 
Then, the number of sets of vertices with degrees in $[\varepsilon,1/\varepsilon](n^{\alpha_i})_{i\in [k]}$ scales as
\begin{equation}\label{eq:numhdeg}
\Theta\Big( n^{k+(1-\tau)\sum_i\alpha_i}\Big).
\end{equation}
Combining~\eqref{eq:phsub} and~\eqref{eq:numhdeg} yields that the contribution from vertices with degrees $n^{\boldsymbol{\alpha}}=(n^{\alpha_i})_{i\in[k]}$ to the expected number of motifs, $\Exp{N(H,\boldsymbol{\alpha}),\varepsilon}$ is
\begin{equation}\label{eq:Nalph}
\Exp{N(H,\boldsymbol{\alpha},\varepsilon)} = \thetao{ n^{k+(1-\tau)\sum_i\alpha_i} \ \ \prod_{\mathclap{(i,j)\in E_H:\alpha_i+\alpha_j<1}} \ \  n^{\alpha_1+\alpha_j-1}}.
\end{equation}
The maximum contribution is obtained for $\alpha_i$ that maximize
\begin{equation}\label{eq:maxalph}
\begin{aligned}[b]
&\max (1-\tau)\sum_{i}\alpha_i +\sum_{(i,j)\in E_H: \alpha_i+\alpha_j<1}\alpha_i+\alpha_j-1 
\end{aligned}
\end{equation} 
for $\alpha_i\geq 0$. The following lemma shows that this optimization problem attains its maximum for highly specific values of $\alpha$:

\begin{lemma}[Maximum contribution to expected number of motifs]\label{lem:maxmotif}
	Let $H$ be a connected graph on $k$ vertices. If the solution to~\eqref{eq:maxalph} is unique, then the optimal solution satisfies $\alpha_i\in\{0,\tfrac{1}{2},1\}$ for all $i$. If it is not unique, then there exist at least 2 optimal solutions with $\alpha_i\in\{0,\tfrac{1}{2},1\}$  for all $i$. 
\end{lemma}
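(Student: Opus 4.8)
The plan is to recognize \eqref{eq:maxalph} as the maximization of a \emph{concave, piecewise-linear} function and to read off its optimizers from the geometry of a hyperplane arrangement. First I would rewrite the objective as
\[
f(\boldsymbol\alpha)=(1-\tau)\sum_i\alpha_i+\sum_{(i,j)\in E_H}\min(\alpha_i+\alpha_j-1,0),
\]
since an edge contributes $\alpha_i+\alpha_j-1$ exactly when $\alpha_i+\alpha_j<1$ and $0$ otherwise. Each summand is concave (a linear function, respectively the minimum of a linear function and $0$), so $f$ is concave. Because $1-\tau<0$ and each $\min$-term lies in $[-1,0]$ on the orthant, $f(\boldsymbol\alpha)\to-\infty$ as $\|\boldsymbol\alpha\|\to\infty$, so the maximum is attained. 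Moreover, if some $\alpha_i>1$ then every edge at $i$ is inactive, and lowering $\alpha_i$ to $1$ keeps those edges inactive while strictly increasing $(1-\tau)\alpha_i$; hence every maximizer lies in the box $[0,1]^k$.

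Next I would exploit piecewise-linearity. On each cell of the arrangement $\mathcal A=\{\alpha_i+\alpha_j=1:(i,j)\in E_H\}$ intersected with $[0,1]^k$, the active edge set is fixed and $f$ is affine; an affine function on a polytope attains its maximum at an extreme point. Hence the global maximum is attained at a \emph{vertex} of $\mathcal A$, i.e.\ at a point $\boldsymbol\alpha^*$ determined by $k$ linearly independent binding constraints drawn from $\{\alpha_i+\alpha_j=1\}$, $\{\alpha_i=0\}$ and $\{\alpha_i=1\}$. It then suffices to prove that every such vertex is \emph{half-integral}, since the only half-integers in $[0,1]$ are $0,\tfrac12,1$.

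The heart of the argument, and the step I expect to be the main obstacle, is this half-integrality, which is the classical fact that the vertex-edge incidence matrix of a graph yields half-integral basic solutions (as for fractional vertex cover). I would prove it by a perturbation argument. Let $T=\{i:\alpha^*_i\notin\tfrac12\mathbb Z\}$ and suppose $T\neq\varnothing$. A binding box constraint forces an integer coordinate, so $T$ contains no box-bound index; a binding edge $\alpha^*_i+\alpha^*_j=1$ with exactly one endpoint in $T$ would force the other coordinate, hence both, into $\tfrac12\mathbb Z$, a contradiction, so every binding edge lies wholly inside $T$ or wholly outside it. Consider the subgraph $G_T$ of binding edges on $T$. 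If some component of $G_T$ contained an odd cycle, then traversing $\alpha^*_i+\alpha^*_{i+1}=1$ around it would force all its coordinates to equal $\tfrac12$, contradicting membership in $T$; thus every component of $G_T$ is bipartite. Choosing $\delta_i=+1$ on one side and $\delta_i=-1$ on the other side of each component, and $\delta_i=0$ off $T$, gives a nonzero direction $\boldsymbol\delta$ with $\delta_i+\delta_j=0$ across every binding edge and $\delta_i=0$ at every binding box constraint. Then $\boldsymbol\alpha^*\pm\varepsilon\boldsymbol\delta$ are both feasible and lie in the same cell for small $\varepsilon$, while $\boldsymbol\alpha^*$ is their midpoint, contradicting extremality. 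Hence $T=\varnothing$ and $\boldsymbol\alpha^*\in\{0,\tfrac12,1\}^k$.

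Finally I would settle the uniqueness dichotomy. The set $\operatorname{argmax}f$ is a bounded convex polytope whose extreme points are vertices of $\mathcal A$, hence half-integral by the above. If the maximizer is unique it is a single such vertex, giving an optimizer in $\{0,\tfrac12,1\}^k$; if it is not unique, then $\operatorname{argmax}f$ has dimension at least $1$ and therefore at least two extreme points, yielding at least two optimal solutions in $\{0,\tfrac12,1\}^k$, as claimed.
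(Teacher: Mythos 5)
Your proof is correct, and it takes a genuinely different route from the paper's. The paper substitutes $\beta_i=\alpha_i-\tfrac12$ and then works directly with the values: it considers the smallest nonzero $|\beta_i|$, call it $\tilde\beta$, observes that the total contribution of all coordinates with $|\beta_i|=\tilde\beta$ is linear in $\tilde\beta$, and argues that uniqueness forces the linear coefficient to be nonzero, so that $\tilde\beta$ could be moved up or down to improve the objective unless $\tilde\beta=\tfrac12$; non-uniqueness is handled by explicitly resetting the groups with vanishing coefficient to $0$ or to $\tfrac12$, which produces two half-integral optimizers. You instead cast the problem as maximizing a concave piecewise-linear function, localize maximizers at vertices of the arrangement $\{\alpha_i+\alpha_j=1\}$ within $[0,1]^k$, and prove half-integrality of those vertices by the classical odd-cycle/bipartite-signing perturbation familiar from fractional vertex cover; the uniqueness dichotomy then follows from Minkowski's theorem applied to the argmax polytope. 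Your route makes transparent why half-integrality appears at all (it is the half-integrality of graph incidence systems) and treats the non-unique case more rigorously than the paper does (the paper's Step 2 is terse and even cites a nonexistent ``Step 3''); the paper's scaling argument, in turn, is more elementary and self-contained, requiring no polyhedral machinery. The one place where you are terser than you should be is the assertion that extreme points of $\operatorname{argmax} f$ are vertices of the arrangement: this is true, but deserves a line of justification --- if $x$ is extreme in $M=\operatorname{argmax} f$ and lies in a closed cell $C$, then $M\cap C$ is the face of $C$ on which the affine restriction of $f$ attains its maximum, $x$ is extreme in that face, and extreme points of faces of $C$ are extreme points of $C$, so the half-integrality argument applies to $x$.
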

\begin{proof}
	Defining $\beta_i=\alpha_i-\tfrac{1}{2}$ yields for~\eqref{eq:maxalph}
	\begin{equation}\label{eq:maxeqbeta}
	\max \frac{1-\tau}{2}k+ (1-\tau)\sum_{i}\beta_i +\sum_{(i,j)\in E_H: \beta_i+\beta_j<0}\beta_i+\beta_j,
	\end{equation} 
	over all possible values of $\beta_i\geq -\tfrac{1}{2}$. Then, we have to prove that $\beta_i\in\{-\tfrac 12, 0,\tfrac{1}{2}\}$ for all $i$ in the optimal solution.
	Note that~\eqref{eq:maxeqbeta} is a piecewise linear function in $\beta$. Therefore, if~\eqref{eq:maxeqbeta} has a unique maximum, it must be attained at the boundary for $\beta_i$ or at a border of one of the linear sections. Thus, any unique optimal value of $\beta_i$ satisfies $\beta_i=-\tfrac{1}{2}$ or $\beta_i+\beta_j=0$ for some $j$. Note that this implies that $\beta_i\leq\tfrac{1}{2}$.
	We ignore the constant factor of $(1-\tau)\tfrac{k}{2} $ in~\eqref{eq:maxeqbeta}, since it does not influence the optimal $\beta$ values.
	Rewriting~\eqref{eq:maxeqbeta} without the constant factor yields
	\begin{equation}\label{eq:maxbeta}
	\max \sum_i \beta_i\left(1-\tau +\# \{\text{ edges to }j\text{ with }\beta_j<-\beta_i\}\right).
	\end{equation}
	The proof of the lemma then consists of two steps: \\
	\textit{Step 1.} Show that any unique solution does not have vertices $i$ with $\abs{\beta_i}\in(0,\tfrac{1}{2})$.\\
	\textit{Step 2.} Show that any optimal solution that is not unique can be transformed into two different optimal solutions with $\beta_i\in\{-\tfrac 12, 0,\tfrac{1}{2}\}$ for all $i$.\\

	\textit{Step 1.}
	We show that when the solution to~\eqref{eq:maxbeta} is unique, it is never optimal to have $\abs{\beta}\in(0,\tfrac{1}{2})$. 
	Let 
	\begin{equation}\label{eq:tildebeta}
	\tilde{\beta}=\min_{i:\abs{\beta_i}>0}\abs{\beta_i}.
	\end{equation}
	If $\tilde{\beta}=\tfrac{1}{2}$, then we are finished, so assume that $\tilde{\beta}<\tfrac{1}{2}$.  Then, there exist $N_{\tilde{\beta}^-}$ vertices with their $\beta$ value equal to $-\tilde{\beta}$, and $N_{\tilde{\beta}^+}$ vertices with value $\tilde{\beta}$, where $N_{\tilde{\beta}^+}+N_{\tilde{\beta}^-}\geq 1$. Furthermore, let $E_{\tilde{\beta}^-}$ denote the number of edges from vertices with value $-\tilde{\beta}$ to other vertices $j$ such that $\beta_j<\tilde{\beta}$, and $E_{\tilde{\beta}^+}$ the number of edges from vertices with value $\tilde{\beta}$ to other vertices $j$ such that $\beta_j<-\tilde{\beta}$. Then, the contribution from these vertices to~\eqref{eq:maxbeta} is
	\begin{equation}\label{eq:Nbeta}
	\tilde{\beta}\left((1-\tau)\left(N_{\tilde{\beta}^+}-N_{\tilde{\beta}^-}\right)+E_{\tilde{\beta}^+}-E_{\tilde{\beta}^-}\right).
	\end{equation}
	Because the optimizer of~\eqref{eq:maxbeta} is unique, the term inside the brackets cannot equal zero. Then, increasing $\tilde{\beta}$ if the term inside brackets is larger than zero, or decreasing it if it is smaller than zero would improve the optimal contribution, which is not possible. 
	Thus, if the optimal solution is unique, we must have $\tilde{\beta}=\tfrac{1}{2}$. This shows that any unique solution satisfies $\beta_i\in\{-\tfrac{1}{2},0,\tfrac{1}{2}\}$ for all $i$.
	
	\textit{Step 2.}
	If the solution to~\eqref{eq:maxbeta} is not unique, then by the same argument that leads to~\eqref{eq:Nbeta}, there exist $\hat{\beta}_1,\ldots,\hat{\beta}_s>0$ for some $s\geq 1$ such that  
	\begin{equation}
	\hat{\beta}_j\left((1-\tau)\left(N_{\hat{\beta}_j^+}-N_{\hat{\beta}_j^-}\right)+E_{\hat{\beta}_j^+}-E_{\hat{\beta}_j^-}\right)=0 \quad \forall j\in[s].
	\end{equation}
	Here we use the same notation as in~\eqref{eq:Nbeta}.
	All other values of $\beta$ must either be $0,\tfrac{1}{2}$ or $-\tfrac{1}{2}$ by the argument in Step 3. Thus, setting all $\hat{\beta}_j$ to zero does not change the value of the solution, and setting all $\hat{\beta}_j$ to $\tfrac{1}{2}$ also does not change the value of the solution. Thus, if the solution to~\eqref{eq:maxbeta} is not unique, at least 2 solutions exist with $\beta_i\in\{-\tfrac{1}{2},0,\tfrac{1}{2}\}$ for all $i$. 
\end{proof}

\paragraph*{Completion of the proof of Theorem~\ref{thm:expmotifs}.}
We first rewrite~\eqref{eq:maxalph} using Lemma~\ref{lem:maxmotif}. By Lemma~\ref{lem:maxmotif}, the maximal value of~\eqref{eq:maxalph} is attained by partitioning $V_H$ into the sets $S_1,S_2,S_3$ such that vertices in $S_1$ have $\alpha_i=0$, vertices in $S_2$ have $\alpha_i=1$ and vertices in $S_3$ have $\alpha_i=\tfrac{1}{2}$. Then, the edges with $\alpha_i+\alpha_j<1$ are edges inside $S_1$ and edges between $S_1$ and $S_3$. If we denote the number of edges inside $S_1$ by $E_{S_1}$ and the number of edges between $S_1$ and $S_3$ by $E_{S_1,S_3}$, then we can rewrite~\eqref{eq:maxalph} as
\begin{equation}\label{eq:maxtemp}
\begin{aligned}[b]
\max_{\mathcal{P}} \ (1-\tau)& \left(\abs{S_2}+\tfrac 12 \abs{S_3}\right)-E_{S_1}-\tfrac{1}{2}E_{S_1,S_3}
\end{aligned}
\end{equation}
over all partitions $\mathcal{P}$ of the vertices of $H$ into $S_1,S_2,S_3$. Using that $|S_3|=k-\abs{S_1}-\abs{S_2}$ yields
\begin{equation}\label{eq:maxtemp2}
\begin{aligned}[b]
\max_{\mathcal{P}} \ \frac{1-\tau}{2}k+\frac{\tau-1}{2}\left( \abs{S_1}-\abs{S_2}-\frac{2E_{S_1}+E_{S_1,S_3}}{\tau-1} \right),
\end{aligned}
\end{equation}
Since $k$ is fixed and $\tau-1>0$, maximizing~\eqref{eq:maxalph} is equivalent to maximizing
\begin{equation}\label{eq:maxeq}
B_f(H)=\max_{\mathcal{P}}\left[\abs{S_1}-\abs{S_2}-\frac{2E_{S_1}+E_{S_1,S_3}}{\tau-1}\right].
\end{equation}
Furthermore, by Lemma~\ref{lem:maxmotif}, the optimal value of~\eqref{eq:maxeq} is unique if and only if the solution to~\eqref{eq:maxalph} is unique. 

Let $\boldsymbol{\alpha}^*$ be the unique optimizer of~\eqref{eq:maxalph}. Then, by~\eqref{eq:Nalph}, for any $\boldsymbol{\alpha}\neq \boldsymbol{\alpha}^*$
\begin{equation}\label{eq:Nmaxcont}
\frac{\Exp{N(H,\boldsymbol{\alpha},\varepsilon)}}{\Exp{N(H,\boldsymbol{\alpha}^*,\varepsilon)}}=\thetao{n^{-\eta}}
\end{equation}
for some $\eta>0$. 
Combining this with~\eqref{eq:maxeq} proves the first part of the theorem. 
By~\eqref{eq:Nalph}, the contribution of the maximum is then given by 
\begin{equation}
\Exp{N(H,\boldsymbol{\alpha}^*,\varepsilon)}=n^kn^{\frac{1-\tau}{2}(k+B_f(H))}=	n^{\frac{3-\tau}{2}k+\frac{\tau-1}{2}B_f(H)},
\end{equation}
which proves the second part of the theorem. 

\section{Typical motif counts}\label{sec:motiftyp}
We now present equivalent versions of Lemma~\ref{lem:maxmotif} and Theorem~\ref{thm:expmotifs} for typical motif counts: 
\begin{lemma}[Maximum contribution to typical number of motifs]\label{lem:maxmotiftyp}
	Let $H$ be a connected graph on $k$ vertices. If the solution to~\eqref{eq:maxalph} with the extra constraint $\alpha_i\in[0,1/(\tau-1)]$ for all $i$ is unique, then the optimal solution satisfies $\alpha_i\in\{0,(\tau-2)/(\tau-1),\tfrac{1}{2},1/(\tau-1)\}$ for all $i$. Furthermore, $\alpha_i=0$ if and only if the degree of vertex $i$ in $H$ equals 1. If the solution is not unique, then there exist at least 2 optimal solutions with $\alpha_i\in\{0,(\tau-2)/(\tau-1),\tfrac{1}{2},1/(\tau-1)\}$  for all $i$. 
\end{lemma}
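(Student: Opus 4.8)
The plan is to adapt the proof of Lemma~\ref{lem:maxmotif} to the box-constrained domain $[0,1/(\tau-1)]^k$, tracking two new features: the upper bound $\alpha_i\le 1/(\tau-1)$, which replaces the kink value $1$ by the admissible value $1-1/(\tau-1)=(\tau-2)/(\tau-1)$, and the separate claim linking $\alpha_i=0$ to degree-$1$ vertices. Throughout I would use that the objective in~\eqref{eq:maxalph},
\[
f(\boldsymbol{\alpha})=(1-\tau)\sum_i\alpha_i+\sum_{(i,j)\in E_H}\min(\alpha_i+\alpha_j-1,0),
\]
is concave and piecewise linear: each edge term is a minimum of two affine functions, so $f$ is a sum of concave functions, and on every cell of the arrangement $\mathcal{A}$ generated by the box facets $\{\alpha_i=0\}$, $\{\alpha_i=1/(\tau-1)\}$ and the kink planes $\{\alpha_i+\alpha_j=1\}$ (over edges of $H$) it is affine. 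Since the domain is compact the maximum is attained, and by concavity the set of maximizers $P$ is a compact convex polytope.

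First I would settle the degree characterization by a one-sided derivative computation. For a vertex $i$ with $\alpha_i=0$, every incident edge satisfies $\alpha_i+\alpha_j=\alpha_j\le 1/(\tau-1)<1$, so the right derivative of $f$ in $\alpha_i$ equals $(1-\tau)+\deg_H(i)$. At a maximizer this must be $\le 0$, forcing $\deg_H(i)\le\tau-1<2$ and hence $\deg_H(i)=1$; conversely, if $\deg_H(i)=1$ then $f$ is strictly decreasing in $\alpha_i$ on all of $[0,1/(\tau-1)]$ (slope $2-\tau$ before the kink and $1-\tau$ after it), so coordinatewise optimality forces $\alpha_i=0$. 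This proves the equivalence and, in particular, that $\alpha_i=0$ cannot occur at a vertex of degree $\ge 2$, which corresponds to Step~1 of Lemma~\ref{lem:maxmotif}.

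For the structural claim I would show that every extreme point of $P$ is a vertex of $\mathcal{A}$: if a maximizer lay in the relative interior of a positive-dimensional face $F$ of $\mathcal{A}$, then $f|_F$ would be affine and attain its maximum in the relative interior of $F$, hence be constant on $F$, so $F\subseteq P$ and the point is not extreme. At an arrangement vertex all $k$ coordinates are pinned by active constraints of the form $\alpha_i=0$, $\alpha_i=1/(\tau-1)$, or $\alpha_i+\alpha_j=1$, and I would classify the values by inspecting the connected components of the graph of active kink edges. A coordinate at value $0$ has no active kink edge (its partner would need value $1>1/(\tau-1)$), so it is box-pinned to $0$. Any other component is either non-bipartite, where the relations $\alpha_i+\alpha_j=1$ around an odd cycle force all values to $\tfrac12$, or bipartite with alternating values $a,1-a$; being fully pinned, a bipartite component must carry a box constraint, which can only be $\alpha=1/(\tau-1)$, yielding $\{a,1-a\}=\{1/(\tau-1),(\tau-2)/(\tau-1)\}$. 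Hence every coordinate of every extreme point of $P$ lies in $\{0,(\tau-2)/(\tau-1),\tfrac12,1/(\tau-1)\}$. When $P$ is a single point this is the unique maximizer; when $P$ is higher-dimensional (the non-unique case) it has at least two extreme points, giving the asserted two solutions, paralleling Step~2 of Lemma~\ref{lem:maxmotif}.

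The main obstacle is the combinatorial bookkeeping of the active-kink components, now complicated by the upper bound: I must verify that the alternating value $1$ is never realized and is systematically replaced by $(\tau-2)/(\tau-1)$, and that bipartite components without a box pin genuinely correspond to a free direction of $P$ (the non-unique case) rather than to a spurious vertex. Care is also needed to argue that feasibility of the candidate point makes the (possibly several) box pins inside a component automatically consistent, so that no additional admissible values sneak in.
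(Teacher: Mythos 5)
Your proposal is correct, but it follows a genuinely different route from the paper. The paper never proves Lemma~\ref{lem:maxmotiftyp} directly: in the proof of Theorem~\ref{thm:expmotifstyp} it simply asserts that the claim follows along the same lines as \cite[Lemma 3.2]{hofstad2017d} for the erased configuration model, and the analogous free-variation result, Lemma~\ref{lem:maxmotif}, is proven by a scaling argument: substitute $\beta_i=\alpha_i-\tfrac12$, take the smallest nonzero $\abs{\beta_i}$, note that the objective is linear in that common scale, and conclude that a unique optimizer must have all $\abs{\beta_i}\in\{0,\tfrac12\}$, while a non-unique one can be pushed to two such configurations. You instead exploit convex-polyhedral structure: the objective is concave and piecewise linear, so the argmax is a compact convex polytope whose extreme points must be vertices of the arrangement generated by the box facets $\alpha_i=0$, $\alpha_i=1/(\tau-1)$ and the kink planes $\alpha_i+\alpha_j=1$; you then classify those vertices by the components of the active-kink graph (odd cycles force $\tfrac12$; bipartite components need a box pin, which feasibility forces to be $1/(\tau-1)$, producing the alternating pair $1/(\tau-1)$ and $(\tau-2)/(\tau-1)$; a coordinate at $0$ can carry no active kink since its partner would need value $1>1/(\tau-1)$). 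The degree-one equivalence via one-sided derivatives --- $(1-\tau)+\deg_H(i)\le 0$ at any maximizer with $\alpha_i=0$, and strict decrease in $\alpha_i$ when $\deg_H(i)=1$ --- is likewise a clean first-order argument that the paper does not spell out. The paper's route is short because it reuses existing work, and its perturbation argument constructs the two optimizers in the non-unique case explicitly; your route is self-contained, explains structurally why exactly these four values arise, and delivers the unique and non-unique cases simultaneously through extreme points. The residual worries you list at the end are already disposed of by your own observations: since you classify actual feasible points rather than abstract sign patterns, multiple box pins in one component are automatically consistent, the value $1$ is excluded by $\alpha_j\le 1/(\tau-1)<1$, and a bipartite kink component with no box pin would leave the active constraints with a one-parameter family of solutions, contradicting that an arrangement vertex is zero-dimensional.
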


Thus, the optimization problem can be translated into the optimal partition $\mathcal{P}$ of all vertices of $H$ with degree at least 2 into three sets, $S_1, S_2$ and $S_3$. The vertices in $S_1$ correspond to vertices that have degree proportional to $n^{(\tau-2)/(\tau-1)}$ in the graph, vertices in $S_2$ correspond to the maximal degree vertices with degrees proportional to $n^{1/(\tau-1)}$, and $S_3$ corresponds to the vertices of $\sqrt{n}$ degrees. 

Let $E_{S_i,1}$ denote the number of edges between vertices in $S_i$ and vertices with degree 1 in $H$. Then, rewriting the optimal solution of~\eqref{eq:maxalph} results in
\begin{equation}\label{eq:maxtemptyp}
\begin{aligned}[b]
\max_{\mathcal{P}} \ & (1-\tau)\left(\frac{\tau-2}{\tau-1}\abs{S_1}+\frac{1}{\tau-1}\abs{S_2}+\tfrac 12 \abs{S_3}\right)\frac{\tau-3}{\tau-1}E_{S_1}+\frac{\tau-3}{2(\tau-1)}E_{S_1,S_3}-\frac{E_{S_1,1}}{\tau-1} \\
& -\frac{\tau-2}{\tau-1}E_{S_2,1}-\frac 12 E_{S_3,1},
\end{aligned}
\end{equation}
over all partitions $\mathcal{P}$ of the vertices of $H$ with degree at least 2 into $S_1,S_2,S_3$. Using that $|S_3|=k-\abs{S_1}-\abs{S_2}-k_1$, $E_{S_3,1}=k_1-E_{S_1,1}-E_{S_2,1}$, where $k_1$ denotes the number of degree 1 vertices in $H$, and extracting a factor $(3-\tau)/2$ results in
\begin{equation}\label{eq:maxtemptyp2}
\begin{aligned}[b]
\max_{\mathcal{P}} \ &  \frac{1-\tau}{2}k+\frac{(3-\tau)}{2}\Big( \abs{S_1}-\abs{S_2}+\frac{\tau-2}{3-\tau} k_1-\frac{2E_{S_1}+E_{S_1,S_3}}{\tau-1}  -\frac{E_{S_1,1}-E_{S_2,1}}{\tau-1}\Big),
\end{aligned}
\end{equation}
Since $k$ and $k_1$ are fixed and $3-\tau>0$, we need to maximize
\begin{equation}\label{eq:maxeqtyp}
B_t(H)=\max_{\mathcal{P}}\abs{S_1}-\abs{S_2}-\frac{2E_{S_1}+E_{S_1,S_3}+E_{S_1,1}-E_{S_2,1}}{\tau-1}.
\end{equation}

\begin{theorem}[General motifs, typical]\label{thm:expmotifstyp}
	Let $H$ be a motif on $k$ vertices such that the solution to~\eqref{eq:maxeqtyp} is unique and let $k_1$ denote the number of vertices of degree 1 in $H$. 
	With high probability, the number of motifs $H$ in a hidden-variable model  grows asymptotically as
	\begin{equation}
	N(H) \propto n^{\frac{3-\tau}{2}(k+B_t(H))+\frac{\tau-1}{2}k_1},
	\end{equation}
	and is thus fully determined by the partition $\mathcal{P}^*$ that optimizes \eqref{eq:maxeqtyp}. 
\end{theorem}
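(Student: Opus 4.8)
The plan is to prove the statement on the exponential scale: for every fixed \(\delta>0\), with high probability \(n^{\gamma-\delta}\le N(H)\le n^{\gamma+\delta}\), where \(\gamma\) is the target exponent, namely \(k\) plus the maximum of~\eqref{eq:maxalph} under the cutoff \(\alpha_i\le 1/(\tau-1)\), as evaluated in~\eqref{eq:maxtemptyp2}. This is the right reading of ``\(N(H)\propto n^{\gamma}\) with high probability'': the argument only controls \(\log N(H)/\log n\), so constant or slowly varying prefactors are invisible. This matters because for hub-containing motifs the rescaled count genuinely fluctuates (the paper's Type III), yet its logarithm still concentrates. The skeleton mirrors the proof of Theorem~\ref{thm:expmotifs}, with the expectation replaced by a concentration statement and the extra input being the typical degree cutoff, which is exactly what converts the unconstrained optimum into the constrained problem~\eqref{eq:maxeqtyp} via Lemma~\ref{lem:maxmotiftyp}.

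\emph{Upper bound.} First I would fix the cutoff by extreme-value theory: a union bound gives \(\Prob{\max_i h_i>n^{1/(\tau-1)+\delta}}\le n\cdot\Theta\big(n^{(1/(\tau-1)+\delta)(1-\tau)}\big)=\Theta(n^{-\delta(\tau-1)})\to0\), so with high probability every hidden variable (hence, by the Poisson coupling of~\cite{stegehuis2017}, every degree) obeys \(\alpha_i\le 1/(\tau-1)\). I then mesh \([0,1/(\tau-1)]\) into \(O(1)\) exponent bands of width \(\delta\) and write \(N(H)=\sum_{\boldsymbol\alpha}N(H,\boldsymbol\alpha,\varepsilon)\) over the finitely many band assignments respecting the cutoff. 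By~\eqref{eq:pdegreerange} each band size is a binomial of mean \(\Theta(n^{1+(1-\tau)\alpha_i})\); for \(\alpha_i<1/(\tau-1)\) this diverges polynomially and a Chernoff bound pins all band sizes simultaneously at \(n^{1+(1-\tau)\alpha_i+o(1)}\). Conditioning on the hidden variables, \(\Exp{N(H,\boldsymbol\alpha,\varepsilon)\mid\boldsymbol h}\le\prod_i|V_{\alpha_i}|\prod_{(i,j)\in E_H}p(n^{\alpha_i},n^{\alpha_j})\), which by~\eqref{eq:phsub} and~\eqref{eq:numhdeg} is \(n^{e(\boldsymbol\alpha)+o(1)}\) with \(e(\boldsymbol\alpha)\) the objective~\eqref{eq:maxalph}. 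Lemma~\ref{lem:maxmotiftyp} and the algebra leading to~\eqref{eq:maxtemptyp2} give \(\max_{\boldsymbol\alpha}e(\boldsymbol\alpha)+k=\gamma\). A conditional Markov inequality then yields \(N(H,\boldsymbol\alpha,\varepsilon)\le n^{e(\boldsymbol\alpha)+k+\delta}\) with high probability, and a union bound over the \(O(1)\) assignments gives \(N(H)\le n^{\gamma+O(\delta)}\); sending \(\delta\to0\) completes this half.

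\emph{Lower bound (the main obstacle).} Here I would condition on the hidden variables and establish two things: that the conditional mean of \(N(H,\boldsymbol\alpha^*,\varepsilon)\) is \(n^{\gamma+o(1)}\) with high probability, and that \(N(H,\boldsymbol\alpha^*,\varepsilon)\) concentrates around it. Given \(\boldsymbol h\) the edge indicators are independent, so two copies of \(H\) on the optimal bands are correlated only through shared vertices; the variance is the conditional analogue of the merged-motif decomposition~\eqref{eq:varnh}, and the crux is to show that every strict overlap strips off a polynomial factor, so that the conditional variance is \(o\) of the squared conditional mean and Chebyshev applies. Crucially, this is consistent with non-self-averaging motifs: their large \emph{unconditional} fluctuations originate in the randomness of \(\boldsymbol h\) itself---essentially the Fr\'echet fluctuation of the largest degree---which is frozen once we condition, and affects only the prefactor. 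It then remains to bound the conditional mean from below, i.e.\ to exhibit enough vertices in each optimal band. The bulk bands \(S_1\), \(S_3\) and the degree-one leaves concentrate exactly as in the upper bound; for the hub band \(S_2\) at \(\alpha=1/(\tau-1)\) I use that with high probability there are \(\Theta(n^{\delta(\tau-1)})\to\infty\) vertices of degree at least \(n^{1/(\tau-1)-\delta}\), any \(|S_2|\) of which can play the role of hubs at the cost of a factor \(n^{O(\delta)}\) absorbed into the exponent. This decoupling---genuine polynomial concentration for the bulk bands versus mere logarithmic-scale control of the finitely many extreme degrees---is what makes the lower bound survive even for motifs whose rescaled count does not converge. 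Combining the two bounds and letting \(\delta\to0\) gives \(\log N(H)/\log n\to\gamma\) in probability.
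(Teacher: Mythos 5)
Your plan has the right skeleton, and it is in essence the argument that the paper itself invokes: the paper's proof of Theorem~\ref{thm:expmotifstyp} is only a porting remark, stating that the proof of the corresponding result for the erased configuration model carries over because the connection probabilities there have the same order as $\min(h_ih_j/(\mu n),1)$ in the hidden-variable model; that ported proof is exactly a conditioning-on-degrees, band-decomposition, first/second-moment scheme like yours. Your upper bound (extreme-value cutoff, $O(1)$ band assignments, conditional Markov plus a union bound) is sound, and your observation that conditioning on $\boldsymbol{h}$ freezes the Fr\'echet fluctuations of the hubs, so that Type~III (non-self-averaging) behaviour is no obstruction to \emph{conditional} concentration, is exactly the right point.

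The genuine gap sits at the step you yourself call ``the crux'' and then leave as an assertion: that every strict overlap strips a polynomial factor off the conditional variance. This is not automatic, and it is precisely where the hypothesis that the solution of \eqref{eq:maxeqtyp} is unique must enter --- your lower bound never uses uniqueness at all. Conditionally on $\boldsymbol{h}$, a pair of copies of $H$ overlapping in a vertex set $R$ with shared edge set $F$ contributes to $\Var{N\mid \boldsymbol{h}}/\Exp{N\mid\boldsymbol{h}}^2$ a term of order $\prod_{i\in R}\abs{V_{\alpha^*_i}}^{-1}\bigl(\prod_{e\in F}p_e^{-1}-1\bigr)$, so what you must show is: for every $R$ whose shared edges are not all probability-one edges, the exponent $\sum_{i\in R}\bigl(1+(1-\tau)\alpha^*_i\bigr)+\sum_{(i,j)\in F}\min\bigl(\alpha^*_i+\alpha^*_j-1,0\bigr)$ is strictly positive. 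The missing argument is a comparison move: let $\alpha'$ agree with $\alpha^*$ off $R$ and put every vertex of $R$ at the cutoff $1/(\tau-1)$. Then all vertex terms and internal edge terms of $R$ vanish and every cut edge term weakly increases, so the objective \eqref{eq:maxalph} changes by at least minus the exponent above; optimality of $\alpha^*$ forces that exponent to be $\geq 0$, and uniqueness forces it to be $>0$ unless $\alpha'=\alpha^*$, i.e.\ unless $R$ consists of hubs only --- in which case all shared edges have probability one and the covariance vanishes identically (your frozen-hub case). Without this dichotomy, Chebyshev is unjustified for overlaps sharing, say, hub--leaf or $S_1$--$S_3$ edges. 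Two secondary points: the target exponent you define through \eqref{eq:maxalph} and \eqref{eq:maxtemptyp2} works out to $\frac{3-\tau}{2}(k+B_t(H))+\frac{\tau-2}{2}k_1$, which matches the paper's figures (e.g.\ $n^{4/(\tau-1)}$ for the four-claw) and thereby exposes a typo in the theorem's displayed exponent, where $\frac{\tau-1}{2}k_1$ should read $\frac{\tau-2}{2}k_1$; and your $n^{\delta}$-wide bands only control $\log N(H)/\log n$, whereas fixed-$\varepsilon$ bands, as in the ported proof, give the stronger statement that $N(H)/n^{\gamma}$ is tight and bounded away from zero in probability, which is what ``$\propto$'' is meant to convey.
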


\begin{proof}
	The proof of Theorem~\ref{thm:expmotifstyp} as well as Lemma~\ref{lem:maxmotiftyp} follows the same lines as the proof of~\cite[Theorem 2.1 and Lemma 3.2]{hofstad2017d} for the erased configuration model instead of the hidden-variable model. That proof relies on the fact that the connection probability between vertices with degrees $D_i$ and $D_j$ in the erased configuration model has the same order of magnitude as $\min(D_iD_j/(\mu n),1)$, which is the same as the connection probability in the hidden-variable model. Thus, the proof of~\cite[Theorem 2.1]{hofstad2017d} also holds for the hidden-variable model.
\end{proof}

\section{Graphlets}\label{sec:graphlets}
We now focus on graphlet counting. Thus, we now only count the number of times the subgraph $H$ appears as an induced subgraph of a hidden-variable model. In particular, edges that are not present in $H$ are also required not to be present in the graph. Because the probability that two edges between vertices of high degree are present equals one (see~\eqref{eq:pij}), this puts a constraint on the number of vertices that typically have high degree. When we again let the degrees in the graphlet scale as $n^{\alpha_i}$, we see that the probability that an edge $(i,j)$ is present equals one as soon as $\alpha_1+\alpha_j>1$. Thus, for the expected number of graphlets, the optimization problem corresponding to~\eqref{eq:maxalph} becomes 
\begin{equation}\label{eq:maxalphgraphlet}
\begin{aligned}[b]
&\max (1-\tau)\sum_{i}\alpha_i +\sum_{(i,j)\in E_H: \alpha_i+\alpha_j<1}\alpha_i+\alpha_j-1 \\
&\text{s.t. }\alpha_i+\alpha_j \leq 1 \quad \forall (i,j)\notin E_H,
\end{aligned}
\end{equation} 
where $E_H$ denotes the edge set of $H$. Again, this optimization problem is maximized for $\alpha_i\in{0,\tfrac{1}{2},1}$, so that similarly, the optimization problem corresponding to~\eqref{eq:maxeq} including the extra constraint then becomes
\begin{equation}\label{eq:maxeqexpgraphlet}
\begin{aligned}[b]
B_{g,f}(H)=&\max_{\mathcal{P}}\left[\abs{S_1}-\abs{S_2}-\frac{2E_{S_1}+E_{S_1,S_3}}{\tau-1}\right],\\
&\text{s.t. }(u,v)\in E_H \quad \forall u\in S_2,v\in S_2\cup S_3.
\end{aligned}
\end{equation}
This optimization problem again finds the most likely degrees of vertices that together form the graphlet $H$. Vertices in $S_1$ have degrees proportional to a constant, vertices in $S_2$ have degrees proportional to $n$ and vertices in $S_3$ have degrees proportional to $\sqrt{n}$. Using this optimization problem, we are also able to find the scaling of the number of graphlets:
\begin{theorem}[General graphlets, expected]\label{thm:graphletsexp}
	Let $H$ be a graphlet on $k$ vertices such that the solution to~\eqref{eq:maxeqexpgraphlet} is unique. 
	The expected number of graphlets $N_g(H)$ in a hidden-variable model  grows asymptotically as
	\begin{equation}
	\Exp{N_g(H)}\propto n^{\frac{3-\tau}{2}(k+B_{g,f}(H))},
	\end{equation}
	and is thus fully determined by the partition $\mathcal{P}^*$ that optimizes \eqref{eq:maxeqexpgraphlet}. 
\end{theorem}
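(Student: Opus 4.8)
The plan is to mirror the proof of Theorem~\ref{thm:expmotifs}, since the objective in~\eqref{eq:maxalphgraphlet} is exactly the motif objective~\eqref{eq:maxalph} restricted to the polytope cut out by the constraints $\alpha_i+\alpha_j\le 1$ for all non-edges $(i,j)\notin E_H$. The only genuinely new ingredient is that graphlet counting requires the non-edges of $H$ to be \emph{absent}, and this enters in two linked ways. First I would fix a degree profile with $h_i\in[\varepsilon,1/\varepsilon](\mu n)^{\alpha_i}$ and compute the probability that $H$ occurs as an induced subgraph on such vertices. This probability is a product over edges, $\prod_{(i,j)\in E_H}p(h_i,h_j)$, which behaves exactly as in~\eqref{eq:phsub} (contributing $\thetao{n^{\alpha_i+\alpha_j-1}}$ when $\alpha_i+\alpha_j<1$ and a constant otherwise), times a product over non-edges, $\prod_{(i,j)\notin E_H}\bigl(1-p(h_i,h_j)\bigr)$. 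By~\eqref{eq:pij} a non-edge factor stays bounded away from zero precisely when $\alpha_i+\alpha_j\le 1$ (so that $p(h_i,h_j)$ stays bounded away from $1$), whereas if $\alpha_i+\alpha_j>1$ then $p(h_i,h_j)=1$ and the factor vanishes. Hence the induced-subgraph probability retains the same leading order as in the motif case on the feasible region, and is asymptotically negligible off it.

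Combining this with the count $\thetao{n^{k+(1-\tau)\sum_i\alpha_i}}$ of vertex sets with the prescribed degree orders from~\eqref{eq:numhdeg} reproduces~\eqref{eq:Nalph}, so that $\Exp{N_g(H,\boldsymbol{\alpha},\varepsilon)}$ is governed by the exponent in~\eqref{eq:maxalph}, now maximised over the constrained domain~\eqref{eq:maxalphgraphlet}. The next step is the graphlet analogue of Lemma~\ref{lem:maxmotif}: the constrained maximiser still satisfies $\alpha_i\in\{0,\tfrac12,1\}$. I would substitute $\beta_i=\alpha_i-\tfrac12$ and reuse the piecewise-linearity argument, noting that the added constraints are themselves hyperplanes $\beta_i+\beta_j\le 0$ and leave the objective unchanged. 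As in Step~1, I take the smallest nonzero $\abs{\beta_i}=\tilde\beta$ and move every vertex with value $+\tilde\beta$ up and every vertex with value $-\tilde\beta$ down by a common $\delta$; the objective changes linearly in $\delta$ with the coefficient appearing in~\eqref{eq:Nbeta}, which is nonzero by uniqueness, so one direction strictly improves the objective and forces $\tilde\beta=\tfrac12$, while Step~2 handles the non-unique case.

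The main obstacle is verifying that this coupled perturbation stays feasible for the new non-edge constraints, in both the increasing and decreasing directions, so that a nonzero coefficient genuinely contradicts optimality. Feasibility holds because a non-edge joining two $+\tilde\beta$ vertices, or a $+\tilde\beta$ vertex to a vertex with $\beta=0$, would already have $\beta_i+\beta_j>0$ and hence cannot occur; a non-edge between a $+\tilde\beta$ and a $-\tilde\beta$ vertex keeps $\beta_i+\beta_j=0$ under the opposite displacements; and every remaining constraint is strictly satisfied and so survives a small $\delta$. Thus both directions remain in the feasible polytope (using $\tilde\beta<\tfrac12$ for the lower bound $\beta_i\ge-\tfrac12$), the feasible maximiser lies in $\{0,\tfrac12,1\}^k$, and, as in Lemma~\ref{lem:maxmotif}, each optimal $\beta_i$ sits at the boundary or at a tight sum $\beta_i+\beta_j=0$, whence $\alpha_i\le 1$.

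Finally, writing the optimiser as the partition $S_1,S_2,S_3$ with degrees $\thetao{1}$, $\thetao{n}$ and $\thetao{\sqrt n}$, the constraints $\alpha_i+\alpha_j\le 1$ on non-edges translate exactly into the graphlet condition in~\eqref{eq:maxeqexpgraphlet}, namely that each $S_2$-vertex is adjacent in $H$ to all of $S_2\cup S_3$, since any missing such edge would force $\alpha_i+\alpha_j>1$ and kill the contribution. Performing the reduction of~\eqref{eq:maxtemp}--\eqref{eq:maxeq} verbatim---substituting $\abs{S_3}=k-\abs{S_1}-\abs{S_2}$ and extracting the factor $\tfrac{\tau-1}{2}$---the maximal exponent becomes $\tfrac{3-\tau}{2}k+\tfrac{\tau-1}{2}B_{g,f}(H)$, exactly as in Theorem~\ref{thm:expmotifs}, while integrating over the windows $[\varepsilon,1/\varepsilon]$ contributes only an $n$-independent factor $f(\varepsilon)\thetao{1}$. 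This yields the claimed growth of $\Exp{N_g(H)}$ and identifies the optimal partition $\mathcal{P}^*$.
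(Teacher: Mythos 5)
Your proof is correct and follows essentially the same route as the paper: the paper's own treatment of Theorem~\ref{thm:graphletsexp} is precisely the reduction you describe --- the non-edge factors $1-p(h_i,h_j)$ vanish identically once $\alpha_i+\alpha_j>1$ and are $\Theta(1)$ otherwise, turning \eqref{eq:maxalph} into the constrained problem \eqref{eq:maxalphgraphlet}, after which the paper simply asserts that the optimizer again lies in $\{0,\tfrac12,1\}$ and rewrites the problem as \eqref{eq:maxeqexpgraphlet}. You fill in two details the paper leaves implicit: that the perturbation argument of Lemma~\ref{lem:maxmotif} remains valid on the constrained polytope (your case analysis of which non-edges can be tight, and why the coupled displacement preserves feasibility in both directions, is right), and that feasibility of the integer optimizer translates exactly into the adjacency condition on $S_2$ in \eqref{eq:maxeqexpgraphlet}.

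One discrepancy should be flagged, though it is not a flaw in your argument: the exponent you derive, $\frac{3-\tau}{2}k+\frac{\tau-1}{2}B_{g,f}(H)$, is not the exponent $\frac{3-\tau}{2}\left(k+B_{g,f}(H)\right)$ printed in the theorem, and yours is the correct one. It is what the free-variation reduction \eqref{eq:maxtemp2} produces, where the factor extracted in front of the bracket is $\frac{\tau-1}{2}$; the factor $\frac{3-\tau}{2}$ in front of $B$ arises only in the typical-variation problem \eqref{eq:maxtemptyp2}, whose vertex classes are different. A direct check confirms this: for the claw ($k=4$, unique optimizer with the center in $S_2$ and the three leaves in $S_1$, so $B_{g,f}=2$), the expected induced count is $\Theta(n^{2-\tau})\cdot\Theta(n^{3})\cdot\Theta(1)=\Theta(n^{5-\tau})$, which equals $n^{\frac{3-\tau}{2}k+\frac{\tau-1}{2}B_{g,f}(H)}$ but not $n^{\frac{3-\tau}{2}(k+B_{g,f}(H))}=n^{9-3\tau}$. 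The printed exponent in Theorem~\ref{thm:graphletsexp} appears to be a typo carried over from the format of the typical-case result (Theorem~\ref{thm:expgraphletstyp}); your derivation proves the corrected statement, consistent with Theorems~\ref{thm:expmotifsx} and~\ref{thm:expmotifs}.
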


For the typical number of graphlets, we obtain a similar result, by adding the constraint $\alpha_i\leq\tfrac{1}{\tau-1}$ to~\eqref{eq:maxalphgraphlet}. This results in the optimization problem (like its equivalent version for motifs in~\eqref{eq:maxeqtyp})
\begin{equation}\label{eq:maxeqtypgraphlet}
\begin{aligned}[b]
B_{g,t}(H) =& \max_{\mathcal{P}}\Big[\abs{S_1}-\abs{S_2} -\frac{2E_{S_1}+E_{S_1,S_3}+E_{S_1,1}-E_{S_2,1}}{\tau-1}\Big],\\
& \text{s.t. }(u,v)\in E_H \quad \forall u\in S_2,v\in S_2\cup S_3.
\end{aligned}
\end{equation}
The corresponding theorem to Theorem~\ref{thm:expmotifstyp} for graphlets then becomes:
\begin{theorem}[General motifs, typical]\label{thm:expgraphletstyp}
	Let $H$ be a graphlet on $k$ vertices such that the solution to~\eqref{eq:maxeqtypgraphlet} is unique and let $k_1$ denote the number of vertices of degree 1 in $H$. 
	With high probability, the number of graphlets $N_g(H)$ in a hidden-variable model  grows asymptotically as
	\begin{equation}
	N_g(H) \propto n^{\frac{3-\tau}{2}(k+B_{g,t}(H))+\frac{\tau-1}{2}k_1},
	\end{equation}
	and is thus fully determined by the partition $\mathcal{P}^*$ that optimizes \eqref{eq:maxeqtypgraphlet}. 
\end{theorem}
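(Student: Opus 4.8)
The plan is to follow the same strategy as in the proof of Theorem~\ref{thm:expmotifstyp}, reducing the statement to the concentration results of~\cite{hofstad2017d} while incorporating the induced-subgraph constraint that distinguishes graphlets from motifs. First I would establish the structural optimization result, the graphlet analog of Lemma~\ref{lem:maxmotiftyp}: under the typical cutoff $\alpha_i \leq 1/(\tau-1)$ together with the induced constraint $\alpha_i + \alpha_j \leq 1$ for every non-edge $(i,j) \notin E_H$, the unique optimizer of~\eqref{eq:maxalphgraphlet} takes values in $\{0, (\tau-2)/(\tau-1), \tfrac12, 1/(\tau-1)\}$, with $\alpha_i = 0$ precisely for the degree-$1$ vertices. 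This follows from the same piecewise-linearity argument used in Lemmas~\ref{lem:maxmotif} and~\ref{lem:maxmotiftyp}: after centering the variables at the relevant break points the objective is piecewise linear, so any unique maximizer must sit at a vertex of the feasible polytope, which forces the four admissible values.

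Next I would verify that the induced constraint translates exactly into the adjacency requirement in~\eqref{eq:maxeqtypgraphlet}, namely that $(u,v) \in E_H$ for every $u \in S_2$ and $v \in S_2 \cup S_3$. The point is that for $\tau \in (2,3)$ one has $1/(\tau-1) > \tfrac12$, so for any $u \in S_2$ (with $\alpha_u = 1/(\tau-1)$) and any $v \in S_2 \cup S_3$ (with $\alpha_v \geq \tfrac12$) the exponent sum satisfies $\alpha_u + \alpha_v > 1$; by~\eqref{eq:pij} such a pair is joined with probability tending to one, so an induced copy of $H$ cannot place a non-edge there. By contrast, two $S_3$ vertices have $\alpha_u + \alpha_v = 1$, where the connection probability stays bounded away from both $0$ and $1$, so non-edges among $S_3$ vertices cost only a constant factor and remain feasible. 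This is precisely why the constraint in~\eqref{eq:maxeqtypgraphlet} involves $S_2$ but not $S_3$.

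With the optimization pinned down, the counting exponent is obtained exactly as in Theorem~\ref{thm:expmotifstyp}. The number of $k$-tuples of vertices with degrees of orders $(n^{\alpha_i})_i$ scales as $n^{k + (1-\tau)\sum_i \alpha_i}$, and the probability of realizing $H$ as an induced subgraph on such a tuple is, up to constants, the product over present edges with $\alpha_i + \alpha_j < 1$ of $n^{\alpha_i + \alpha_j - 1}$, the non-edges contributing only constant-order factors once they are feasible. Substituting the optimizer and rewriting through $B_{g,t}(H)$ and $k_1$ exactly as in the passage~\eqref{eq:maxtemptyp}--\eqref{eq:maxeqtyp} then yields the claimed exponent $\frac{3-\tau}{2}(k + B_{g,t}(H)) + \frac{\tau-1}{2} k_1$.

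The main obstacle is the concentration (``with high probability'') claim rather than the order-of-magnitude computation. Here I would invoke~\cite[Theorem~2.1 and Lemma~3.2]{hofstad2017d} as in Theorem~\ref{thm:expmotifstyp}, using that the connection probability in the erased configuration model has the same order $\min(D_iD_j/(\mu n),1)$ as in the hidden-variable model. The nontrivial point to check is that the truncated second-moment argument there survives the passage from motifs to graphlets: the induced count carries additional non-edge indicators $1 - p(h_i,h_j)$, which are bounded by $1$ and hence cannot inflate the variance, but one must confirm that they do not suppress the dominant contribution either. Since every surviving non-edge factor is of constant order at the optimizer (the only forced edges being exactly those excluded by the constraint in~\eqref{eq:maxeqtypgraphlet}), the dominant $\boldsymbol{\alpha}^*$-contribution to both the first and the second moment is unchanged up to constants, and the concentration of $N_g(H)$ around $n^{\frac{3-\tau}{2}(k+B_{g,t}(H))+\frac{\tau-1}{2}k_1}$ follows by the same argument.
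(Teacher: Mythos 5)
Your proposal is correct and follows essentially the same route as the paper: reduce to the typical motif machinery (the analogue of Lemma~\ref{lem:maxmotiftyp} via piecewise linearity, then the algebra of~\eqref{eq:maxtemptyp}--\eqref{eq:maxeqtyp}), translate the induced-subgraph requirement into the constraint $(u,v)\in E_H$ for $u\in S_2$, $v\in S_2\cup S_3$ because $\alpha_u+\alpha_v>1$ forces such edges while all other non-edges cost only constant factors, and invoke \cite[Theorem 2.1, Lemma 3.2]{hofstad2017d} for the high-probability statement. In fact your write-up is more explicit than the paper's own treatment, which handles this theorem by direct analogy with Theorem~\ref{thm:expmotifstyp}; your check that the non-edge indicators neither inflate nor suppress the dominant contribution is a detail the paper leaves implicit.
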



Figure~\ref{fig:graphlet5} shows the typical configuration of all graphlets on 5 vertices, similar to Fig.~\ref{fig:motif5} for all motifs on 5 vertices. Note that the typical configuration of a motif and a graphlet may be substantially different (for example in Fig.~\ref{fig:graphlet4types}(b) and its counterpart for motifs).

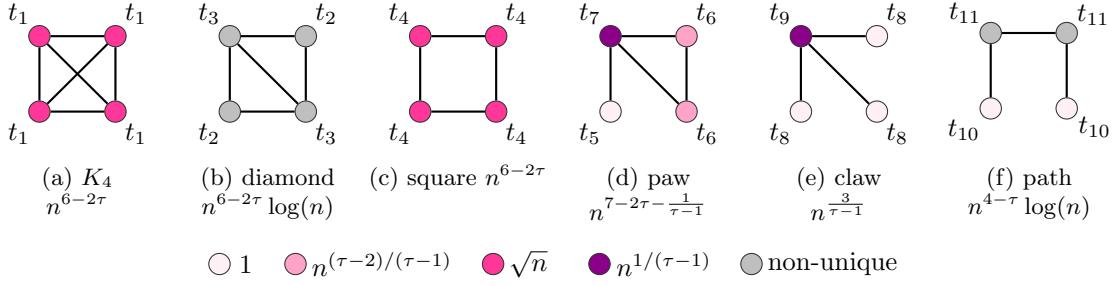
\begin{figure}[tb]
	\centering
	\captionsetup[subfigure]{justification=centering}
	\begin{subfigure}[t]{0.16\linewidth}
		\centering
		\begin{tikzpicture}
		\tikzstyle{edge} = [draw,thick,-]
		\node[S3,label={[label distance=0.05cm]230:$t_1$}] (a) at (0,0) {};
		\node[S3,label={[label distance=0.05cm]-50:$t_1$}] (b) at (1,0) {};
		\node[S3,label={[label distance=0.05cm]-230:$t_1$}] (c) at (0,1) {};
		\node[S3,label={[label distance=0.05cm]50:$t_1$}] (d) at (1,1) {};
		\draw[edge] (a)--(b);
		\draw[edge] (c)--(b);
		\draw[edge] (d)--(b);
		\draw[edge] (a)--(c);
		\draw[edge] (a)--(d);
		\draw[edge] (c)--(d);
		\end{tikzpicture}	
		\caption{$K_4$\quad \quad  $n^{6-2\tau}$}
		\label{fig:K4types}
	\end{subfigure}
	\begin{subfigure}[t]{0.16\linewidth}
		\centering
		\begin{tikzpicture}
		\tikzstyle{edge} = [draw,thick,-]
		\node[label={[label distance=0.05cm]230:$t_2$}] (a) at (0,0) {};
		\node[label={[label distance=0.05cm]-50:$t_3$}] (b) at (1,0) {};
		\node[label={[label distance=0.05cm]-230:$t_3$}] (c) at (0,1) {};
		\node[label={[label distance=0.05cm]50:$t_2$}] (d) at (1,1) {};
		\draw[edge] (a)--(b);
		\draw[edge] (c)--(b);
		\draw[edge] (d)--(b);
		\draw[edge] (a)--(c);
		\draw[edge] (c)--(d);
		\end{tikzpicture}	
		\caption{diamond $n^{6-2\tau}\log(n)$}
		\label{fig:squareextratype}
	\end{subfigure}
	\begin{subfigure}[t]{0.16\linewidth}
		\centering
		\begin{tikzpicture}
		\tikzstyle{edge} = [draw,thick,-]
		\node[S3,label={[label distance=0.05cm]230:$t_4$}] (a) at (0,0) {};
		\node[S3,label={[label distance=0.05cm]-50:$t_4$}] (b) at (1,0) {};
		\node[S3,label={[label distance=0.05cm]-230:$t_4$}] (c) at (0,1) {};
		\node[S3,label={[label distance=0.05cm]50:$t_4$}] (d) at (1,1) {};
		\draw[edge] (a)--(b);
		\draw[edge] (d)--(b);
		\draw[edge] (a)--(c);
		\draw[edge] (c)--(d);
		\end{tikzpicture}	
		\caption{square $n^{6-2\tau}$}
		\label{fig:squaretype}
	\end{subfigure}
	\begin{subfigure}[t]{0.16\linewidth}
		\centering
		\begin{tikzpicture}
		\tikzstyle{edge} = [draw,thick,-]
		\node[n1,label={[label distance=0.05cm]230:$t_5$}] (a) at (0,0) {};
		\node[S1,label={[label distance=0.05cm]-50:$t_6$}] (b) at (1,0) {};
		\node[S2m,label={[label distance=0.05cm]-230:$t_7$}] (c) at (0,1) {};
		\node[S1,label={[label distance=0.05cm]50:$t_6$}] (d) at (1,1) {};
		\draw[edge] (c)--(b);
		\draw[edge] (d)--(b);
		\draw[edge] (a)--(c);
		\draw[edge] (c)--(d);
		\end{tikzpicture}	
		\caption{paw $n^{7-2\tau-\frac{1}{\tau-1}}$}
		\label{fig:pawtype}
	\end{subfigure}
	\begin{subfigure}[t]{0.16\linewidth}
		\centering
		\begin{tikzpicture}
		\tikzstyle{edge} = [draw,thick,-]
		\node[n1,label={[label distance=0.05cm]230:$t_8$}] (a) at (0,0) {};
		\node[n1,label={[label distance=0.05cm]-50:$t_8$}] (b) at (1,0) {};
		\node[S2m,label={[label distance=0.05cm]-230:$t_9$}] (c) at (0,1) {};
		\node[n1,label={[label distance=0.05cm]50:$t_8$}] (d) at (1,1) {};
		\draw[edge] (c)--(b);
		\draw[edge] (a)--(c);
		\draw[edge] (c)--(d);
		\end{tikzpicture}	
		\caption{claw \quad\quad  $n^{\frac{3}{\tau-1}}$}
		\label{fig:wedge4type}
	\end{subfigure}
	\begin{subfigure}[t]{0.16\linewidth}
		\centering
		\begin{tikzpicture}
		\tikzstyle{edge} = [draw,thick,-]
		\node[n1,label={[label distance=0.05cm]210:$t_{10}$}] (a) at (0,0) {};
		\node[n1,label={[label distance=0.05cm]-20:$t_{10}$}] (b) at (1,0) {};
		\node[label={[label distance=0.03cm]-210:$t_{11}$}] (c) at (0,1) {};
		\node[label={[label distance=0.03cm]20:$t_{11}$}] (d) at (1,1) {};
		\draw[edge] (d)--(b);
		\draw[edge] (a)--(c);
		\draw[edge] (c)--(d);
		\end{tikzpicture}	
		\caption{path $n^{4-\tau}\log(n)$}
	\end{subfigure}
	
	\vspace{-0.5cm}
	\begin{subfigure}{\linewidth}
		\centering
		\begin{tikzpicture}
		\node[S2m,label={[label distance=0.05cm]0:$n^{1/(\tau-1)}$}] (a) at (5,0) {};
		\node[S3,label={[label distance=0cm]0:$\sqrt{n}$}] (b) at (3.6,0) {};
		\node[S1,label={[label distance=0.05cm]0:$n^{(\tau-2)/(\tau-1)}$}] (c) at (1,0) {};
		\node[n1,label={[label distance=0.05cm]0:$1$}] (c) at (0,0) {};
		\node[label={[label distance=0.05cm]0:non-unique}] (d) at (7,0) {};
		\end{tikzpicture}
	\end{subfigure}
	\vspace{-0.8cm}
	\caption{Scaling of the typical number of graphlets on 4 vertices in $n$. The vertex color indicates the typical vertex degree. The vertex labels indicate the vertex types used in Figure~\ref{fig:motifsdegree}}
	\label{fig:graphlet4types}
\end{figure}

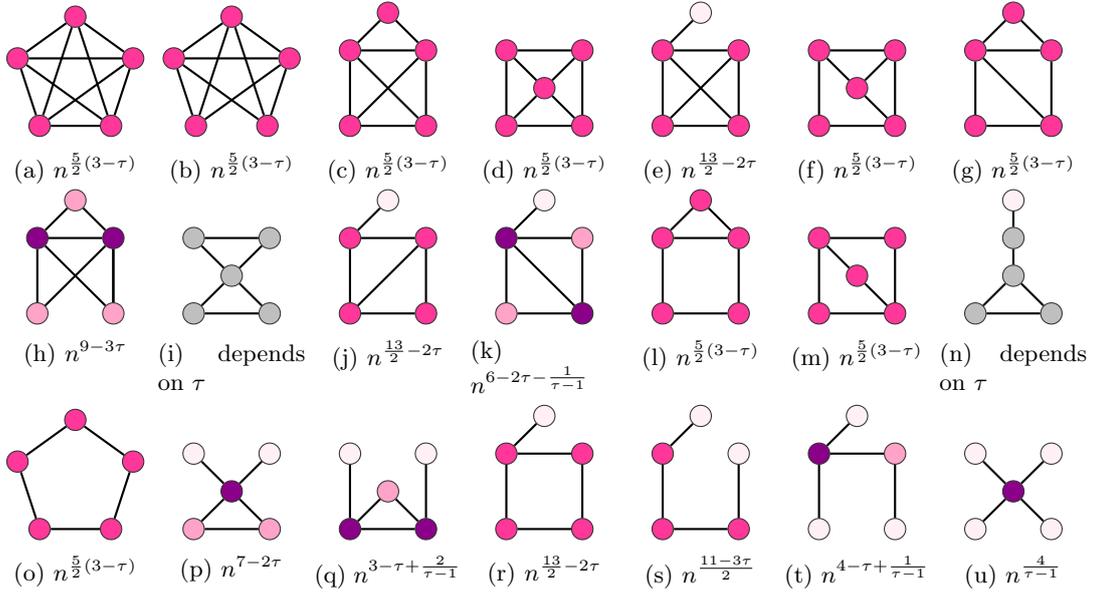
\begin{figure}[htb]
	\centering
	\begin{subfigure}[t]{0.13\linewidth}
		\centering
		\begin{tikzpicture}
		\tikzstyle{edge} = [draw,thick,-]
		\node[S3] (a) at (90:0.8) {};
		\node[S3] (b) at (162:0.8) {};
		\node[S3] (c) at (234:0.8) {};
		\node[S3] (d) at (306:0.8) {};
		\node[S3] (e) at (378:0.8) {};
		\draw[edge] (a)--(b);
		\draw[edge] (c)--(b);
		\draw[edge] (d)--(b);
		\draw[edge] (a)--(c);
		\draw[edge] (a)--(d);
		\draw[edge] (c)--(e);
		\draw[edge] (e)--(d);
		\draw[edge] (b)--(e);
		\draw[edge] (a)--(e);
		\draw[edge] (c)--(d);
		\end{tikzpicture}	
		\caption{$n^{\frac{5}{2}(3-\tau)}$}
	\end{subfigure}
	\begin{subfigure}[t]{0.13\linewidth}
		\centering
		\begin{tikzpicture}
		\tikzstyle{edge} = [draw,thick,-]
		\node[S3] (a) at (90:0.8) {};
		\node[S3] (b) at (162:0.8) {};
		\node[S3] (c) at (234:0.8) {};
		\node[S3] (d) at (306:0.8) {};
		\node[S3] (e) at (378:0.8) {};
		\draw[edge] (a)--(b);
		\draw[edge] (c)--(b);
		\draw[edge] (d)--(b);
		\draw[edge] (a)--(c);
		\draw[edge] (a)--(d);
		\draw[edge] (c)--(e);
		\draw[edge] (e)--(d);
		\draw[edge] (b)--(e);
		\draw[edge] (a)--(e);
		\end{tikzpicture}	
		\caption{$n^{\frac{5}{2}(3-\tau)}$}
	\end{subfigure}
	\begin{subfigure}[t]{0.13\linewidth}
		\centering
		\begin{tikzpicture}
		\tikzstyle{edge} = [draw,thick,-]
		\node[S3] (a) at (0,0) {};
		\node[S3] (b) at (1,0) {};
		\node[S3] (c) at (0,1) {};
		\node[S3] (d) at (1,1) {};
		\node[S3] (e) at (0.5,1.5) {};
		\draw[edge] (a)--(b);
		\draw[edge] (c)--(b);
		\draw[edge] (d)--(b);
		\draw[edge] (a)--(c);
		\draw[edge] (a)--(d);
		\draw[edge] (c)--(e);
		\draw[edge] (d)--(e);
		\draw[edge] (c)--(d);
		\end{tikzpicture}	
		\caption{$n^{\frac{5}{2}(3-\tau)}$}
	\end{subfigure}
	\begin{subfigure}[t]{0.13\linewidth}
		\centering
		\begin{tikzpicture}
		\tikzstyle{edge} = [draw,thick,-]
		\node[S3] (a) at (0,0) {};
		\node[S3] (b) at (1,0) {};
		\node[S3] (c) at (0,1) {};
		\node[S3] (d) at (1,1) {};
		\node[S3] (e) at (0.5,0.5) {};
		\draw[edge] (a)--(b);
		\draw[edge] (c)--(a);
		\draw[edge] (b)--(e);
		\draw[edge] (a)--(e);
		\draw[edge] (b)--(d);
		\draw[edge] (c)--(e);
		\draw[edge] (d)--(e);
		\draw[edge] (c)--(d);
		\end{tikzpicture}	
		\caption{$n^{\frac{5}{2}(3-\tau)}$}
	\end{subfigure}
	\begin{subfigure}[t]{0.13\linewidth}
		\centering
		\begin{tikzpicture}
		\tikzstyle{edge} = [draw,thick,-]
		\node[S3] (a) at (0,0) {};
		\node[S3] (b) at (1,0) {};
		\node[S3] (c) at (0,1) {};
		\node[S3] (d) at (1,1) {};
		\node[n1] (e) at (0.5,1.5) {};
		\draw[edge] (a)--(b);
		\draw[edge] (c)--(b);
		\draw[edge] (d)--(b);
		\draw[edge] (a)--(c);
		\draw[edge] (a)--(d);
		\draw[edge] (c)--(e);
		\draw[edge] (c)--(d);
		\end{tikzpicture}	
		\caption{$n^{\frac{13}{2}-2\tau}$}
	\end{subfigure}
	\begin{subfigure}[t]{0.13\linewidth}
		\centering
		\begin{tikzpicture}
		\tikzstyle{edge} = [draw,thick,-]
		\node[S3] (a) at (0,0) {};
		\node[S3] (b) at (1,0) {};
		\node[S3] (c) at (0,1) {};
		\node[S3] (d) at (1,1) {};
		\node[S3] (e) at (0.5,0.5) {};
		\draw[edge] (a)--(b);
		\draw[edge] (c)--(a);
		\draw[edge] (b)--(e);
		\draw[edge] (b)--(d);
		\draw[edge] (c)--(e);
		\draw[edge] (c)--(d);
		\draw[edge] (e)--(d);
		\end{tikzpicture}	
		\caption{$n^{\frac{5}{2}(3-\tau)}$}
	\end{subfigure}	
	\begin{subfigure}[t]{0.13\linewidth}
		\centering
		\begin{tikzpicture}
		\tikzstyle{edge} = [draw,thick,-]
		\node[S3] (a) at (0,0) {};
		\node[S3] (b) at (1,0) {};
		\node[S3] (c) at (0,1) {};
		\node[S3] (d) at (1,1) {};
		\node[S3] (e) at (0.5,1.5) {};
		\draw[edge] (a)--(b);
		\draw[edge] (c)--(b);
		\draw[edge] (d)--(b);
		\draw[edge] (a)--(c);
		\draw[edge] (c)--(e);
		\draw[edge] (d)--(e);
		\draw[edge] (c)--(d);
		\end{tikzpicture}	
		\caption{$n^{\frac{5}{2}(3-\tau)}$}
	\end{subfigure}
	
	\begin{subfigure}[t]{0.13\linewidth}
		\centering
		\begin{tikzpicture}
		\tikzstyle{edge} = [draw,thick,-]
		\node[S1] (a) at (0,0) {};
		\node[S1] (b) at (1,0) {};
		\node[ S2m] (c) at (0,1) {};
		\node[ S2m] (d) at (1,1) {};
		\node[S1] (e) at (0.5,1.5) {};
		\draw[edge] (d)--(b);
		\draw[edge] (c)--(b);
		\draw[edge] (d)--(b);
		\draw[edge] (a)--(c);
		\draw[edge] (a)--(d);
		\draw[edge] (c)--(e);
		\draw[edge] (d)--(e);
		\draw[edge] (c)--(d);
		\end{tikzpicture}	
		\caption{$n^{9-3\tau}$}
		\label{mot:fig:K23}
	\end{subfigure}
	\begin{subfigure}[t]{0.13\linewidth}
		\centering
		\begin{tikzpicture}
		\tikzstyle{edge} = [draw,thick,-]
		\node (a) at (0,0) {};
		\node (b) at (1,0) {};
		\node (c) at (0,1) {};
		\node (d) at (1,1) {};
		\node (e) at (0.5,0.5) {};
		\draw[edge] (a)--(b);
		\draw[edge] (e)--(a);
		\draw[edge] (b)--(e);
		\draw[edge] (e)--(d);
		\draw[edge] (c)--(e);
		\draw[edge] (c)--(d);
		\end{tikzpicture}	
		\caption{depends on $\tau$}
		\label{mot:fig:bowtie}
	\end{subfigure}
	\begin{subfigure}[t]{0.13\linewidth}
		\centering
		\begin{tikzpicture}
		\tikzstyle{edge} = [draw,thick,-]
		\node[S3] (a) at (0,0) {};
		\node[S3] (b) at (1,0) {};
		\node[S3] (c) at (0,1) {};
		\node[S3] (d) at (1,1) {};
		\node[n1] (e) at (0.5,1.5) {};
		\draw[edge] (a)--(b);
		\draw[edge] (d)--(b);
		\draw[edge] (a)--(c);
		\draw[edge] (a)--(d);
		\draw[edge] (c)--(e);
		\draw[edge] (c)--(d);
		\end{tikzpicture}	
		\caption{$n^{\frac{13}{2}-2\tau}$}
	\end{subfigure}
	\begin{subfigure}[t]{0.13\linewidth}
		\centering
		\begin{tikzpicture}
		\tikzstyle{edge} = [draw,thick,-]
		\node[S1] (a) at (0,0) {};
		\node[ S2m] (b) at (1,0) {};
		\node[ S2m] (c) at (0,1) {};
		\node[S1] (d) at (1,1) {};
		\node[n1] (e) at (0.5,1.5) {};
		\draw[edge] (a)--(b);
		\draw[edge] (d)--(b);
		\draw[edge] (a)--(c);
		\draw[edge] (b)--(c);
		\draw[edge] (c)--(e);
		\draw[edge] (c)--(d);
		\end{tikzpicture}	
		\caption{$n^{6-2\tau-\frac{1}{\tau-1}}$}
	\end{subfigure}
	\begin{subfigure}[t]{0.13\linewidth}
		\centering
		\begin{tikzpicture}
		\tikzstyle{edge} = [draw,thick,-]
		\node[S3] (a) at (0,0) {};
		\node[S3] (b) at (1,0) {};
		\node[S3] (c) at (0,1) {};
		\node[S3] (d) at (1,1) {};
		\node[S3] (e) at (0.5,1.5) {};
		\draw[edge] (a)--(b);
		\draw[edge] (d)--(b);
		\draw[edge] (a)--(c);
		\draw[edge] (e)--(d);
		\draw[edge] (c)--(e);
		\draw[edge] (c)--(d);
		\end{tikzpicture}	
		\caption{$n^{\frac{5}{2}(3-\tau)}$}
	\end{subfigure}	
	\begin{subfigure}[t]{0.13\linewidth}
		\centering
		\begin{tikzpicture}
		\tikzstyle{edge} = [draw,thick,-]
		\node[S3] (a) at (0,0) {};
		\node[S3] (b) at (1,0) {};
		\node[S3] (c) at (0,1) {};
		\node[S3] (d) at (1,1) {};
		\node[S3] (e) at (0.5,0.5) {};
		\draw[edge] (a)--(b);
		\draw[edge] (c)--(a);
		\draw[edge] (b)--(e);
		\draw[edge] (b)--(d);
		\draw[edge] (c)--(e);
		\draw[edge] (c)--(d);
		\end{tikzpicture}	
		\caption{$n^{\frac{5}{2}(3-\tau)}$}
		\label{mot:fig:m5dom}
	\end{subfigure}
	\begin{subfigure}[t]{0.13\linewidth}
		\centering
		\begin{tikzpicture}
		\tikzstyle{edge} = [draw,thick,-]
		\node (a) at (0,0) {};
		\node (b) at (1,0) {};
		\node (c) at (0.5,1) {};
		\node[n1] (d) at (0.5,1.5) {};
		\node (e) at (0.5,0.5) {};
		\draw[edge] (a)--(b);
		\draw[edge] (e)--(a);
		\draw[edge] (b)--(e);
		\draw[edge] (e)--(c);
		\draw[edge] (c)--(d);
		\end{tikzpicture}	
		\caption{depends on $\tau$}
		\label{mot:fig:dependtau}
	\end{subfigure}
	
	\begin{subfigure}[t]{0.13\linewidth}
		\centering
		\begin{tikzpicture}
		\tikzstyle{edge} = [draw,thick,-]
		\node[S3] (a) at (90:0.8) {};
		\node[S3] (b) at (162:0.8) {};
		\node[S3] (c) at (234:0.8) {};
		\node[S3] (d) at (306:0.8) {};
		\node[S3] (e) at (378:0.8) {};
		\draw[edge] (a)--(b);
		\draw[edge] (c)--(b);
		\draw[edge] (d)--(c);
		\draw[edge] (a)--(e);
		\draw[edge] (d)--(e);
		\end{tikzpicture}	
		\caption{$n^{\frac{5}{2}(3-\tau)}$}
	\end{subfigure}
	\begin{subfigure}[t]{0.13\linewidth}
		\centering
		\begin{tikzpicture}
		\tikzstyle{edge} = [draw,thick,-]
		\node[S1] (a) at (0,0) {};
		\node[S1] (b) at (1,0) {};
		\node[n1] (c) at (0,1) {};
		\node[n1] (d) at (1,1) {};
		\node[ S2m] (e) at (0.5,0.5) {};
		\draw[edge] (a)--(b);
		\draw[edge] (e)--(a);
		\draw[edge] (b)--(e);
		\draw[edge] (e)--(d);
		\draw[edge] (c)--(e);
		\end{tikzpicture}	
		\caption{$n^{7-2\tau}$}
	\end{subfigure}
	\begin{subfigure}[t]{0.13\linewidth}
		\centering
		\begin{tikzpicture}
		\tikzstyle{edge} = [draw,thick,-]
		\node[ S2m] (a) at (0,0) {};
		\node[ S2m] (b) at (1,0) {};
		\node[n1] (c) at (0,1) {};
		\node[n1] (d) at (1,1) {};
		\node[S1] (e) at (0.5,0.5) {};
		\draw[edge] (a)--(b);
		\draw[edge] (e)--(a);
		\draw[edge] (b)--(e);
		\draw[edge] (b)--(d);
		\draw[edge] (c)--(a);
		\end{tikzpicture}	
		\caption{$n^{3-\tau+\frac{2}{\tau-1}}$}
	\end{subfigure}
	\begin{subfigure}[t]{0.13\linewidth}
		\centering
		\begin{tikzpicture}
		\tikzstyle{edge} = [draw,thick,-]
		\node[S3] (a) at (0,0) {};
		\node[S3] (b) at (1,0) {};
		\node[S3] (c) at (0,1) {};
		\node[S3] (d) at (1,1) {};
		\node[n1] (e) at (0.5,1.5) {};
		\draw[edge] (a)--(b);
		\draw[edge] (d)--(b);
		\draw[edge] (a)--(c);
		\draw[edge] (c)--(e);
		\draw[edge] (c)--(d);
		\end{tikzpicture}	
		\caption{$n^{\frac{13}{2}-2\tau}$}
	\end{subfigure}
	\begin{subfigure}[t]{0.13\linewidth}
		\centering
		\begin{tikzpicture}
		\tikzstyle{edge} = [draw,thick,-]
		\node[S3] (a) at (0,0) {};
		\node[S3] (b) at (1,0) {};
		\node[S3] (c) at (0,1) {};
		\node[n1] (d) at (1,1) {};
		\node[n1] (e) at (0.5,1.5) {};
		\draw[edge] (a)--(b);
		\draw[edge] (d)--(b);
		\draw[edge] (a)--(c);
		\draw[edge] (c)--(e);
		\end{tikzpicture}	
		\caption{$n^{\frac{11-3\tau}{2}}$}
	\end{subfigure}
	\begin{subfigure}[t]{0.13\linewidth}
		\centering
		\begin{tikzpicture}
		\tikzstyle{edge} = [draw,thick,-]
		\node[n1] (a) at (0,0) {};
		\node[n1] (b) at (1,0) {};
		\node[ S2m] (c) at (0,1) {};
		\node[S1] (d) at (1,1) {};
		\node[n1] (e) at (0.5,1.5) {};
		\draw[edge] (d)--(b);
		\draw[edge] (a)--(c);
		\draw[edge] (c)--(e);
		\draw[edge] (c)--(d);
		\end{tikzpicture}	
		\caption{$n^{4-\tau+\frac{1}{\tau-1}}$}
	\end{subfigure}
	\begin{subfigure}[t]{0.13\linewidth}
		\centering
		\begin{tikzpicture}
		\tikzstyle{edge} = [draw,thick,-]
		\node[n1] (a) at (0,0) {};
		\node[n1] (b) at (1,0) {};
		\node[n1] (c) at (0,1) {};
		\node[n1] (d) at (1,1) {};
		\node[ S2m] (e) at (0.5,0.5) {};
		\draw[edge] (e)--(a);
		\draw[edge] (b)--(e);
		\draw[edge] (e)--(d);
		\draw[edge] (c)--(e);
		\end{tikzpicture}	
		\caption{$n^{\frac{4}{\tau-1}}$}
	\end{subfigure}
	\caption{Typical number of graphlets for all connected graphs on 5 vertices (constants ignored). The vertex color indicates the dominant vertex degree as in Figure~\ref{fig:motif5}.}
	\label{fig:graphlet5}
\end{figure}

\section{Proof of Theorem~\ref{thm:selfav}}\label{sec:selfavsqrt}
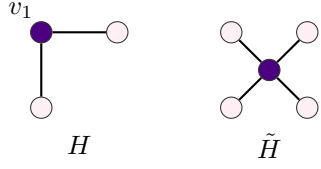
\begin{figure}[tb]
	\centering
	\begin{tikzpicture}
	\tikzstyle{edge} = [draw,thick,-]
	\node[n1] (a) at (0,0) {};
	\node[n1] (b) at (1,1) {};
	\node[S2,label={[label distance=0.05cm]-230:$v_1$}] (c) at (0,1) {};
	\draw[edge] (a)--(c);
	\draw[edge] (c)--(b);
	
	\node[draw=none,rectangle,fill=none] at (0.5,-0.5) {$H$};
	
	\begin{scope}[xshift=2.5cm]
	\node[n1] (a) at (0,0) {};
	\node[n1] (b) at (0,1) {};
	\node[n1] (d) at (1,0) {};
	\node[n1] (e) at (1,1) {};
	\node[S2] (c) at (0.5,0.5) {};
	\draw[edge] (a)--(c);
	\draw[edge] (c)--(b);
	\draw[edge] (d)--(c);
	\draw[edge] (c)--(e);
	
	\node[draw=none,rectangle,fill=none] at (0.5,-0.5) {$\tilde{H}$};
	\end{scope}
	\end{tikzpicture}	
	\caption{Construction of $\tilde{H}$ when $H$ is a path of length 3. $\tilde{H}$ is constructed by merging two copies of $H$ at vertex $v_1$ with most likely degree $n$. }
	\label{fig:3wedgeselfavg}
\end{figure}
Let $H$ be a motif on $k$ vertices such that the dominant motif structure of Theorem~\ref{thm:expmotifsx} contains non-$\sqrt{n}$ vertices. A similar analysis as the triangle example shows that the only motif on 2 vertices is non-self-averaging. Thus, we may assume that $k\geq 3$. Let the expected number of motifs of type $H$ as predicted in Theorem~\ref{thm:expmotifsx} be denoted by $n^{f(H)}$, and the optimal contribution from vertices $1,\dots, k$ in $H$ by $(n^{\alpha_i})_{i\in[k]}$. By Theorem~\ref{thm:expmotifsx},
\begin{equation}
\begin{aligned}[b]
\Exp{N(H)}& \propto \Exp{\# \text{ vertices having weights }(n^{\alpha_i})_{i\in[k]}} \Prob{H\text{ present on weights }(n^{\alpha_{i}})_{i\in[k]}}
\end{aligned}
\end{equation}
Suppose there is a vertex $v_1$ in $H$ with an optimal contribution of weight $n$ vertices, that is $\alpha_1=1$. Then, we study the contribution to the variance in~\eqref{eq:varnh} from the motif $\tilde{H}$ which is the motif on $2k-1$ vertices where two copies of motif $H$ are merged at their vertex $v_1$ (see Fig.~\ref{fig:3wedgeselfavg} for an example). We now investigate the expected number of $\tilde{H}$ motifs. In particular, we study the contribution to the expected number of $\tilde{H}$ motifs on vertices with weights of the order of magnitude $(n^{\beta_i})_{i\in[2k-1]}$ with
\begin{equation}\label{eq:alphbet}
\beta_i=\alpha_{t(i)},
\end{equation}
where $t(i)$ is the vertex in $H$ corresponding to vertex $i$ in $\tilde{H}$. Figure~\ref{fig:3wedgeselfavg} gives an example of this contribution when $H$ is a path with 3 vertices. Because of the construction of $\tilde{H}$,
\begin{equation}\label{eq:ptildeH}
\begin{aligned}[b]
&\Prob{\tilde{H}\text{ present on weights }(n^{\beta_i})_{i\in[2k-1]}} =\Prob{H\text{ present on weights }(n^{\alpha_i})_{i\in[k]}}^2.
\end{aligned}
\end{equation}
Furthermore, the only difference between the vertices in $\tilde{H}$ and two separate versions of $H$, is that $\tilde{H}$ contains one less vertex of weight proportional to $n$. The expected number of vertices with weight proportional to $n$ is given by $n^{2-\tau}$ Therefore,
\begin{equation}
\begin{aligned}[b]
&\Exp{\# \text{ vertices on weights }(n^{\beta_i})_{i\in[2k-1]}} \propto \frac{\Exp{\# \text{ vertices on weights }(n^{\alpha_i})_{i\in[k]}}^2 }{n^{2-\tau}}.
\end{aligned}
\end{equation}
Thus,
\begin{equation}
\begin{aligned}[b]
\expec[N(\tilde{H})]& \geq \Prob{H\text{ present on weights }(n^{\alpha_i})_{i\in[k]}}^2 \frac{\Exp{\# \text{ vertices on weights }(n^{\alpha_i})_{i\in[k]}}^2 }{n^{2-\tau}}\\
& \propto \Exp{N(H)}^2 n^{\tau-2}.
\end{aligned}
\end{equation}
Combining this with~\eqref{eq:varnh} of the main text results in
\begin{equation}
\frac{\Var{N(H)}}{\Exp{N(H)}^2}\geq \frac{\expec[N(\tilde{H})]}{\Exp{N(H)}^2}\geq n^{\tau-2},
\end{equation}
which diverges, because $\tau\in(2,3)$. Thus, if the optimal contribution to $H$ satisfies $S_2\neq\emptyset$, $H$ cannot be self-averaging for $\tau\in(2,3)$.

Now we study the case where $H$ has optimal contribution with $S_1\neq \emptyset$, but no vertices of weights proportional to $n$ so that $S_2=\emptyset$. Let $v\in S_1$. The contribution from $v$ to~\eqref{eq:maxeqsup} is
\begin{equation}
1-\frac{2d_{v,S_1}+d_{v,S_3}}{\tau-1},
\end{equation}
where $d_{v,S_i}$ denotes the number of edges from $v$ to vertices in set $S_i$. Moving $v$ to $S_3$ would change the contribution to
\begin{equation}
-\frac{d_{v,S_1}}{\tau-1}.
\end{equation}
Because we know that $v$ in $S_1$ is the optimal contribution, 
\begin{equation}
-\frac{d_{v,S_1}}{\tau-1}<1-\frac{2d_{v,S_1}+d_{v,S_3}}{\tau-1},
\end{equation}
or $d_{v,S_1}+d_{v,S_3}\leq \tau-1$, so that $d_{v,S_1}+d_{v,S_3}\in\{0,1\}$. Thus, every vertex in $S_1$ has at most 1 edge to other vertices in $S_1$ or vertices in $S_3$. Since we have assumed that $S_2=\emptyset$, and $k>2$, this means that all vertices in $S_1$ have degree 1 inside the motif, and are connected to a vertex in $S_3$. W.l.o.g. assume that $v_1$ is a vertex such that $v_1\in S_3$ and $v_1$ has at least one connection to a vertex in $S_1$. As in the previous proof, we consider $\tilde{H}$ constructed by merging two copies of $H$ at $v_1$, as illustrated in Figure~\ref{fig:degoneselfavg}. Define $(\alpha_i)_{i\in[k]}$ as the maximal contribution to $H$. We define
\begin{equation}
\beta_i=\begin{cases}
\alpha_{t(i)} & t(i)\neq 1,\\
1 & t(i)=1.
\end{cases}
\end{equation}
That is, we study the contribution where all vertices in $\tilde{H}$ except $v_1$ have the same weight as in their counterpart in $H$. The weight of $v_1$ is proportional to $n$, whereas the counterpart of $v_1$ in $H$ had weight proportional to $\sqrt{n}$, as in the illustration in Fig.~\ref{fig:degoneselfavg}. We again study the contribution to $\Exp{N(\tilde{H})}$ from vertices of weights $(n^{\beta_i})_{i\in[2k-1]}$. We now compare the probability that $\tilde{H}$ exists on vertices of weights $(n^{\beta_i})_{i\in[2k-1]}$ to the probability that two copies of $H$ exist on weights $(n^{\alpha_i})_{i\in[k]}$. The difference between these two probabilities is that vertex $v_1$ in $\tilde{H}$ has weight $n$ instead of $\sqrt{n}$ in $H$. In $H$, $v_1$ is connected to at least one vertex of weight proportional to 1. The probability of this connection to be present is proportional to $n^{-1/2}$. In $\tilde{H}$, $v_1$ has weight $n$, so that the probability that the corresponding connections occur in $\tilde{H}$ is proportional to 1. The connection probabilities on vertices not connected to $v_1$ do not change, so that 	
\begin{equation}
\begin{aligned}[b]
& \Prob{\tilde{H}\text{ present on weights }(n^{\beta_i})_{i\in[2k-1]}}\geq \frac{\Prob{H\text{ present on weights }(n^{\alpha_i})_{i\in[k]}}^2}{n^{-1}}.
\end{aligned}
\end{equation}

\begin{figure}[tb]
	
	\centering
	\begin{tikzpicture}
	\tikzstyle{edge} = [draw,thick,-]
	\node[S3] (a) at (0,0) {};
	\node[S3] (b) at (1,0) {};
	\node[S3] (c) at (0,1) {};
	\node[S3,label={[label distance=0.05cm]50:$v_1$}] (d) at (1,1) {};
	\node[n1] (e) at (0.5,1.5) {};
	\draw[edge] (a)--(b);
	\draw[edge] (c)--(b);
	\draw[edge] (d)--(b);
	\draw[edge] (a)--(c);
	\draw[edge] (a)--(d);
	\draw[edge] (d)--(e);
	\draw[edge] (c)--(d);
	
	\node[draw=none,rectangle,fill=none] at (0.5,-0.5) {$H$};
	
	\begin{scope}[xshift=2.5cm]
	\tikzstyle{edge} = [draw,thick,-]
	\node[S3] (a) at (0,0) {};
	\node[S3] (b) at (1,0) {};
	\node[S3] (c) at (0.5,1) {};
	\node[S2] (d) at (1.5,1) {};
	\node[n1] (e) at (1,1.5) {};
	\node[S3] (a2) at (2,0) {};
	\node[S3] (b2) at (3,0) {};
	\node[S3] (c2) at (2.5,1) {};
	\node[n1] (e2) at (2,1.5) {};
	\draw[edge] (a)--(b);
	\draw[edge] (c)--(b);
	\draw[edge] (d)--(b);
	\draw[edge] (a)--(c);
	\draw[edge] (a)--(d);
	\draw[edge] (d)--(e);
	\draw[edge] (c)--(d);
	
	\draw[edge] (a2)--(b2);
	\draw[edge] (c2)--(b2);
	\draw[edge] (d)--(b2);
	\draw[edge] (a2)--(c2);
	\draw[edge] (a2)--(d);
	\draw[edge] (d)--(e2);
	\draw[edge] (c2)--(d);
	
	\node[draw=none,rectangle,fill=none] at (1.5,-0.5) {$\tilde{H}$};
	\end{scope}
	\end{tikzpicture}	
	\caption{Example of the construction of $\tilde{H}$. $\tilde{H}$ is constructed by merging two copies of $H$ at vertex $v_1$: a vertex with most likely degree $\sqrt{n}$ that is connected to a vertex of degree 1. }
	\label{fig:degoneselfavg}
\end{figure}
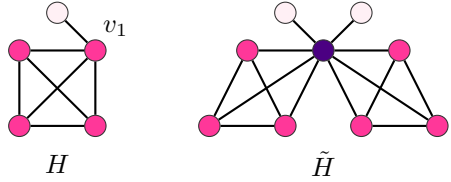

Since the difference between the vertices of two copies of $H$ and $\tilde{H}$ is that we remove two vertices of weight $\sqrt{n}$ and add one vertex of weight proportional to $n$, we obtain
\begin{equation}
\begin{aligned}[b]
& \Exp{\# \text{ vertices on weights }(n^{\beta_i})_{i\in[2k-1]}}  \propto \frac{\Exp{\# \text{ vertices on weights }(n^{\alpha_i})_{i\in[k]}}^2n^{2-\tau} }{n^{3-\tau}},
\end{aligned}
\end{equation}
where we have used that the number of vertices with degree proportional to $\sqrt{n}$ is $n^{(3-\tau)/2}$ by~\eqref{eq:numhdeg} and the number of vertices with degree proportional to $n$ scales as $n^{2-\tau}$. 
This results in 
\begin{equation}
\begin{aligned}[b]
\expec[N(\tilde{H})]& \geq \Prob{H\text{ present on degrees }(n^{\alpha_i})_{i\in[k]}}^2 \geq  \Exp{\# \text{ vertices on degrees }(n^{\alpha_i})_{i\in[k]}}^2\\
& \propto \Exp{N(H)}^2 
\end{aligned}
\end{equation}
so that by~\eqref{eq:varnh}
\begin{equation}
\frac{\Var{N(H)}}{\Exp{N(H)}^2}\geq \frac{\expec[N(\tilde{H})]}{\Exp{N(H)}^2}\propto 1,
\end{equation}
which does not converge to zero, so that the motif is not self-averaging.

\subsection*{Graphlet fluctuations}
The fluctuations of the number of graphlets can be studied similarly as the motif fluctuations. Again,~\eqref{eq:varnh} holds, but now it only includes graphlets $H_1,\ldots,H_l$ that can be constructed by merging two copies of $H$ at one or more vertices, with the additional constraint that after merging the copies of $H$, both copies still form induced subgraphs of $H$. As an example, consider 2-path fluctuations. The subgraphs that can be constructed from merging two 2-paths are the subgraphs in Fig.~\ref{fig:graphlet5}s, \ref{fig:graphlet5}t, \ref{fig:graphlet5}u, \ref{fig:graphlet4types}c, \ref{fig:graphlet4types}d, \ref{fig:graphlet4types}e, \ref{fig:graphlet4types}f, Fig. \ref{fig:graphlet4}a and~\ref{fig:graphlet4}b. However, the merged subgraph of Fig.~\ref{fig:graphlet4types}d and Fig.~\ref{fig:graphlet4types}b does not contain two induced copies of graphlets, and therefore these subgraphs are excluded from equation~\eqref{eq:varnh}. Other than that, the procedure to determine for any graphlet whether it is self-averaging or not is the same as the procedure for motifs, using Theorem~\ref{thm:graphletsexp} to find the expected order of magnitude of the number of merged graphlets.

\section{Optimal diamond motif}
We now study the optimizers of~\eqref{eq:maxeqx} for the diamond motif (Fig.~\ref{fig:graphlet4}b) for all $\tau\in(2,3)$. Let $i$ and $j$ be the vertices at the diagonal of the diamond, and $k$ and $s$ the corner vertices. 
By~\cite{stegehuiscode2019},~\eqref{eq:maxeqx} is maximized by  $S_3=\{i,j,k,s\}$ or $S_1=\{k,s\}$ and $S_2=\{i,j\}$ and by Lemma~\ref{lem:maxmotif} these are two optimizers of~\eqref{eq:maxeqx}, yielding a contribution of $2(1-\tau)$. 

By the argument in Lemma~\ref{lem:maxmotif}, for any vertex $u$, the optimizer of~\eqref{eq:maxeqx} is either given by $\alpha_u=0$ or constrained by $\alpha_u+\alpha_v=1$ for at least one edge $(u,v)\in E_H$. The above two optimizers show that 
for all vertices in the diamond motif, there is at least one such constrained edge. Therefore, $\alpha_i+\alpha_j+\alpha_k+\alpha_s=2$ must hold.


Then, to find all optimizers, we solve
\begin{align}
(1-\tau)(\alpha_i+\alpha_j+\alpha_k+\alpha_s)& +\ind{\alpha_i+\alpha_j<1}\left(\alpha_i+\alpha_j-1\right)+\ind{\alpha_i+\alpha_k<1}\left(\alpha_i+\alpha_k-1\right)\nonumber\\
& +\ind{\alpha_i+\alpha_s<1}\left(\alpha_i+\alpha_s-1\right) +\ind{\alpha_j+\alpha_s<1}\left(\alpha_j+\alpha_s-1\right)\nonumber\\
& +\ind{\alpha_j+\alpha_k<1}\left(\alpha_j+\alpha_k-1\right)=2(1-\tau).
\end{align} 
Using that $\alpha_i+\alpha_j+\alpha_k+\alpha_s=2$ this reduces to
\begin{align}
\alpha_i+\alpha_j+\alpha_k&+\alpha_s=2,\quad \alpha_i+\alpha_j\geq 1,\quad \alpha_i+\alpha_k\geq 1,\quad \alpha_i+\alpha_s\geq 1,\nonumber\\
&\alpha_j+\alpha_s\geq 1,\quad \alpha_j+\alpha_k\geq 1.
\end{align}
The constraints $ \alpha_i+\alpha_k\geq 1$, $\alpha_j+\alpha_s\geq 1$ together with $\alpha_i+\alpha_j+\alpha_k+\alpha_s=2$ yield $ \alpha_i+\alpha_k= 1$ and $\alpha_j+\alpha_s =1$. Similarly, $ \alpha_i+\alpha_s= 1$ and $\alpha_k+\alpha_j =1$, resulting in $\alpha_i=\alpha_j=:\beta$ and $\alpha_k=\alpha_s=1-\beta$. Furthermore, the constraint $\alpha_i+\alpha_j\geq 1$ yields $\beta\in[1/2,1]$.

\section{Graphlet counts}
\begin{table}[htbp]
	\centering
	\begin{tabular}{rrrrrr}
		\toprule
		\textbf{} & \textbf{Gowalla} & \textbf{Oregon} & \textbf{Enron} & \textbf{PGP} & \textbf{HEP} \\
		\midrule
		\textbf{claw} & 5.96$\cdot 10^{14}$ & 3.13$\cdot 10^{14}$  & 4.48$\cdot 10^{14}$  & 4.04$\cdot 10^{6}$ & 1.23$\cdot 10^{6}$ \\
		\textbf{path} & 1.52$\cdot 10^{14}$  & 7.03$\cdot 10^{7}$  & 1.37$\cdot 10^{14}$  & 2.72$\cdot 10^{6}$ & 2.12$\cdot 10^{6}$\\
		\textbf{paw} & 3.11$\cdot 10^{14}$ & 2.01$\cdot 10^{7}$ & 3.76$\cdot 10^{8}$  & 1.96$\cdot 10^{6}$ & 5.31$\cdot 10^{5}$ \\
		\textbf{diamond} & 8.60$\cdot 10^{7}$ & 9.61$\cdot 10^{5}$ & 2.25$\cdot 10^{7}$ & 2.74$\cdot 10^{5}$ & 3.55$\cdot 10^{4}$ \\
		\textbf{square} & 4.24$\cdot 10^{7}$ & 7.46$\cdot 10^{4}$ & 6.76$\cdot 10^{6}$ & 2.16$\cdot 10^{4}$ & 6.84$\cdot 10^{3}$ \\
		\textbf{K4} & 6.09$\cdot 10^{6}$ & 3.05$\cdot 10^{4}$ & 2.34$\cdot 10^{6}$& 2.39$\cdot 10^{5}$ & 6.56$\cdot 10^{4}$ \\
		\bottomrule
	\end{tabular}%
	\caption{Graphlet counts in the five data sets.}
	\label{tab:counts}%
\end{table}%

\end{document}